\documentclass[12pt,letterpaper,leqno,doublespacing]{article}
\pdfoutput=1

\usepackage{setspace}
\usepackage{footmisc}

\usepackage[labelfont={bf,small},textfont={bf,small},format=hang]{caption}
\makeatletter
\renewcommand\@seccntformat[1]{\csname the#1\endcsname.\quad}
\makeatother

\renewcommand{\thesubsection}{\Alph{subsection}}
\renewcommand{\thesubsubsection}{\Alph{subsection}.\arabic{subsubsection}}

\usepackage{graphicx,epstopdf,amsmath,amsfonts,amssymb,amsthm,versionPO}
\usepackage{marginnote,datetime,enumitem, rotating,fancyvrb}

\usepackage{float}

\usepackage[dvipsnames]{xcolor}

\usepackage[hidelinks]{hyperref}

\usepackage[longnamesfirst]{natbib}
\usdate

\usepackage{booktabs}
\usepackage{pdflscape}
\usepackage{chngcntr}
\usepackage{longtable}
\usepackage{subfig}
\usepackage{adjustbox}
\usepackage{multirow}
\usepackage{bm}

\usepackage{tabularx}
\usepackage{threeparttable}
\usepackage{tabu}
\usepackage{makecell}
\usepackage{siunitx}
\renewcommand{\arraystretch}{0.8}

\excludeversion{notes}
\includeversion{links}

\iflinks{}{\hypersetup{draft=true}}

\ifnotes{%
	\usepackage[margin=1in, top=1.5in, bottom=1.5in]{geometry}%
}{%
	\usepackage[margin=1in]{geometry}%
	\usepackage[disable]{todonotes}%
}

\makeatletter\let\chapter\@undefined\makeatother

\newtheorem{proposition}{Proposition}
\newtheorem{theorem}{Theorem}
\newtheorem{lemma}[theorem]{Lemma}
\newtheorem{example}[theorem]{Example}
\newtheorem{remark}[theorem]{Remark}

\begin{document}

\setlist{noitemsep}
	\onehalfspacing

	\title{{\large\bf
    When and Why Na\"{\i}ve Diversification Works: \\ A Simple Diagnostic Strategy}\thanks{Email addresses: Feng, fenghan@amss.ac.cn; Huang, difang@amss.ac.cn; Wang, wjue@amss.ac.cn; Zhang, zjz@stat.wisc.edu. The authors thank the comments from seminar participants at the 45th International Symposium on Forecasting and the6th International Symposium on Interval Data Modelling: Theory and Applications. Financial support from the National Natural Science Foundation of China is gratefully acknowledged by Difang Huang (72503232, 72574227, T2293771), Jue Wang (72271229, 71988101), and Zhengjun Zhang (71991471, 72442027). Jue Wang also acknowledges support from the National Social Science Foundation of China (22VRC056).}} 

	\author{{\bf Han Feng} \\
        Academy of Mathematics and Systems Science, Chinese Academy of Sciences \\
        School of Economics and Management, University of Chinese Academy of Sciences \\
        AMSS Center for Forecasting Science, Chinese Academy of Sciences \\
        \vspace{0.05in} \\
		{\bf Difang Huang} \\
        Academy of Mathematics and Systems Science, Chinese Academy of Sciences \\
        School of Economics and Management, University of Chinese Academy of Sciences \\
        AMSS Center for Forecasting Science, Chinese Academy of Sciences \\
        \vspace{0.05in} \\
        {\bf Jue Wang} \\
        Academy of Mathematics and Systems Science, Chinese Academy of Sciences \\
        School of Economics and Management, University of Chinese Academy of Sciences \\
        AMSS Center for Forecasting Science, Chinese Academy of Sciences \\
        \vspace{0.05in} \\
        {\bf Zhengjun Zhang} \\
        School of Data Science and AI, Beijing University of Chinese Medicine \\
        AMSS Center for Forecasting Science, Chinese Academy of Sciences\\
        Department of Statistics, University of Wisconsin
        \vspace{0.05in}}

    \date{}
	\maketitle
	\thispagestyle{empty}

    \clearpage
    \thispagestyle{empty}

    \begin{center}
    {\large\bf When and Why Na\"{\i}ve  Diversification Works: \\
    A Simple Diagnostic Strategy}
    \end{center}
    \vspace{0.25in}

    \begin{abstract} 
    We explain the long-standing puzzle of na\"{\i}ve diversification with a simple, testable condition: equal weighting is minimum-variance optimal when the forecast-error covariance matrix has a uniform eigenstructure. This ``Golden Criterion'' drives a two-stage adaptive strategy that dynamically blends naive and optimized weights based on the empirical distance from this condition. Applied to U.S. equity premium forecasting, the method delivers consistent out-of-sample gains in forecast accuracy, utility, and Sharpe ratios. Diversity-driven shrinkage dominates at short horizons, while optimized weights regain their edge at longer horizons, offering clear horizon-dependent guidance for portfolio construction.

    \end{abstract}

	\medskip 

	\textit{Keywords}: Na\"{\i}ve diversification; Forecast combination; Asset allocation; Adaptive shrinkage estimation; Equity premium predictability; Minimum-variance portfolio

    \textit{JEL} Code: C58, C53, G11, G12. 

	\clearpage

    \clearpage
	\pagenumbering{arabic}
    \pagestyle{plain}

	\doublespacing

\clearpage

\section{Introduction}

A central challenge in empirical finance is how to combine forecasts from competing predictive models. When multiple models generate return predictions, the forecaster must choose combination weights, and this choice involves a fundamental tension. Estimated optimal weights exploit cross-model differences in accuracy and correlation but are notoriously unstable: small perturbations in the estimated error covariance matrix can produce large swings in the implied allocation \citep{smith2009simple, claeskens2016forecast}. Equal (``na\"{\i}ve'') weights avoid this instability entirely but ignore all information in the error covariance structure. The empirical evidence that equal weighting often outperforms optimized alternatives, documented by \citet{demiguel2009optimal} for portfolio allocation and by \citet{stock2004combination} and \citet{genre2013combining} for macroeconomic forecasting, has motivated a large literature seeking to understand when and why simplicity performs well \citep{smith2009simple,claeskens2016forecast,yuan2024naive}. Yet the answer to the prior question, what structural property of the forecast errors determines whether equal weighting or optimization is the better strategy, has remained an open question.

Two lines of research bear on this question. The first derives conditions under which equal weighting is exactly or approximately optimal.\footnote{{\citet{yuan2024naive} proves this in a 1-factor model with diversifiable risk, as dimensionality increases.}} The second line develops strategies that blend optimized and na\"{\i}ve weights.\footnote{{\citet{tu2011markowitz} demonstrate that linearly combining the Markowitz mean--variance portfolio with the $1/N$ rule yields superior out-of-sample performance; \citet{blanc2020bias} derive a shrinkage procedure for forecast combinations; \citet{roccazzella2022optimal} propose constrained optimization with shrinkage of covariance matrix; \citet{yuan2024naive} explore conditions where 1/N rule can be beaten by combining it with estimated rules in terms of Sharpe ratio; and \citet{li2025dominate} introduce a method that dominates the historical average.}} These combination strategies deliver practical gains, but they do not explain what feature of the covariance environment governs the optimal degree of shrinkage toward equal weights. Can this feature be identified, tested empirically, and exploited to construct an adaptive combination rule?

This paper shows that the answer lies in the eigenstructure of the forecast error covariance matrix. We offer an explanation for a long-standing puzzle in portfolio management: the persistent empirical success of na\"{\i}ve diversification. We call this explanation the \textit{``Golden Criterion''}: the minimum-variance optimal combination coincides exactly with equal weighting when the error covariance matrix admits the all-ones vector as an eigenvector. While the criterion is algebraically straightforward, it is missed in the literature, and as such, formally defining it becomes necessary, even though we may consider it theoretically self-evident. Its contribution lies in the three roles it plays within our analysis. First, it unifies the disparate optimality conditions in the prior literature under a single, model-free, empirically testable covariance property. Second, it extends to a two-stage setting that is new to the literature: when the first-stage (train period) covariance matrix admits an all-ones vector as an eigenvector, the first-stage optimal weights equal equal weights, and the adaptive shrinkage path collapses to equal weights for any second-stage (validation period) covariance matrix. The reverse also theoretically holds, and both properties are formally established as necessary and sufficient empirical conditions. Third, and most importantly for practice, the criterion serves as the diagnostic core of our two-stage adaptive shrinkage method, governing when and how aggressively to tilt toward equal weights. Empirically, we identify a broad set of real-data covariance environments in which the Golden Criterion is approximately satisfied, concentrated at short forecast horizons where estimation noise dominates persistent signals, confirming that the condition has genuine empirical content and is not merely a theoretical possibility.

Building on this structural characterization, we propose two-stage adaptive shrinkage weights (2S--ASWs, hereafter ``adaptive shrinkage'' for conciseness) that combine estimated optimal weights and equal weights through a data-driven shrinkage parameter ($\lambda_{opt}$). Unlike the fixed combination of \citet{tu2011markowitz}, $\lambda_{opt}$ is determined endogenously by minimizing the combined forecast's mean squared error through the accuracy--diversity decomposition \citep{krogh1994neural, brown2005diversity,thomson2019combining}. The resulting objective is quadratic in the shrinkage parameter and admits a closed-form solution. When the Golden Criterion is approximately satisfied, $\lambda_{opt}$ approaches unity and the method prescribes near-complete shrinkage toward equal weights. As the covariance structure departs from this condition at longer horizons, $\lambda_{opt}$ declines and the method recovers more of the optimized allocation. The method thus nests pure equal weighting and pure optimal weighting as boundary cases and generalizes existing combination approaches \citep{aiolfi2006persistence, tu2011markowitz,blanc2020bias} while directly addressing the instability that undermines many optimized portfolios \citep{bekaert2002dating, karavias2023structural}.

We apply this approach to U.S. equity premium forecasting using 13 macroeconomic and financial predictors from the \citet{goyal2024comprehensive} dataset over 1996--2019. We benchmark 2S--ASWs against a comprehensive set of alternatives: pure optimal weights, equal weights, Lasso, Ridge, and Elastic Net, PCA regressions, and recent combination methods including \citet{rapach2010out} (median, trimmed mean, DMSPE), \citet{blanc2020bias}, and \citet{li2025dominate}. We document three main results. First, 2S--ASWs generate positive out-of-sample $R^2$ at the one-month horizon (monthly horizon index $J=1$) across all sample-splitting specifications, reaching as high as $5.41\%$, even when both pure optimal weights and equal weights individually produce negative $R^2$ at $J=1$. That the combination outperforms both constituents illustrates the economic content of the accuracy--diversity trade-off. At $J=3$ and $J=6$, several benchmarks are competitive, and no single method dominates, whereas at $J=12$ and $J=24$, bias-corrected and recency-weighted optimal weights (\citet{blanc2020bias}, OWs($\bm{\Sigma}_2$)) regain the lead. Second, a mean--variance investor using 2S--ASW forecasts realizes annualized utility gains exceeding 200 basis points relative to the historical mean benchmark, with Sharpe ratios up to~1.27. These improvements are statistically significant under the \citet{clark2007approximately} MSFE-adjusted test and the \citet{harvey1998tests} encompassing test. Third, the optimal shrinkage parameter $\lambda_{opt}$ clusters near unity at short horizons and declines as the horizon lengthens, mirrored by the convergence and subsequent divergence of the largest eigenvalue and the normalized covariance sum implied by the Golden Criterion. Furthermore, $\lambda_{opt}$, which equals or closely approaches 1, is frequently observed in real-world datasets, confirming that the Golden Criterion is not a theoretical idealization but a condition that real covariance structures can approximate. Updating the predictors to December~2024, 2S--ASWs continue to outperform at $J=1$, though overall out-of-sample gains attenuate, consistent with \citet{denk2024predicting}. In a reversed-sample exercise, 2S--ASWs exhibit consistent improvement, especially at $J=1$, while Ridge regression shows pronounced improvement at $J=24$, indicating that the two-stage adaptive approach enhances even existing regularization techniques at long horizons.

This paper connects and contributes to several areas of literature. First, we contribute to the forecast combination literature. Since the seminal work of \citet{bates1969combination,clemen1989combining}, researchers have wrestled with the ``forecast combination puzzle'' that a simple equal-weighted average of forecasts frequently outperforms sophisticated, estimated weighting schemes \citep{stock2004combination}. While recent advances have sought to address this by introducing shrinkage techniques to forecast combinations \citep{blanc2020bias, roccazzella2022optimal}, they often rely on numerical cross-validation that obscures the underlying economic mechanism. We build on this literature by demonstrating that the eigenstructure of the error covariance matrix contains directly exploitable, structural information about the optimal combination strategy. Rather than treating shrinkage as a purely statistical regularization tool, we leverage the accuracy--diversity decomposition to derive a principled, closed-form shrinkage parameter. This approach explicitly provides a way to balance individual predictive precision and forecast diversity, allowing the optimal weighting scheme to emerge naturally from the geometry of the forecast errors.

Second, we extend the literature on portfolio theory and asset allocation. A central paradox in modern portfolio choice is that while mean-variance optimization is theoretically sound \citep{markowitz1952modern}, the na\"{\i}ve $1/N$ heuristic often delivers superior out-of-sample performance due to pervasive estimation noise \citep{demiguel2009optimal}. This led to a search for robust ``plug-in'' estimators, such as the Bayesian shrinkage of \citet{jorion1986bayes} or the norm-constrained portfolios of \citet{brodie2009sparse}. To reconcile these paradigms, \citet{tu2011markowitz} proposed linearly combining the na\"{\i}ve and sophisticated portfolios. We extend this combination approach by moving beyond static, ex-ante fixed allocations. We provide a data-adaptive rule whose shrinkage intensity responds endogenously to the prevailing covariance environment. By linking the shrinkage mechanism to the empirical distance from our established Golden Criterion, the framework allows investors to dynamically transition between the robust na\"{\i}ve prior and the optimized allocation as market conditions and signal-to-noise ratios evolve across investment horizons.

Finally, our work provides a formal structural grounding for the behavioral finance literature concerning decision heuristics. The foundational literature on bounded rationality and heuristics \citep{simon1955behavioral, tversky1974judgment} highlights that individuals frequently resort to simplifying rules to manage heavy cognitive loads. Under uncertainty, individuals and practitioners exhibit a well-documented preference for equal-weight allocations, driven by the equiprobability bias and the cognitive appeal of simplicity and fairness \citep{messick1993, benartzi2001, kahneman2011, Cappelen2016}. While traditional finance often regards this $1/N$ heuristic as a behavioral bias that sacrifices theoretical efficiency, we provide a decision-theoretic interpretation for its persistence. Our Golden Criterion identifies the exact covariance environments in which this intuitive behavioral impulse is, in fact, statistically optimal in a minimum-variance sense. Furthermore, our adaptive shrinkage methodology allows us to systematically quantify the economic cost of departing from this heuristic as a function of the forecast horizon, linking behavioral intuition with portfolio theory.

The remainder of the paper proceeds as follows. Section~\ref{sec: method} develops the methodology, derives the optimal shrinkage parameter, and presents the economic interpretation linking the largest eigenvalue, market mode, and the Golden Criterion. Section~\ref{sec: data} describes the data and presents summary statistics. Section~\ref{sec: empirical} presents in-sample and out-of-sample results, including asset allocation and economic evaluation. Section~\ref{sec: Disc} discusses connections to high-dimensional portfolio optimization, behavioral foundations, and horizon-dependent strategy selection. Section~\ref{sec: conclusion} concludes.

\section{Methodology}\label{sec: method}

This study employs the Goyal-Welch (GW) predictors \citep{goyal2024comprehensive} across multiple distinct modeling approaches: predictive regressions using individual GW predictors, forecast combinations using optimal weights (OWs), equal weights (EWs), and adaptive shrinkage weights (2S--ASWs), classical shrinkage methods in the literature such as Lasso, Ridge, and Elastic Net, recently proposed combination benchmark models such as those in \cite{rapach2010out} (Median, Trimmed Mean, DMSPE), \cite{blanc2020bias} and ASWs--\cite{li2025dominate}, and principal component regressions utilizing the first (PCA(1)) and first three (PCA(3)) principal components extracted from the GW predictor set.

\subsection{Forecast Combination Framework}

Following \cite{lin2018forecasting}, we begin with the standard predictive regression framework for equity premium forecasting. For each predictor $i$, we estimate the univariate regression:
\begin{equation}\label{eq:individual regression}
r_{t+1,t+J} = \alpha_{i,J} + \beta_{i,J} x_{it} + \epsilon_{i|t+1,t+J},
\end{equation}
where $r_{t+1,t+J}=(r_{t+1}+\cdots+r_{t+J})/J$ denotes the annualized $J$-month-ahead excess return on the S\&P 500 index relative to the 30-day Treasury bill rate in months $t+1$ through $t+J$, $x_{it}$ represents the $i$th predictor variable at time $t$ ($i=1,2,...,N$; $t=1,2,...,T$), and $\epsilon_{i|t+1,t+J}$ is an error term with zero mean. The individual forecasts are generated as:
\begin{equation}\label{eq:individual regression forecast}
\hat{r}_{t+1,t+J|t,i} = \hat{\alpha}_{i,J} + \hat{\beta}_{i,J} x_{it},
\end{equation}
where $\hat{\alpha}_{i,J}$ and $\hat{\beta}_{i,J}$ denote the OLS coefficients estimated from equation (\ref{eq:individual regression}).

The forecast combination approach aggregates these individual predictions to exploit the potential benefits of model diversification. The combined forecast is constructed as a weighted average of individual forecasts:
\begin{equation}\label{eq:r_hat comb}
\hat{r}_{comb,t+1,t+J|t} = \sum_{i=1}^{N} w_i \hat{r}_{t+1,t+J|t,i},
\end{equation}
where the weights $w_i$ sum to unity. This linear combination framework provides a flexible mechanism for incorporating information from multiple predictors while maintaining computational tractability.

We implement two benchmark weighting schemes that represent polar cases in the forecast combination literature. Equal weights (EWs) assign $w_i = 1/N$ to each predictor, reflecting a model-agnostic approach that treats all predictors symmetrically. This strategy proves particularly robust when individual model performance is highly uncertain or when models exhibit similar predictive accuracy. Optimal weights (OWs) are derived by minimizing the mean squared forecast error (MSFE):
\begin{equation}\label{eq:OWs}
\begin{aligned}
\min_{w} \quad & \frac{1}{T}\sum_{t=1}^{T}(\hat{r}_{comb,t+1,t+J|t}-r_{t+1,t+J})^2 \\
\text{s.t.} \quad & \mathbf{w}^T \mathbf{1} = 1
\end{aligned}
\end{equation}
where we allow negative weights to accommodate potential hedging strategies against systematically biased individual forecasts. The optimal weighting scheme theoretically minimizes forecast error by exploiting cross-model correlations and relative accuracy differences, though it may suffer from estimation uncertainty in finite samples.

\subsection{Two-stage Adaptive Shrinkage Combination Methodology: Reconciling Heuristics and Optimality}\label{sec:shrinkage method}

The adaptive shrinkage combination methodology represents a principled approach to tackling the bias-variance trade-off inherent in forecast combination, which addresses the well-documented instability of optimal weights by linearly interpolating between OWs and EWs using a data-driven shrinkage parameter $\lambda \in [0,1]$:
\begin{equation}\label{eq:SWs}
\hat{\mathbf{w}}^\lambda = \frac{\lambda}{N}\mathbf{1} + (1-\lambda)\hat{\mathbf{w}}^{ow},
\end{equation}
where $\hat{\mathbf{w}}^{ow}$ represents the optimal weights from equation (\ref{eq:OWs}), $\mathbf{1}$ is an $N \times 1$ vector of ones, and $\frac{1}{N}\mathbf{1}$ corresponds to equal weights. This convex combination ensures that adaptive shrinkage weights inherit the desirable properties of both benchmark schemes while mitigating their respective limitations.

The optimal shrinkage intensity is determined through the accuracy-diversity decomposition, which provides analytical insights into the sources of combination forecast performance. For individual forecast $f_{ih}$ (the $i$th forecast at observation $h$) and true value $y_h$, the combined forecast MSFE decomposes as:
\begin{equation}\label{eq:decompose MSFEcomb}
\begin{aligned}
\text{MSFE}_{comb} &= \frac{1}{H}\sum_{h=1}^H\left(\sum_{i=1}^{N} w_i f_{ih} - y_h\right)^2 \\
&= \sum_{i=1}^{N} w_i \text{MSFE}_i - \sum_{i=1}^{N-1} \sum_{j>i}^N w_i w_j \text{Div}_{ij},
\end{aligned}
\end{equation}
where $\text{MSFE}_i$ denotes the mean squared forecast error of the $i$th individual forecast, and $\text{Div}_{ij}$ represents the pairwise diversity between forecasts $i$ and $j$, measured as the mean squared difference between the two forecast series. This decomposition reveals that combination forecast accuracy depends on both the weighted average of individual accuracies and the diversity-weighted interactions between models.

For computational efficiency and analytical tractability, we express this decomposition in matrix notation as:
\begin{equation}\label{eq:MSFEcomb}
\text{MSFE}_{comb} = \mathbf{w}^T\mathbf{S} - \frac{1}{2}\mathbf{w}^T\mathbf{D}\mathbf{w},
\end{equation}
where $\mathbf{S}$ is an $N \times 1$ vector of individual MSFEs and $\mathbf{D}$ is an $N \times N$ diversity matrix with elements $D_{ij} = \text{Div}_{ij}$ for $i \neq j$ and $D_{ii} = 0$.

Substituting equation (\ref{eq:SWs}) into (\ref{eq:MSFEcomb}) yields a quadratic function in $\lambda$:
\begin{equation}\label{eq:plug MSEFcomb}
\begin{aligned}
\text{MSFE}_{comb}(\lambda) &= \left(\frac{\lambda}{N}\mathbf{1} + (1-\lambda)\hat{\mathbf{w}}^{ow}\right)^T\mathbf{S} \\
&\quad - \frac{1}{2}\left(\frac{\lambda}{N}\mathbf{1} + (1-\lambda)\hat{\mathbf{w}}^{ow}\right)^T\mathbf{D}\left(\frac{\lambda}{N}\mathbf{1} + (1-\lambda)\hat{\mathbf{w}}^{ow}\right).
\end{aligned}
\end{equation}

Let $a=\frac{1}{N}\mathbf{1}-\hat{\mathbf{w}}^{ow}$. Since $\mathbf{D}$ is a squared-distance diversity matrix and $a^T\mathbf{1}=0$, we have $a^T\mathbf{D}a\leq 0$. When $a^T\mathbf{D}a<0$, the objective is strictly convex along the shrinkage path, and the first-order condition yields a unique stationary point, the unconstrained optimal shrinkage parameter:
\begin{equation}\label{eq:min lambda}
\lambda^* = \frac{\left(\frac{1}{N}\mathbf{1}^T - (\hat{\mathbf{w}}^{ow})^T\right)\left(\mathbf{S} - \mathbf{D}\hat{\mathbf{w}}^{ow}\right)}{\left(\frac{1}{N}\mathbf{1}^T - (\hat{\mathbf{w}}^{ow})^T\right)\mathbf{D}\left(\frac{1}{N}\mathbf{1} - \hat{\mathbf{w}}^{ow}\right)}.
\end{equation}

To ensure economic interpretability and numerical stability, we constrain the shrinkage parameter to the unit interval:
\begin{equation}\label{eq:restrict opt lambda}
\lambda_{opt} = \max\left\{0, \min\left\{1, \lambda^*\right\}\right\}.
\end{equation}

To elucidate the theoretical foundations of the adaptive shrinkage combination methodology, we examine its optimization structure through a stylized two-stage framework. This analysis provides deeper insights into the conditions under which adaptive shrinkage weights preserve optimality properties. Consider $N$ time series with error covariance matrices $\mathbf{\Sigma}_1$ and $\mathbf{\Sigma}_2$ for two consecutive periods, both assumed symmetric and positive definite.

The first-stage optimization problem determines optimal weights under the initial covariance structure:
\begin{align}
& \underset{\mathbf{w}}{\text{min}} \quad \mathbf{w}^T \mathbf{\Sigma}_1 \mathbf{w} \notag \\
& \text{s.t.} \quad \mathbf{w}^T \mathbf{1} = 1,
\label{eq: 1st optimization}
\end{align}
yielding the solution $\hat{\mathbf{w}}^{ow} = \frac{\mathbf{\Sigma_1^{-1}\mathbf{1}}}{\mathbf{1}^{T}\mathbf{\Sigma_1}^{-1} \mathbf{1}}$.

The second-stage optimization problem addresses the adaptive shrinkage combination under the updated covariance structure:
\begin{align}
& \underset{\lambda \in [0,1]}{\text{min}} \quad (\hat{\mathbf{w}}^{\lambda})^T \mathbf{\Sigma}_2 \hat{\mathbf{w}}^{\lambda} \notag \\
& \text{s.t.} \quad (\hat{\mathbf{w}}^{\lambda})^T \mathbf{1} = 1.
\label{eq: 2nd optimization}
\end{align}
The induced set of feasible combination weights is
\begin{equation*}
\mathcal{S}_{2S-ASW}=\left\{\hat{\mathbf w}^{\lambda}=\lambda\mathbf w^{ew}+(1-\lambda)\hat{\mathbf w}^{ow}\mid \lambda\in [0,1] \right\}.
\end{equation*}

Intuitively, when the risk structure of assets is perfectly aligned, so that a uniform exposure to all assets does not increase portfolio variance, the equal-weight strategy ($\lambda=1$ in equation (\ref{eq:SWs})) coincides with the optimal solution. This alignment mathematically corresponds to the error covariance matrix having an all-ones eigenvector, which we term the Golden Criterion.

We establish three key propositions that characterize the relationship between optimal and adaptive shrinkage weights under different assumptions:

\begin{proposition}
$\hat{\mathbf{w}}^{\lambda}=\hat{\mathbf{w}}^{ow} = \mathbf{w}^{ew}$ for any $\lambda\in [0,1]$ if and only if $\mathbf{1}$ is an eigenvector of $\mathbf{\Sigma}_1$ (or equivalently of $\mathbf{\Sigma}_1^{-1}$).
\label{proposition 1}
\end{proposition}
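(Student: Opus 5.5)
The argument reduces everything to the single equality $\hat{\mathbf{w}}^{ow}=\mathbf{w}^{ew}$, using the closed form $\hat{\mathbf{w}}^{ow}=\mathbf{\Sigma}_1^{-1}\mathbf{1}/(\mathbf{1}^T\mathbf{\Sigma}_1^{-1}\mathbf{1})$ from the first-stage problem (\ref{eq: 1st optimization}). Since $\hat{\mathbf{w}}^{\lambda}=\lambda\mathbf{w}^{ew}+(1-\lambda)\hat{\mathbf{w}}^{ow}$, the statement ``$\hat{\mathbf{w}}^{\lambda}=\hat{\mathbf{w}}^{ow}=\mathbf{w}^{ew}$ for all $\lambda\in[0,1]$'' is equivalent to $\hat{\mathbf{w}}^{ow}=\mathbf{w}^{ew}$. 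One direction is immediate: take $\lambda=0$. For the other, if $\hat{\mathbf{w}}^{ow}=\mathbf{w}^{ew}$, then every convex combination of two equal vectors is that vector. So it suffices to show that $\hat{\mathbf{w}}^{ow}=\frac{1}{N}\mathbf{1}$ holds if and only if $\mathbf{1}$ is an eigenvector of $\mathbf{\Sigma}_1$.

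For sufficiency, suppose $\mathbf{\Sigma}_1\mathbf{1}=\mu\mathbf{1}$. Positive definiteness gives $\mu=\mathbf{1}^T\mathbf{\Sigma}_1\mathbf{1}/N>0$, so $\mathbf{\Sigma}_1^{-1}\mathbf{1}=\mu^{-1}\mathbf{1}$. Then $\mathbf{1}^T\mathbf{\Sigma}_1^{-1}\mathbf{1}=N/\mu$, and the closed form gives $\hat{\mathbf{w}}^{ow}=(\mu^{-1}\mathbf{1})/(N\mu^{-1})=\frac{1}{N}\mathbf{1}$.

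For necessity, suppose $\hat{\mathbf{w}}^{ow}=\frac1N\mathbf{1}$ and set $c=\mathbf{1}^T\mathbf{\Sigma}_1^{-1}\mathbf{1}$, which is strictly positive because $\mathbf{\Sigma}_1^{-1}$ is positive definite. The closed form then reads $\mathbf{\Sigma}_1^{-1}\mathbf{1}=\frac{c}{N}\mathbf{1}$, so $\mathbf{1}$ is an eigenvector of $\mathbf{\Sigma}_1^{-1}$ with eigenvalue $c/N$. Multiplying by $\mathbf{\Sigma}_1$ gives $\mathbf{\Sigma}_1\mathbf{1}=\frac{N}{c}\mathbf{1}$, so $\mathbf{1}$ is also an eigenvector of $\mathbf{\Sigma}_1$.

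The parenthetical equivalence needs no separate argument. For invertible $\mathbf{\Sigma}_1$, $\mathbf{\Sigma}_1v=\mu v$ holds if and only if $\mathbf{\Sigma}_1^{-1}v=\mu^{-1}v$, and $\mu\neq0$ by positive definiteness.

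There is no substantive obstacle. The only point needing care is the bookkeeping of nonzero scalars, namely $\mu$ and $c$. Positive definiteness of $\mathbf{\Sigma}_1$ guarantees both are positive, so the inversions and the normalization in the closed form are legitimate. A reader might worry that the minimum-variance solution could be non-unique, but strict convexity of $\mathbf{w}^T\mathbf{\Sigma}_1\mathbf{w}$ on the affine constraint set rules this out. This uniqueness is what makes the closed form the unique minimizer and so justifies identifying $\hat{\mathbf{w}}^{ow}$ with it.
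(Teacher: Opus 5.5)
Your proof is correct and follows essentially the same route as the paper. Both arguments reduce the claim to $\hat{\mathbf{w}}^{ow}=\mathbf{w}^{ew}$ via the closed form; sufficiency uses $\mathbf{\Sigma}_1^{-1}\mathbf{1}=\delta_1^{-1}\mathbf{1}$ and $m=N/\delta_1$, and necessity reads off $\mathbf{\Sigma}_1^{-1}\mathbf{1}=\frac{m}{N}\mathbf{1}$. Your explicit treatment of the reduction over $\lambda$ and of the positivity of the scalars is slightly more careful than the paper's, but it is the same argument.
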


Proposition \ref{proposition 1} provides a theoretically necessary and sufficient condition for the structure of matrix $\mathbf{\Sigma}_1^{-1}(\mathbf{\Sigma}_1)$ under a special case where the optimal weights estimated from problem~(\ref{eq: 1st optimization}) are exactly equal to equal weights. Under this scenario, the adaptive shrinkage weights will always be $\hat{\mathbf{w}}^{ow}$ or $\mathbf{w}^{ew}$ for all $\lambda\in [0,1]$.

Proposition~\ref{proposition 1} gives a necessary and sufficient condition under which the first-stage optimal weights coincide with equal weights. In this case, the shrinkage path collapses to $\mathbf{w}^{ew}$ for all $\lambda\in[0,1]$.

The following two propositions are stated under the premise that $\hat{\mathbf{w}}^{ow}\neq \mathbf{w}^{ew}$.

\begin{proposition}
Under covariance stationarity ($\mathbf{\Sigma}_1 = \mathbf{\Sigma}_2 = \mathbf{\Sigma}$), the second-stage optimum is attained at $\lambda_{opt}=0$, and therefore adaptive shrinkage weights coincide with optimal weights: $\hat{\mathbf{w}}^{\lambda} = \hat{\mathbf{w}}^{ow}$.
\label{proposition 3}
\end{proposition}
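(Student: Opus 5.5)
The plan is to show directly that $\lambda=0$ minimizes $g(\lambda)=(\hat{\mathbf w}^{\lambda})^T\mathbf\Sigma\hat{\mathbf w}^{\lambda}$ over $[0,1]$. The key input is that $\hat{\mathbf w}^{ow}$ solves problem~(\ref{eq: 1st optimization}) globally for the same matrix $\mathbf\Sigma$. The constraint in (\ref{eq: 2nd optimization}) holds automatically: since $\mathbf 1^T\mathbf w^{ew}=\mathbf 1^T\hat{\mathbf w}^{ow}=1$, every point of $\mathcal S_{2S-ASW}$ satisfies $(\hat{\mathbf w}^{\lambda})^T\mathbf 1=1$. Hence $\mathcal S_{2S-ASW}$ is a subset of the feasible set of (\ref{eq: 1st optimization}). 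Because $\hat{\mathbf w}^{ow}$ is the minimizer there and lies in $\mathcal S_{2S-ASW}$ (at $\lambda=0$), it is also a minimizer over the subset.

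To get uniqueness, and hence the identification $\lambda_{opt}=0$, I will expand $g$ explicitly. Write $\mathbf a=\mathbf w^{ew}-\hat{\mathbf w}^{ow}$, so that $\hat{\mathbf w}^{\lambda}=\hat{\mathbf w}^{ow}+\lambda\mathbf a$ and
\begin{equation*}
g(\lambda)=(\hat{\mathbf w}^{ow})^T\mathbf\Sigma\hat{\mathbf w}^{ow}+2\lambda\,\mathbf a^T\mathbf\Sigma\hat{\mathbf w}^{ow}+\lambda^2\,\mathbf a^T\mathbf\Sigma\mathbf a .
\end{equation*}
The linear coefficient vanishes. Substituting the closed form $\hat{\mathbf w}^{ow}=\mathbf\Sigma^{-1}\mathbf 1/(\mathbf 1^T\mathbf\Sigma^{-1}\mathbf 1)$ gives $\mathbf\Sigma\hat{\mathbf w}^{ow}=\mathbf 1/(\mathbf 1^T\mathbf\Sigma^{-1}\mathbf 1)$, and therefore $\mathbf a^T\mathbf\Sigma\hat{\mathbf w}^{ow}=\mathbf a^T\mathbf 1/(\mathbf 1^T\mathbf\Sigma^{-1}\mathbf 1)=0$, since $\mathbf a^T\mathbf 1=1-1=0$. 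This is the Lagrange first-order condition made explicit.

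The result is $g(\lambda)=g(0)+\lambda^2\,\mathbf a^T\mathbf\Sigma\mathbf a$. Under the standing premise $\hat{\mathbf w}^{ow}\neq\mathbf w^{ew}$ we have $\mathbf a\neq\mathbf 0$, and positive definiteness of $\mathbf\Sigma$ then gives $\mathbf a^T\mathbf\Sigma\mathbf a>0$. So $g$ is strictly increasing on $[0,1]$ and $\lambda=0$ is its unique minimizer. The same conclusion follows from the unconstrained stationary point $\lambda^*=-\mathbf a^T\mathbf\Sigma\hat{\mathbf w}^{ow}/\mathbf a^T\mathbf\Sigma\mathbf a=0$: it already lies in $[0,1]$, so clipping as in (\ref{eq:restrict opt lambda}) leaves $\lambda_{opt}=0$, and hence $\hat{\mathbf w}^{\lambda_{opt}}=\hat{\mathbf w}^{ow}$.

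The calculation itself is routine. The step needing care is the role of the premise $\hat{\mathbf w}^{ow}\neq\mathbf w^{ew}$. Without it, $\mathbf a=\mathbf 0$, $g$ is constant, and $\lambda_{opt}$ is not uniquely defined; that is the degenerate case already covered by Proposition~\ref{proposition 1}. I will therefore invoke the premise explicitly at the point where strict positivity of $\mathbf a^T\mathbf\Sigma\mathbf a$ is used. I will also remark that the conclusion is stated for the population objective (\ref{eq: 2nd optimization}) with a common $\mathbf\Sigma$, not for the sample diversity-based $\lambda^*$ of (\ref{eq:min lambda}).
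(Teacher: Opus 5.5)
Your proof is correct and follows essentially the same route as the paper: both reduce the second-stage objective to a quadratic in $\lambda$, show that the cross term $(\hat{\mathbf{w}}^{ow})^T\mathbf{\Sigma}(\mathbf{w}^{ew}-\hat{\mathbf{w}}^{ow})$ vanishes because $\mathbf{\Sigma}\hat{\mathbf{w}}^{ow}\propto\mathbf{1}$ and the weight difference sums to zero, and use $\hat{\mathbf{w}}^{ow}\neq\mathbf{w}^{ew}$ together with positive definiteness to make the quadratic coefficient strictly positive. Your completed-square form $g(\lambda)=g(0)+\lambda^2\,\mathbf{a}^T\mathbf{\Sigma}\mathbf{a}$ is a slightly cleaner packaging of the paper's first- and second-derivative argument.
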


Proposition \ref{proposition 3} provides theoretical validation for the adaptive shrinkage methodology under stationary conditions. When the underlying covariance structure remains constant across periods, the optimal shrinkage parameter equals zero, indicating that no shrinkage toward equal weights is necessary. This result aligns with the intuition that optimal weights should remain optimal when the data-generating process is stable, thereby confirming the methodology's theoretical coherence.

\begin{proposition}
Under covariance non-stationarity ($\mathbf{\Sigma}_1 \neq \mathbf{\Sigma}_2$), the unconstrained optimal shrinkage parameter satisfies $\lambda^*=0$ if and only if
\begin{equation}
\mathbf{1}^T\mathbf{\Sigma}_1^{-1}\mathbf{\Sigma}_2
\left(\mathbf{I}-\frac{N}{m}\mathbf{\Sigma}_1^{-1}\right)\mathbf{1}=0,
\end{equation}
where $m=\mathbf{1}^T\mathbf{\Sigma}_1^{-1}\mathbf{1}$.
\label{proposition 4}
\end{proposition}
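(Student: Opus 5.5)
The plan is to write the second-stage objective as a quadratic in $\lambda$ and read off its stationary point. Set $\mathbf{a} = \mathbf{w}^{ew} - \hat{\mathbf{w}}^{ow}$. Then $\hat{\mathbf{w}}^{\lambda} = \hat{\mathbf{w}}^{ow} + \lambda \mathbf{a}$, and $\mathbf{a}^T\mathbf{1}=0$, so the constraint in problem~(\ref{eq: 2nd optimization}) holds automatically. The objective expands as
\begin{equation*}
g(\lambda) = (\hat{\mathbf{w}}^{ow})^T\mathbf{\Sigma}_2\hat{\mathbf{w}}^{ow} + 2\lambda\, \mathbf{a}^T\mathbf{\Sigma}_2\hat{\mathbf{w}}^{ow} + \lambda^2\, \mathbf{a}^T\mathbf{\Sigma}_2\mathbf{a}.
\end{equation*}
Because $\mathbf{\Sigma}_2$ is positive definite and $\mathbf{a}\neq \mathbf{0}$ (we are under the premise $\hat{\mathbf{w}}^{ow}\neq\mathbf{w}^{ew}$), the leading coefficient is strictly positive. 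The unique stationary point is therefore
\begin{equation*}
\lambda^* = -\frac{\mathbf{a}^T\mathbf{\Sigma}_2\hat{\mathbf{w}}^{ow}}{\mathbf{a}^T\mathbf{\Sigma}_2\mathbf{a}}.
\end{equation*}
It follows that $\lambda^*=0$ if and only if $\mathbf{a}^T\mathbf{\Sigma}_2\hat{\mathbf{w}}^{ow}=0$, and the rest of the proof reduces to rewriting this scalar condition.

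Substitute $\hat{\mathbf{w}}^{ow} = \mathbf{\Sigma}_1^{-1}\mathbf{1}/m$ with $m=\mathbf{1}^T\mathbf{\Sigma}_1^{-1}\mathbf{1}>0$. Then
\begin{equation*}
\mathbf{a} = \frac{1}{N}\mathbf{1} - \frac{1}{m}\mathbf{\Sigma}_1^{-1}\mathbf{1} = \frac{1}{N}\left(\mathbf{I}-\frac{N}{m}\mathbf{\Sigma}_1^{-1}\right)\mathbf{1},
\end{equation*}
and so
\begin{equation*}
\mathbf{a}^T\mathbf{\Sigma}_2\hat{\mathbf{w}}^{ow} = \frac{1}{Nm}\,\mathbf{1}^T\left(\mathbf{I}-\frac{N}{m}\mathbf{\Sigma}_1^{-1}\right)\mathbf{\Sigma}_2\mathbf{\Sigma}_1^{-1}\mathbf{1}.
\end{equation*}
This is a scalar, so it equals its own transpose. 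Transposing, and using the symmetry of $\mathbf{\Sigma}_1^{-1}$ and $\mathbf{\Sigma}_2$, gives
\begin{equation*}
\frac{1}{Nm}\,\mathbf{1}^T\mathbf{\Sigma}_1^{-1}\mathbf{\Sigma}_2\left(\mathbf{I}-\frac{N}{m}\mathbf{\Sigma}_1^{-1}\right)\mathbf{1}.
\end{equation*}
The prefactor $1/(Nm)$ is nonzero, so $\lambda^*=0$ holds exactly when the displayed condition in Proposition~\ref{proposition 4} holds. Since every step is an equivalence, this proves both directions.

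The only delicate point is confirming that the denominator $\mathbf{a}^T\mathbf{\Sigma}_2\mathbf{a}$ is nonzero, so that $\lambda^*$ is well defined and the ``if and only if'' carries no hidden degenerate case. This follows from the premise $\hat{\mathbf{w}}^{ow}\neq\mathbf{w}^{ew}$, which by Proposition~\ref{proposition 1} is the same as $\mathbf{1}$ not being an eigenvector of $\mathbf{\Sigma}_1$, combined with positive definiteness of $\mathbf{\Sigma}_2$.

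As a consistency check, setting $\mathbf{\Sigma}_2=\mathbf{\Sigma}_1$ turns the condition into $\mathbf{1}^T(\mathbf{I}-\frac{N}{m}\mathbf{\Sigma}_1^{-1})\mathbf{1} = N - N = 0$, which recovers Proposition~\ref{proposition 3}. The hypothesis $\mathbf{\Sigma}_1\neq\mathbf{\Sigma}_2$ is never used in the argument; it only indicates the regime the proposition is meant for.
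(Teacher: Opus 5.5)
Your proposal is correct and follows essentially the same route as the paper. Both arguments take the stationary point of the second-stage quadratic $(\hat{\mathbf{w}}^{\lambda})^T\mathbf{\Sigma}_2\hat{\mathbf{w}}^{\lambda}$, reduce $\lambda^*=0$ to the vanishing of the numerator $(\hat{\mathbf{w}}^{ow})^T\mathbf{\Sigma}_2(\mathbf{w}^{ew}-\hat{\mathbf{w}}^{ow})$, and substitute $\hat{\mathbf{w}}^{ow}=\mathbf{\Sigma}_1^{-1}\mathbf{1}/m$ to factor that numerator into the stated condition; you write the scalar as $\mathbf{a}^T\mathbf{\Sigma}_2\hat{\mathbf{w}}^{ow}$ and transpose, whereas the paper expands it directly. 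Your explicit check that $\mathbf{a}^T\mathbf{\Sigma}_2\mathbf{a}>0$, so that the equivalence has no degenerate case, is a small piece of care the paper leaves implicit by borrowing the formula from the stationary case.
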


Proposition \ref{proposition 4} reveals the economically significant insight of our theoretical analysis. Even when the forecasting environment undergoes structural changes, reflected in different covariance matrices, the optimal combination weights can remain unchanged under specific algebraic conditions. This finding has profound implications for forecast combination practice, suggesting that certain types of regime changes may not necessitate weight adjustments. The condition essentially requires that the transformation between pre- and post-break covariance structures preserves the relative ranking and relationships embedded in the optimal weight vector. This result provides a theoretical foundation for understanding when forecast combinations exhibit robustness to structural breaks. The detailed proofs of these propositions, along with illustrative examples demonstrating non-trivial cases where Propositions \ref{proposition 3} and \ref{proposition 4} hold, are provided in the Appendix.

\subsection{Economic Interpretation}

The adaptive shrinkage combination methodology addresses the fundamental accuracy-diversity trade-off in forecast combination. The parameter $\lambda$ governs the relative emphasis on equal weights versus optimal weights, reflecting the competing forces of model accuracy and forecast diversity. When $\lambda > 0.5$, the combination favors equal weights, prioritizing diversity over individual model accuracy; conversely, when $\lambda < 0.5$, optimal weights dominate, emphasizing accuracy at the potential cost of reduced diversity.

This framework provides a systematic solution to well-documented limitations of pure optimal weighting schemes, which suffer from substantial estimation uncertainty and overfitting in finite samples \citep{stock2004combination}. While optimal weights minimize in-sample MSFE by construction, they often exhibit poor out-of-sample performance due to high estimation variance. Equal weights, though potentially biased, provide robust out-of-sample performance with zero estimation variance. The shrinkage approach balances these competing considerations through the data-driven selection of $\lambda$.

The methodology is also consistent with models of information processing and expectation formation from behavioral finance. The two-stage estimation procedure--first computing optimal weights, then determining the shrinkage parameter--parallels the gradual information incorporation documented in behavioral finance literature and sticky information models \citep{mankiw2002sticky,coibion2015information}. This structure accommodates the slow adjustment of beliefs and delayed price responses that characterize actual market dynamics, providing a behavioral foundation for the observed benefits of shrinkage in forecast combination. The resulting weights can be interpreted as reflecting the optimal degree of conservatism when updating forecasting strategies in response to new information.

\subsection{Largest Eigenvalue, Market Mode, and the Golden Criterion}

When the covariance matrix $\mathbf{\Sigma}$ admits the all-ones vector as an eigenvector with eigenvalue $\delta$, the minimum-variance portfolio coincides with equal weighting regardless of the magnitude of $\delta$. In this exact case, $\delta=\frac{\mathbf{1}^{T}\mathbf{\Sigma}\mathbf{1}}{\mathbf{1}^{T}\mathbf{1}}$. More generally, $\frac{\mathbf{1}^{T}\mathbf{\Sigma}\mathbf{1}}{\mathbf{1}^{T}\mathbf{1}}$ is the Rayleigh quotient of $\mathbf{\Sigma}$ in the all-ones direction. Let $\mathbf{u}=\mathbf{1}/\sqrt{\mathbf{1}^{T}\mathbf{1}}$, then $\frac{\mathbf{1}^T\mathbf{\Sigma}\mathbf{1}}{\mathbf{1}^{T}\mathbf{1}}=\mathbf{u}^T\mathbf{\Sigma} \mathbf{u}$, which measures the variance explained along the equal-weight direction.

This distinction clarifies the relation between the Golden Criterion and the first principal component. For the positive definite covariance matrices considered, the largest eigenvalue, denoted by $\rho_{\max}(\mathbf{\Sigma})$, is the maximum value of $\mathbf{x}^{T}\mathbf{\Sigma}\mathbf{x}$ over all unit vectors $\mathbf{x}$. Hence, by the Rayleigh--Ritz theorem, $\mathbf{u}^{T}\mathbf{\Sigma}\mathbf{u} \leq \rho_{\max}(\mathbf{\Sigma})$, with equality if and only if the all-ones direction belongs to the eigenspace associated with $\rho_{\max}(\mathbf{\Sigma})$. Thus, the all-ones directional variance equals the dominant eigenvalue only when the first principal component is aligned with the all-ones direction.

This connection gives the Golden Criterion a market-mode interpretation. When the all-ones direction is close to the first principal component, forecast errors share a common component with similar loadings across predictors, and equal weighting becomes closely related to the dominant covariance structure. If $\mathbf{\Sigma}$ is entrywise non-negative and the all-ones vector is an eigenvector, Perron--Frobenius theory implies that this positive eigenvector corresponds to the dominant eigenvalue. In that case, the Golden Criterion and the first principal component coincide: the eigenvalue associated with the all-ones vector is the dominant eigenvalue, i.e., $\delta=\rho_{\max}(\mathbf{\Sigma})$, and hence $\mathbf{\Sigma}\mathbf{1}=\rho_{\max}(\mathbf{\Sigma})\mathbf{1}=\delta_1 \mathbf{1}$. When some covariance entries are negative, however, an all-ones eigenvector need not correspond to the dominant eigenvalue, so the market-mode interpretation becomes weaker even though the Golden Criterion itself remains an eigenstructure condition.

In Section~\ref{sec: empirical}, we examine whether real-world forecast error covariance structures approximate this condition. We compare the dominant eigenvalue $\rho_{\max}(\mathbf{\Sigma}_1)$ with the all-ones directional variance $\tilde m/N$, where $\tilde m=\mathbf{1}^{T}\mathbf{\Sigma}_1\mathbf{1}$. Closeness between these two quantities suggests that the first principal component is approximately aligned with the all-ones direction; divergence between them indicates a departure from this alignment.

\section{Data}\label{sec: data}

We measure stock excess returns as the difference between continuously compounded returns on the S\&P 500 index (including dividends) and the 30-day Treasury bill rate. Following \citet{welch2008comprehensive}, we employ 13 widely used predictors that capture key financial and macroeconomic variables established in the return predictability literature:
\begin{enumerate}
	\item \textit{Log dividend-price ratio} (DP): log of a 12-month moving sum of S\&P 500 dividends minus log of the S\&P 500 index level.
	\item \textit{Log earnings-price ratio} (EP): difference between the log of the 12-month moving sum earnings on the S\&P 500 and the log of the index level.
	\item \textit{Log dividend-earnings ratio} (DE): log of a 12-month moving sum of dividends minus log of earnings.
	\item \textit{Stock return volatility} (RVOL): 12-month moving standard deviation of S\&P 500 returns.
	\item \textit{Book-to-market ratio} (BM): book-to-market value ratio for the Dow Jones Industrial Average.
	\item \textit{Net equity expansion} (NTIS): ratio of 12-month moving sum of net equity issues by NYSE-listed firms to total end-of-year NYSE market capitalization.
	\item \textit{Treasury bill rate} (TBL): three-month Treasury bill rate (secondary market).
	\item \textit{Long-term government bond yield} (LTY): yield on long-term government bonds.
	\item \textit{Long-term government bond return} (LTR): return on long-term government bonds.
	\item \textit{Term spread} (TMS): difference between long-term government bond yield and Treasury bill rate.
	\item \textit{Default yield spread} (DFY): difference between Moody's BAA- and AAA-rated corporate bond yields.
	\item \textit{Default return spread} (DFR): difference between long-term corporate and government bond returns.
	\item \textit{Inflation} (INFL): Consumer Price Index for All Urban Consumers from the Bureau of Labor Statistics.
\end{enumerate}

Table~\ref{tab:summary_and_correlation} presents descriptive statistics and correlations for the sample period from January 1996 to August 2019. Panel A shows that valuation ratios DP, EP, and DE exhibit negative means, with EP displaying the most negative mean (-3.15) and large variation (standard deviation = 0.37). Interest rate and bond return variables demonstrate considerable volatility, particularly LTR, which ranges from -6.42 to 8.40 with a standard deviation of 3.02. Panel B reveals strong correlations among valuation-based predictors, notably a -0.88 correlation between EP and DE, and a 0.72 correlation between DE and DFY. The negative associations between interest rate variables (TBL, LTY) and valuation ratios align with standard economic theory.

\bigskip
\centerline{\bf [Place Table~\ref{tab:summary_and_correlation} about here]}
\bigskip

\section{Empirical Results}\label{sec: empirical}
We examine stock excess return predictability using combination strategies across two frameworks. We first investigate in-sample predictive performance before conducting out-of-sample evaluation, where forecasts are assessed against data excluded from parameter estimation, a more stringent test of return predictability.

\subsection{In-sample Analysis}\label{sec: in-sample}
\subsubsection{Individual Predictor Performance}
We generate time-series forecasts for each GW predictor following equations~(\ref{eq:individual regression}) and~(\ref{eq:individual regression forecast}). Each predictor variable $x_j$ is standardized to a unit standard deviation to ensure comparable economic interpretation across variables.

For individual regressions, let $T$ denote the total sample months, with the first $T_1$ months allocated for in-sample estimation and the remainder reserved for out-of-sample forecast generation. We employ $T_1=54$ (January 1996 to June 2000) and $T_1=42$ (January 1996 to June 1999), respectively, to assess methodological robustness. Table~\ref{tab:ins indvidual} demonstrates that all GW predictors achieve superior in-sample $R^2$ values relative to the historical mean benchmark, with performance improving as forecast horizon $J$ increases. However, since in-sample estimates are susceptible to overfitting, out-of-sample results provide more compelling evidence of genuine predictive ability.

\bigskip
\centerline{\bf [Place Table~\ref{tab:ins indvidual} about here]}
\bigskip

\subsubsection{Combination Forecast Performance}
Combination forecasts integrate the individual forecast series generated from each GW predictor above. The individual forecast series are sequentially partitioned into training ($T_{c1}$), validation ($T_{c2}$), and test sets ($T_{c3}$) for combination analysis. The adaptive shrinkage combination method requires a two-step estimation: first, determining optimal weights from the training period $T_{c1}$, then selecting the optimal shrinkage parameter $\lambda_{opt}$ through the validation $T_{c2}$. Table~\ref{tab:ins combination} reports combination results in the validation set under optimal shrinkage parameters, as an in-sample analysis.

We compare optimal weights (OWs), equal weights (EWs), principal component analysis (PCA), and established shrinkage methods, including Lasso, Ridge, and Elastic Net (ENet). We also examine the recent combination benchmark models such as those in \cite{rapach2010out} (Median, Trimmed Mean, Discount Mean Square Prediction Error (DMSPE)), \cite{blanc2020bias}\footnote{see Appendix~\ref{sec: appendix detail}.}, and ASWs--\cite{li2025dominate}\footnote{see Appendix~\ref{sec: appendix detail}.}. Since shrinkage methods require validation sets for hyperparameter tuning, results for adaptive shrinkage weights (2S--ASWs), Lasso, Ridge, and ENet in Table~\ref{tab:ins combination} reflect performance on identical validation samples $T_{c2}$ (July 2000 to December 2006 for $T_1 = 54$; July 1999 to December 2006 for $T_1 = 42$). Conversely, OWs, EWs, Median, Trimmed Mean, DMSPE, \cite{blanc2020bias}, ASWs--\cite{li2025dominate}, and PCA methods require no separate validation phase; therefore, their results utilize combined training and validation samples (January 1996 to December 2006).

We categorize methods into two groups: shrinkage methods (2S--ASWs, Lasso, Ridge, and ENet) and one-step methods (OWs, \cite{blanc2020bias}, EWs, Median, Trimmed Mean, DMSPE, PCA, and ASWs--\cite{li2025dominate}) (see Table \ref{tab:methods_glossary} for details). Within the shrinkage group, ENet consistently delivers superior in-sample performance across both $T_1$ specifications. 2S--ASWs demonstrate advantages at short horizons (e.g., $J = 1$), but neither 2S--ASWs nor alternative shrinkage methods dominantly outperform the others as horizons lengthen. Among one-step methods, OWs achieve the highest in-sample $R^2$ values, followed by \cite{blanc2020bias}. EWs, Median, Trimmed Mean, and ASWs--\cite{li2025dominate} exhibit comparable performance, with ASWs--\cite{li2025dominate} ($A=10$) yielding the strongest results among them. At $J = 1$, PCA(1) and PCA(3) outperform EWs, while EWs' performance surpasses PCA methods at longer forecast horizons.

\bigskip
\centerline{\bf [Place Table~\ref{tab:ins combination} about here]}
\bigskip

\subsection{Out-of-Sample Analysis} \label{sec: Out-of-Sample Forecast}
\subsubsection{Individual Predictor Performance}
Promising in-sample performance does not guarantee effective out-of-sample predictive ability \citep{welch2008comprehensive}. Therefore, examining the out-of-sample forecasting power of individual and combination methods is essential. Following \citet{campbell2008predicting, welch2008comprehensive, rapach2010out}, we employ the out-of-sample $R^2$ statistic to measure forecasting performance:
\begin{equation*}
    R_{OS}^{2} = 1- \frac{\frac{1}{L}{\sum_{t=1}^{L}(r_{t}-\hat{r}_{t})^2}}{\frac{1}{L}{\sum_{t=1}^{L}(r_t-\overline{r}_t)^2}},
\end{equation*}
where $\hat{r}_t$ represents out-of-sample forecasts from individual regressions based on equation~(\ref{eq:individual regression forecast}), $\overline{r}_t$ denotes historical mean benchmark forecasts, and $L$ is the number of out-of-sample observations. Both forecasts employ expanding estimation windows. Positive (negative) $R^2_{OS}$ values indicate superior (inferior) forecasting accuracy relative to the historical mean benchmark, with positive values signifying lower mean squared forecast error (MSFE) and meaningful predictive power. We implement the MSFE-adjusted test of \citet{clark2007approximately} to assess whether forecasts exhibit significantly lower MSFE than the historical mean benchmark, testing $H_0: R^2_{OS} \leq 0$ against $H_A: R^2_{OS} > 0$. We also calculate HLN statistics of \cite{harvey1998tests} to conduct a forecast encompassing test between 2S--ASWs and other benchmarks pairwise, comparing the information content of 2S--ASWs with that of the other methods.

Table~\ref{tab:oos indvidual} presents out-of-sample results for individual predictor regressions. Among all predictors, only the dividend-price ratio (DP) predictor exhibits robust out-of-sample predictive power. When $T_1 = 54$, DP generates $R^2_{OS}$ values ranging from 0.77\% at $J=1$ to 2.44\% at $J=12$. When $T_1 = 42$, performance improves substantially, with $R^2_{OS}$ increasing from 0.87\% at $J=1$ to 6.54\% at $J=12$. All remaining GW predictors yield negative $R^2_{OS}$ values, indicating inferior performance relative to the historical mean benchmark. At the extended horizon of $J=24$, however, the term spread (TMS) shows $R^2_{OS}$ values over 7\% under both $T_1$ settings, demonstrating superior forecasting ability. These findings underscore the limited predictive power of individual GW predictors and provide strong motivation for the subsequent combination approaches examined.

\bigskip
\centerline{\bf [Place Table~\ref{tab:oos indvidual} about here]}
\bigskip

\subsubsection{Combination Forecast Performance}
The adaptive shrinkage combination procedure using 2S--ASWs is illustrated in Figure~\ref{fig:oos forecast process}, with alternative shrinkage methods (Lasso, Ridge, ENet) following analogous procedures. We first derive OWs based on forecasts from the initial $T_{c1}$ months, then designate the subsequent $T_{c2}$ months as the validation set to determine the optimal shrinkage parameter $\lambda_{opt}$ and compute 2S--ASWs. The construction of 2S--ASWs incorporates OWs from the training period, EWs, the accuracy vector $\bm{S}$, and the diversity matrix $\bm{D}$, all of which are calculated using individual forecasts within the validation set. After obtaining $\lambda_{opt}$ by solving equation~(\ref{eq:restrict opt lambda}), we derive 2S--ASWs accordingly. Finally, the estimated 2S--ASWs, OWs, and EWs generate combination forecasts for future excess returns over the test period $T_{c3}$ (January 2007 to August 2019).

\bigskip
\centerline{\bf [Place Figure~\ref{fig:oos forecast process} about here]}
\bigskip

To evaluate the performance of the adaptive shrinkage combination method, we examine various combinations of ($T_{c1}$, $T_{c2}$). For consistency in weight derivation, we recompute OWs using combined training and validation samples. This framework provides a rigorous evaluation when comparing the out-of-sample predictive performance of OWs and \citet{blanc2020bias} with others. Meanwhile, we also evaluate the performance of OWs derived from periods $T_{c2}$ (i.e., $\bm{\Sigma}_2$) only, given that $\bm{\Sigma}_2$ captures the most recent information closest to the test period. Lasso, Ridge, and ENet follow similar implementations, with the training set ($T_{c1}$) for regression coefficient estimation and the validation set ($T_{c2}$) for hyperparameter selection. We also assess the out-of-sample forecasting ability of PCA(1) and PCA(3).

The regularization parameter $\alpha$, controlling penalty strength in Lasso, Ridge, and Elastic Net regressions, is selected from a logarithmically spaced grid over $[10^{-6}, 10^{6}]$ with 500 points. The mixing parameter $\beta$ in Elastic Net, determining the balance between $\ell_1$ and $\ell_2$ penalties, is chosen from values ranging from 0.0001 to 1.0. Table~\ref{tab:sample division} in the Appendix details sample division and parameter selection procedures.

Table~\ref{tab:oos combination} reports out-of-sample $R^2$ statistics of combination methods ($R^2_{COS}$) and associated $p$-values for adaptive shrinkage combination (2S--ASWs) and alternative methods under $T_1 = 54$ and $T_1 = 42$ specifications for individual forecast generation. 2S--ASWs consistently exhibit positive $R^2_{COS}$ values and substantially outperform alternative methods, most decisively at the one-month horizon ($J=1$). At $J=3$ and $J=6$, in contrast, multiple methods, including DMSPE and ASWs--\cite{li2025dominate}, compete closely, with no single approach dominating. OWs ($\bm{\Sigma}_2$) and OWs demonstrate the weakest out-of-sample predictive performance, particularly at shorter horizons ($J=1$ and $J=3$), but exhibit the strongest forecasting power as the horizon lengthens to $J=12$ to $J=24$, especially the OWs ($\bm{\Sigma}_2$). The method proposed by \cite{blanc2020bias} displays a similar pattern, underperforming at short-term horizons while achieving substantially stronger results at $J=24$. This similarity may be attributed to the fact that \cite{blanc2020bias} employs a one-step shrinkage procedure directly based on the one-step OWs, without a second validation period to determine $\lambda_{opt}$. The bifurcation of horizon-dependent performance suggests that short-horizon forecasting benefits most from protection against estimation noise (where adaptive shrinkage toward equal weights excels), while long-horizon forecasting benefits from exploiting persistent covariance structure (where precise weight estimation is rewarded). 

Other shrinkage methods (Lasso, Ridge, ENet) and PCA methods yield negative $R^2_{COS}$ values, but they also perform better at longer forecast horizons. EWs, Median, Trimmed Mean, and ASWs--\cite{li2025dominate} exhibit limited predictive ability at $J = 1$, though they can all achieve positive $R^2_{COS}$ from $J=3$ to $J=24$, making them robust choices for medium- to long-term forecasting. Notably, DMSPE with $\theta=0.9$ stands out among the one-step methods: despite yielding negative $R^2_{COS}$ at $J=1$ ($-0.27\%$), it achieves the highest $R^2_{COS}$ among all methods at $J=3$ ($5.00\%$) and $J=6$ ($13.00\%$) across nearly all sample-splitting configurations, and delivers strong performance at longer horizons ($32.91\%$ at $J=12$ and $56.57\%$ at $J=24$). The dominance of DMSPE at intermediate horizons reflects the advantage of dynamically discounting older forecast errors, which effectively captures time-varying predictor relevance without requiring explicit covariance estimation. An intriguing finding is that the $R^2_{COS}$ values of 2S--ASWs typically exhibit a ``check-mark''-shaped pattern across forecast horizons $J$, with $J=6$ often corresponding to the most negative $R^2_{COS}$. This pattern is particularly evident when $T_{c1}=12$ and $T_{c2}=120$, contrasting with typical term structure findings \citep{campbell1988dividend, fama1988dividend, bansal2021term}. Of particular note, while both OWs and EWs individually yield negative $R^2_{COS}$ at $J=1$, the shrinkage-based combination in 2S--ASWs nevertheless produces positive out-of-sample $R^2_{COS}$, a result that is perhaps surprising given the poor predictive performance of its constituent forecasts. In the Appendix~\ref{sec:appendix 2024}, we further examine the performance of all these methods when the sample period is extended to December 2024. 2S--ASWs continue to demonstrate their superiority over competing benchmarks, particularly at the short horizon of $J=1$; however, the overall out-of-sample forecasting performance deteriorates noticeably relative to that of 2019, a finding that is consistent with the evidence documented in \citet{denk2024predicting}. We refer the reader to Table~\ref{tab:extend 2024 oos combination} for a detailed exposition of these results.

\bigskip
\centerline{\bf [Place Table~\ref{tab:oos combination} about here]}
\bigskip

\subsubsection{Shrinkage Dynamics}

To understand 2S--ASWs' out-of-sample forecasting preferences, Figure~\ref{fig:lambda_opt} displays $\lambda_{opt}$ values and weights assigned to each GW prediction series for the optimal configuration ($T_{c1}=6$, $T_{c2}=126$) when $T_1=42$. Nearly all $\lambda_{opt}$ values exceed 0.5. Specifically, at $J=1$ and $J=3$, the proportions of $\lambda_{opt}$ exactly equaling 1 are 19.74\% and 57.33\%, respectively, while 66.45\% and 84.67\% of the values exceed 0.9. These results indicate a strong preference for EWs, particularly during the initial out-of-sample periods.

A notable feature is the decline in $\lambda_{opt}$ after early 2015, suggesting increased reliance on OWs in the adaptive shrinkage combination thereafter. This period coincides with several major market developments, including the Federal Reserve's first interest rate increase after 2006 \citep{anderson2020happened}, the crash of the Chinese stock market \citep{huang2019saving}, and the collapse in crude oil prices \citep{berk2020shift}. We interpret this pattern as suggestive evidence that 2S--ASWs adjust toward OWs when the covariance structure of forecast errors departs from the conditions favoring equal weighting. The preference of 2S--ASWs, which starts to shift from EWs towards OWs, especially during 2017 and early 2018, before rapidly reverting to EWs, aligns closely with the finding of \citet{wang2024non}, which demonstrates that portfolio strategies tend to favor na\"{\i}ve simple averaging during periods of market reversals. Specifically in Fig.~\ref{fig:excess returns}, the market exhibits a relatively stable positive trend from 2016 to early 2018, during which adaptive shrinkage weights increasingly deviate from EWs. However, as the market volatility resurges after the beginning of 2018, 2S--ASWs quickly revert toward EWs.

\bigskip
\centerline{\bf [Place Figure~\ref{fig:lambda_opt} about here]}
\bigskip

\bigskip
\centerline{\bf [Place Figure~\ref{fig:excess returns} about here]}
\bigskip

In Appendix~\ref{sec:appendix A.2}, we provide a theoretical analysis demonstrating that the same two-stage optimization framework can be applied after reversing the roles of the error covariance matrices $\mathbf{\Sigma}_1$ and $\mathbf{\Sigma}_2$. Motivated by this observation, we empirically examine a reversed specification in which the temporal roles of the training ($\mathbf{\Sigma}_1$) and validation samples ($\mathbf{\Sigma}_2$) are interchanged: $T_{c2}$ is used to estimate OWs while $T_{c1}$ computes $\lambda_{opt}$ and adaptive shrinkage weights (R2S--ASWs). Although this reversed specification yields weaker out-of-sample performance in general, R2S--ASWs demonstrate notable improvements at long-term horizons ($J=12/24$). Most strikingly, established shrinkage methods, Ridge, Lasso, and ENet, compared to the original scenario, exhibit a surprising improvement under the reversed specification, with Ridge emerging as the strongest performer at $J=24$. Crucially, the plots of $\lambda_{opt}$ suggest a consistent and symmetrical pattern: $\lambda_{opt}$ values still predominantly exceed 0.5 and frequently approach 1 (EWs) across five horizons, underscoring the persistent dominance and robustness of EWs in equity premium forecasting.

Figure~\ref{fig:SWs} exhibits the weight elements of 2S--ASWs assigned to each GW predictor's forecast. Individual forecast series demonstrate overall symmetric patterns. At $J=3$, weights assigned to each predictor are minimal before early 2015, while at $J=1$ and $J=12$, weights exhibit greater temporal variation. Among factors, the DP predictor consistently receives the highest weight (except at $J=1$), particularly in later out-of-sample periods where its weight substantially exceeds others. This suggests an important role for DP in the adaptive shrinkage combination, consistent with its superior individual out-of-sample performance.

Additionally, RVOL receives significantly negative weights at $J=3$, 6, and 12, reflecting poor combination performance. DE and LTR also receive substantial negative weights at $J=1$ and $J=24$, respectively, aligning with their individual out-of-sample performance. The magnitude and sign of most predictors' weights vary temporally rather than remaining constant, highlighting both the challenge of individual predictor use and the flexibility afforded by forecast combination methods.

\bigskip
\centerline{\bf [Place Figure~\ref{fig:SWs} about here]}
\bigskip

\subsubsection{Eigenvalue Evidence}

Proposition~\ref{proposition 1} has practical implications under the Golden Criterion. If the first-stage error covariance matrix $\mathbf{\Sigma}_1$ has the all-ones vector as an eigenvector, then the first-stage optimal weights are exactly equal weights. Consequently, the shrinkage path collapses to equal weights, and the naive strategy is optimal, i.e., $\hat{\mathbf{w}}^\lambda=\hat{\mathbf{w}}^{ow}=\mathbf{w}^{ew}$ for all $\lambda\in[0,1]$, regardless of the second-stage covariance matrix $\mathbf{\Sigma}_2$. In this sense, an algebraic relationship between the eigenvalue $\delta_1$ of $\mathbf{1}$ associated with $\mathbf{\Sigma}_1$, the sum of all elements in $\mathbf{\Sigma}_1^{-1}$, and the number of predictors $N$ is satisfied, i.e., $m=\mathbf{1}^T\mathbf{\Sigma}_1^{-1}\mathbf{1}=N/\delta_1$, or equivalently $\delta_1=N/m$. The complete mathematical derivation of this conclusion is provided in Appendix~\ref{sec:appendix A.2}. We examine the empirical counterpart of this implication by comparing the dominant eigenvalue $\rho_{\max}(\mathbf{\Sigma}_1)$ with the all-ones Rayleigh quotient $\frac{\tilde m}{N}$, where $\tilde m=\mathbf{1}^T\mathbf{\Sigma}_1\mathbf{1}$.
Under the exact Golden Criterion, $\tilde m/N$ equals the eigenvalue $\delta_1$; it equals the dominant eigenvalue only when this all-ones eigenvector is also the first principal component. More generally, closeness between $\rho_{\max}(\mathbf{\Sigma}_1)$ and $\tilde m/N$ indicates that the first principal component is closely aligned with the all-ones direction.

Still taking the configuration $T_{c1}=6$, $T_{c2}=126$, and $T_1=42$ as an example, we compute in each rolling window the dominant eigenvalue $\rho_{\max}(\mathbf{\Sigma}_1)$ and $\tilde m/N$. The empirical results are presented in Figure~\ref{fig:delta_1 and m/N}. At short horizons $J=1,3,6$, all entries of $\mathbf{\Sigma}_1$ are positive across rolling windows. In this case, Perron--Frobenius theory provides a useful interpretation: when the all-ones direction is close to an eigenvector, it is naturally associated with the dominant positive principal component. The very close movement between $\rho_{\max}(\mathbf{\Sigma}_1)$ and $\tilde m/N$ at these horizons is therefore consistent with the first principal component being approximately aligned with the all-ones direction, supporting the empirical relevance of the Golden Criterion.

As the forecast horizon $J$ lengthens, $\lambda_{opt}$ deviates more visibly from unity, and the gap between $\rho_{\max}(\mathbf{\Sigma}_1)$ and $\tilde m/N$ becomes more pronounced. This pattern coincides with a change in the sign structure of the covariance matrix: for $J=12$ and $J=24$, negative entries begin to appear in $\mathbf{\Sigma}_1$ in some rolling windows. Once the covariance matrix is no longer entrywise positive, the all-ones direction need not correspond to the dominant eigenvector, even if it remains economically meaningful as a diversification direction. The observed divergence between $\rho_{\max}(\mathbf{\Sigma}_1)$ and $\tilde m/N$ is therefore consistent with the forecast-error covariance structure moving farther away from the all-ones alignment at longer horizons.

Because $\lambda_{opt}$ is constrained to lie in $[0,1]$ and many estimates remain close to the upper boundary, exact equality between the two quantities is not expected in finite samples. Nevertheless, during periods in which 2S--ASWs deviate more clearly from EWs, such as 2015--2018, the divergence between $\rho_{\max}(\mathbf{\Sigma}_1)$ and $\tilde m/N$ becomes more evident, especially as $J$ increases. Even at short horizons, $J=1$ and $J=3$, small but persistent discrepancies can still be observed. Overall, these empirical patterns suggest that real-world covariance structures often approximate the Golden Criterion at short horizons, while longer-horizon covariance matrices display weaker all-ones alignment, helping to explain why na\"{\i}ve diversification is more effective in short-term forecasting environments than in others.

\bigskip
\centerline{\bf [Place Figure~\ref{fig:delta_1 and m/N} about here]}
\bigskip

\subsubsection{Asset Allocation}
While out-of-sample $R^2$ statistics measure mean squared forecast errors relative to historical mean benchmarks, economic performance metrics of optimal portfolios provide more relevant assessments for practitioners. Economic evaluation tests whether investors can identify and exploit predictability for market timing. Following \citet{campbell2008predicting}, we employ utility-based metrics to evaluate the economic performance of adaptive shrinkage combination methods.

Consider a mean-variance investor who optimally allocates wealth between the equity market portfolio and risk-free bonds at each time $t$. The optimal weight allocated to the risky market portfolio is determined by:
\begin{equation*}
    w_t = \frac{1}{\gamma}\frac{\hat{r}_{comb,t+1,t+J}}{{\hat{\sigma}^2}_{comb,t+1,t+J}}, w_t \in [0, 1.5]
\end{equation*}
where $\gamma$ represents the risk-aversion parameter, $\hat{r}_{comb,t+1,t+J}$ denotes the predicted monthly excess return on the U.S. aggregate stock market based on combination methods (2S--ASWs, OWs, EWs, Lasso, Ridge, ENet) or PCA approaches up to period $t$, and ${\hat{\sigma}^2}_{comb,t+1,t+J}$ represents the predicted volatility of monthly market excess returns estimated using a 10-year rolling window. Following \citet{rapach2016short}, we set $\gamma = 3$ and constrain $w_t \in [0, 1.5]$ to prevent short sales while allowing up to 50\% leverage.

The investor realizes average utility throughout the out-of-sample evaluation period:
\begin{equation*}
    U(\mu,\sigma)=\hat{\mu}_p - 0.5\gamma{\hat{\sigma}^2}_p,
\end{equation*}
where $\hat{\mu}_p$ and ${\hat{\sigma}^2}_p$ represent the sample mean and variance of the investment portfolio over the complete out-of-sample period. We calculate utility gains as the difference in realized average utility between investors using adaptive shrinkage combination or alternative methods versus those employing historical mean forecasts. We annualize utility gains to express them as expected annualized returns, interpretable as the transaction costs or portfolio management fees that mean-variance investors would pay to access predictive regressions rather than historical mean forecasts. Following \citet{rapach2016short}, we assume investors rebalance at the same frequency as the forecast horizon to analyze the economic value of long-horizon predictability.

Table~\ref{tab:oos combination utility} reports average utility gains (in annualized percentage returns) for mean-variance investors with relative risk aversion of three who allocate between equities and risk-free assets using predictive forecasts versus historical averages. Consistent with $R^2_{COS}$ statistics, adaptive shrinkage combination forecasts demonstrate substantial economic value. 2S--ASWs maintain positive average utility gains across all ($T_1$, $T_{c1}$, $T_{c2}$, $J$) configurations at monthly horizons, with the maximum reaching 10.67\% when ($T_1=42, T_{c1}=48$, $T_{c2}=84$). OWs, OWs ($\bm{\Sigma}_2$), the method of \citet{blanc2020bias}, and alternative shrinkage methods occasionally outperform 2S--ASWs in specific cases such as ($T_{c1}=48$, $T_{c2}=84$) for both $T_1=54$ and $T_1=42$ especially at long horizon ($J=24$).

Following \citet{rapach2016short}, we include utility gains from passive buy-and-hold strategies in the final row of Table~\ref{tab:oos combination utility}. Buy-and-hold portfolios, where investors passively maintain market exposure without rebalancing, generate positive utility gains relative to historical mean forecasts except at $J=24$. However, these gains remain substantially below those achieved using the adaptive shrinkage combination method.

\bigskip
\centerline{\bf [Place Table~\ref{tab:oos combination utility} about here]}
\bigskip

We calculate Sharpe ratios (SR) for portfolios based on adaptive shrinkage combination and alternative forecasting methods relative to historical mean benchmarks. The Sharpe ratio equals mean excess portfolio return divided by portfolio return standard deviation: $SR=\mu_p/\sigma_p$. Following \citet{rapach2016short}, we incorporate 50 basis points of transaction costs to approximate realistic trading conditions. Table~\ref{tab:oos combination Sharp} presents annualized Sharpe ratios for mean-variance investors allocating between risky equities and risk-free assets based on various forecasting approaches, with historical mean Sharpe ratios serving as benchmarks. The adaptive shrinkage combination method consistently delivers superior Sharpe ratios relative to historical mean benchmarks over the January 2007 to August 2019 evaluation period, with most ratios exceeding 0.5, and particularly strong performance when $T_1=54$. OWs, EWs, and alternative shrinkage methods also generate relatively high Sharpe ratios, surpassing both historical mean and buy-and-hold benchmarks. Notably, OWs ($\bm{\Sigma}_2$) consistently achieve the highest Sharpe ratio at $J=24$, which highlights the informational value of recency. PCA methods exhibit modest performance, with the lowest Sharpe ratio reaching only 0.26 at $J=12$.

Overall, the above results demonstrate that stock return prediction using the adaptive shrinkage combination method provides substantial economic value for risk-averse investors through both utility gains and Sharpe ratio improvements over conventional benchmarks.

\bigskip
\centerline{\bf [Place Table~\ref{tab:oos combination Sharp} about here]}
\bigskip

\section{Discussion}\label{sec: Disc}
{\bf  High-dimensional portfolio}. Recent advances in high-dimensional portfolio optimization have focused on sparsity-inducing estimators, such as LASSO-based selection, thresholding methods, or other penalized mean-variance frameworks (e.g., \citealp{Fan2012, brodie2009sparse}). These approaches attempt to overcome estimation risk by selecting a subset of assets or predictors to form a concentrated portfolio. While they are mathematically appealing in large-dimensional settings, such methods face two well-documented challenges in practice. First, sparse portfolios are often unstable and highly sensitive to small changes in data, leading to frequent and costly rebalancing. Second, by excluding many available assets, they can compromise market investability and diversification, potentially increasing systematic risk and reducing liquidity.

Our results do not conflict with these high-dimensional techniques but instead complement them with a theoretically grounded criterion for combining weights. Even when a sparse selection procedure is applied as a preliminary step, the resulting portfolio weights can be further improved under our framework. Specifically, if the error covariance structure of the selected assets approaches the Golden Criterion, where the covariance matrix of residuals has an all-ones eigenvector, then an equal-weight adjustment or an adaptive shrinkage combination achieves optimality in the minimum-variance sense. This ensures that diversification is not lost even when statistical procedures select a narrow set of predictors, providing a safeguard against estimation error and instability. Thus, while high-dimensional selection methods are valuable for reducing dimensionality, our approach addresses a different but equally important aspect of portfolio design: achieving robust and theoretically optimal allocations given the realized covariance structure. By explicitly combining selection-driven precision with covariance-driven diversification, the adaptive strategy we propose remains applicable and beneficial across both low- and high-dimensional portfolio settings.

{\bf Behavioral foundation.} A large body of research shows that under uncertainty, individuals gravitate toward even or equal allocations. The equiprobability bias leads people to treat random processes as if all outcomes were equally likely \citep{Cappelen2016}, while the equality heuristic highlights that people often divide resources evenly as a simplifying rule that reduces cognitive effort and perceived regret \citep{messick1993,gigerenzer2011}. Portfolio evidence echoes this tendency: \citet{benartzi2001} documents the widespread use of the ``$1/N$ heuristic'' in retirement plans, and \citet{kahneman2011} emphasizes that such rules reflect intuitive reasoning favoring simplicity and fairness. Similarly, findings on probability matching \citep{vulkan2000,koehler2010} show that decision makers frequently align choices with perceived equal odds rather than maximize expected payoffs. Although our empirical implementation uses Goyal--Welch predictors as forecasting signals rather than tradable assets, the adaptive shrinkage logic is not specific to these proxies: it offers a general framework for combining noisy signals, forecasts, or allocation rules in practical investment settings. This broader interpretation is consistent with \citep{wang2024non}, who show that na\"{\i}ve strategies can often attain, or closely approximate, optimal performance in realistic portfolio environments. Together, these insights suggest that equal-weighting resonates with a fundamental psychological intuition of ``even chances'', explaining its persistence in practice.

{\bf Risk attitudes.} Investor preferences further shape the appeal of equal-weighting. A risk-neutral investor views equal allocation as a fair baseline; a risk-seeking investor may tilt toward optimized, concentrated bets that exploit predictive signals; and a risk-averse investor naturally favors diversification and stability, often gravitating to equal-weighting or shrinkage toward it. Our Golden Criterion unifies these perspectives by showing that under identifiable covariance structures, equal-weighting is not only psychologically appealing to risk-averse investors but can be mathematically optimal. This bridges behavioral heuristics with formal portfolio theory, demonstrating when intuitive ``even chance'' reasoning aligns with rigorous decision-making principles.

\textbf{Horizon-dependent strategy selection.} The behavioral and decision-theoretic considerations discussed above provide an interpretation of the horizon-dependent empirical patterns. At short horizons ($J=1$), equity premium forecasts are especially noisy, and estimation uncertainty makes purely optimized weights costly. In this environment, shrinkage toward EWs can reduce variance while preserving diversification, which is consistent with the strong short-horizon performance of 2S--ASWs and the frequent concentration of $\lambda_{opt}$ near unity. As the horizon lengthens, persistent macroeconomic predictors may contain more stable information, and the covariance structure of forecast errors appears to move farther away from the all-ones alignment, as reflected by the growing divergence between $\rho_{\max}(\mathbf{\Sigma}_1)$ and $\tilde m/N$ in Figure~\ref{fig:delta_1 and m/N}. Under these conditions, covariance-adjusted methods such as OWs($\bm{\Sigma}_2$) and bias-corrected approaches such as \citet{blanc2020bias} become more competitive. These findings suggest that the relative value of equal weighting and optimization varies with the forecast horizon: equal weighting is valuable when the forecast-error covariance structure is close to the Golden Criterion, whereas optimized weighting becomes attractive as the covariance structure moves away from the all-ones alignment.

\section{Conclusion}\label{sec: conclusion}

Investors and portfolio managers face a familiar dilemma: should they trust complex optimization models to determine asset weights, or follow simple, equal-weight rules that often perform better in real markets? While optimization promises precision, it often leads to concentrated portfolios that are unstable, costly to rebalance, and less investable. By contrast, equal weighting ignores detailed forecasts but delivers consistent, diversified results.

This paper revisits one of the most enduring puzzles in behavioral economics and portfolio management: the persistent outperformance of na\"{\i}ve equal-weight allocations relative to statistically optimized weights in out-of-sample settings.
We address this gap by providing a necessary and sufficient condition under which equal weighting coincides with the minimum-variance solution. This condition offers a covariance-based explanation for when na\"{\i}ve diversification can be optimal. Specifically, when the error covariance matrix of either the initial estimation stage or the adaptive combination stage admits an eigenvector consisting entirely of ones, the minimum-variance optimal solution coincides with equal weighting. This condition provides the rigorous theoretical foundation for the empirical success of na\"{\i}ve diversification, transforming it from a statistical curiosity into a principled outcome of portfolio theory. 

Building on this insight, we propose a two-stage adaptive shrinkage methodology that dynamically combines optimal and na\"{\i}ve weights based on evolving covariance structures. This framework balances the trade-off between exploiting predictive accuracy and preserving forecast diversity, avoiding the instability and concentration risks that plague purely optimized allocations. Empirically, our analysis of S\&P 500 excess returns demonstrates substantial economic value across prediction horizons. The proposed 2S--ASWs method consistently outperforms benchmark models at the one-month horizon and enhances the long-horizon performance. For investors, this translates to annualized utility gains exceeding 200 basis points, significant improvements in Sharpe ratios, and an optimal shrinkage parameter often above 0.5. These findings highlight the practical importance of diversity in forecasts and portfolios, suggesting that broad participation often outweighs narrow precision in real-world markets.

More broadly, our results have three implications for behavior and financial economics. First, they suggest why na\"{\i}ve strategies can be difficult to beat out-of-sample: many real-data environments appear close to the Golden Criterion, making equal weighting nearly optimal in practice. Second, they provide a covariance-based rationale for diversification rules that maintain stability and investability. Third, they document a horizon-dependent pattern in return predictability: at short horizons, where estimation noise is severe, shrinkage toward equal weights is especially valuable; at longer horizons, persistent covariance patterns can make optimized weights more competitive. For practitioners, the message is clear: simple rules can be optimal in the right conditions, and blending them with optimization can improve results without sacrificing stability, practicality, or predictive power across varying horizons.

Future research can extend our framework in several directions. One is to explore whether the Golden Criterion emerges endogenously from equilibrium models of asset returns, particularly under heterogeneous beliefs or market frictions. Another is to apply the methodology to multi-asset portfolios, international markets, or alternative investments, where error covariance structures may deviate more significantly from the Golden Criterion. Finally, investigating why the dominance of diversity-driven shrinkage at short horizons reverses at longer horizons, and whether this reversal reflects the term structure of estimation risk, would deepen our understanding of horizon-dependent return predictability.

In sum, this paper establishes the Golden Criterion as a useful diagnostic tool for portfolio theory and the adaptive shrinkage combination as a practical bridge between empirical robustness and theoretical optimality. By providing both a precise condition for na\"{\i}ve dominance and a flexible framework for combining forecasts, we reconcile simplicity and sophistication in portfolio design, offering actionable insights for both academics and practitioners.

	\clearpage

	{
        \doublespacing
		\bibliographystyle{apa}
		\bibliography{ref}

\begin{thebibliography}{}

\bibitem[Aiolfi and Timmermann, 2006]{aiolfi2006persistence}
Aiolfi, M. and Timmermann, A. (2006).
\newblock Persistence in forecasting performance and conditional combination
  strategies.
\newblock {\em Journal of Econometrics}, 135(1-2):31--53.

\bibitem[Anderson et~al., 2020]{anderson2020happened}
Anderson, A.~G., Ihrig, J.~E., Meade, E.~E., and Weinbach, G.~C. (2020).
\newblock {\em What Happened in Money Markets after the Fed's December Rate
  Increase?}
\newblock SSRN.

\bibitem[Bansal et~al., 2021]{bansal2021term}
Bansal, R., Miller, S., Song, D., and Yaron, A. (2021).
\newblock The term structure of equity risk premia.
\newblock {\em Journal of Financial Economics}, 142(3):1209--1228.

\bibitem[Bates and Granger, 1969]{bates1969combination}
Bates, J.~M. and Granger, C.~W. (1969).
\newblock The combination of forecasts.
\newblock {\em Journal of the Operational Research Society}, 20(4):451--468.

\bibitem[Bekaert et~al., 2002]{bekaert2002dating}
Bekaert, G., Harvey, C.~R., and Lumsdaine, R.~L. (2002).
\newblock Dating the integration of world equity markets.
\newblock {\em Journal of Financial Economics}, 65(2):203--247.

\bibitem[Benartzi and Thaler, 2001]{benartzi2001}
Benartzi, S. and Thaler, R.~H. (2001).
\newblock Naive diversification strategies in defined contribution saving
  plans.
\newblock {\em American Economic Review}, 91(1):79--98.

\bibitem[Berk and {\c{C}}am, 2020]{berk2020shift}
Berk, I. and {\c{C}}am, E. (2020).
\newblock The shift in global crude oil market structure: A model-based
  analysis of the period 2013--2017.
\newblock {\em Energy Policy}, 142:111497.

\bibitem[Blanc and Setzer, 2020]{blanc2020bias}
Blanc, S.~M. and Setzer, T. (2020).
\newblock Bias--variance trade-off and shrinkage of weights in forecast
  combination.
\newblock {\em Management Science}, 66(12):5720--5737.

\bibitem[Brodie et~al., 2009]{brodie2009sparse}
Brodie, J., Daubechies, I., De~Mol, C., Giannone, D., and Loris, I. (2009).
\newblock Sparse and stable markowitz portfolios.
\newblock {\em Proceedings of the National Academy of Sciences},
  106(30):12267--12272.

\bibitem[Brown et~al., 2005]{brown2005diversity}
Brown, G., Wyatt, J., Harris, R., and Yao, X. (2005).
\newblock Diversity creation methods: a survey and categorisation.
\newblock {\em Information Fusion}, 6(1):5--20.

\bibitem[Campbell and Shiller, 1988]{campbell1988dividend}
Campbell, J.~Y. and Shiller, R.~J. (1988).
\newblock The dividend-price ratio and expectations of future dividends and
  discount factors.
\newblock {\em The Review of Financial Studies}, 1(3):195--228.

\bibitem[Campbell and Thompson, 2008]{campbell2008predicting}
Campbell, J.~Y. and Thompson, S.~B. (2008).
\newblock Predicting excess stock returns out of sample: Can anything beat the
  historical average?
\newblock {\em The Review of Financial Studies}, 21(4):1509--1531.

\bibitem[Cappelen et~al., 2016]{Cappelen2016}
Cappelen, A.~W., Nielsen, U.~H., Tungodden, B., Tyran, J.-R., and Wengström,
  E. (2016).
\newblock Fairness is intuitive.
\newblock {\em Experimental Economics}, 19(4):727–740.

\bibitem[Claeskens et~al., 2016]{claeskens2016forecast}
Claeskens, G., Magnus, J.~R., Vasnev, A.~L., and Wang, W. (2016).
\newblock The forecast combination puzzle: A simple theoretical explanation.
\newblock {\em International Journal of Forecasting}, 32(3):754--762.

\bibitem[Clark and West, 2007]{clark2007approximately}
Clark, T.~E. and West, K.~D. (2007).
\newblock Approximately normal tests for equal predictive accuracy in nested
  models.
\newblock {\em Journal of Econometrics}, 138(1):291--311.

\bibitem[Clemen, 1989]{clemen1989combining}
Clemen, R.~T. (1989).
\newblock Combining forecasts: A review and annotated bibliography.
\newblock {\em International Journal of Forecasting}, 5(4):559--583.

\bibitem[Coibion and Gorodnichenko, 2015]{coibion2015information}
Coibion, O. and Gorodnichenko, Y. (2015).
\newblock Information rigidity and the expectations formation process: A simple
  framework and new facts.
\newblock {\em American Economic Review}, 105(8):2644--2678.

\bibitem[DeMiguel et~al., 2009]{demiguel2009optimal}
DeMiguel, V., Garlappi, L., and Uppal, R. (2009).
\newblock Optimal versus naive diversification: How inefficient is the 1/n
  portfolio strategy?
\newblock {\em The Review of Financial Studies}, 22(5):1915--1953.

\bibitem[Denk and L{\"o}ffler, 2024]{denk2024predicting}
Denk, S. and L{\"o}ffler, G. (2024).
\newblock Predicting the equity premium with combination forecasts: A
  reappraisal.
\newblock {\em The Review of Asset Pricing Studies}, 14(4):545--577.

\bibitem[Fama and French, 1988]{fama1988dividend}
Fama, E.~F. and French, K.~R. (1988).
\newblock Dividend yields and expected stock returns.
\newblock {\em Journal of Financial Economics}, 22(1):3--25.

\bibitem[Fan et~al., 2012]{Fan2012}
Fan, J., Guo, S.~G., and Hao, N. (2012).
\newblock Variance estimation using refitted cross-validation in ultrahigh
  dimensional regression.
\newblock {\em Journal of the Royal Statistical Society Series B: Statistical
  Methodology}, 74(1):37–65.

\bibitem[Genre et~al., 2013]{genre2013combining}
Genre, V., Kenny, G., Meyler, A., and Timmermann, A. (2013).
\newblock Combining expert forecasts: Can anything beat the simple average?
\newblock {\em International Journal of Forecasting}, 29(1):108--121.

\bibitem[Gigerenzer and Gaissmaier, 2011]{gigerenzer2011}
Gigerenzer, G. and Gaissmaier, W. (2011).
\newblock Heuristic decision making.
\newblock {\em Annual Review of Psychology}, 62:451--482.

\bibitem[Goyal et~al., 2024]{goyal2024comprehensive}
Goyal, A., Welch, I., and Zafirov, A. (2024).
\newblock A comprehensive 2022 look at the empirical performance of equity
  premium prediction.
\newblock {\em The Review of Financial Studies}, 37(11):3490--3557.

\bibitem[Harvey et~al., 1998]{harvey1998tests}
Harvey, D.~I., Leybourne, S.~J., and Newbold, P. (1998).
\newblock Tests for forecast encompassing.
\newblock {\em Journal of Business \& Economic Statistics}, 16(2):254--259.

\bibitem[Huang et~al., 2019]{huang2019saving}
Huang, Y., Miao, J., and Wang, P. (2019).
\newblock Saving china’s stock market?
\newblock {\em IMF Economic Review}, 67:349--394.

\bibitem[Jorion, 1986]{jorion1986bayes}
Jorion, P. (1986).
\newblock Bayes-stein estimation for portfolio analysis.
\newblock {\em Journal of Financial and Quantitative Analysis}, 21(3):279--292.

\bibitem[Kahneman, 2011]{kahneman2011}
Kahneman, D. (2011).
\newblock {\em Thinking, Fast and Slow}.
\newblock Farrar, Straus and Giroux.

\bibitem[Karavias et~al., 2023]{karavias2023structural}
Karavias, Y., Narayan, P.~K., and Westerlund, J. (2023).
\newblock Structural breaks in interactive effects panels and the stock market
  reaction to covid-19.
\newblock {\em Journal of Business \& Economic Statistics}, 41(3):653--666.

\bibitem[Koehler and James, 2010]{koehler2010}
Koehler, D.~J. and James, G. (2010).
\newblock Probability matching and strategy availability.
\newblock {\em Memory \& Cognition}, 38(6):667--676.

\bibitem[Krogh and Vedelsby, 1994]{krogh1994neural}
Krogh, A. and Vedelsby, J. (1994).
\newblock Neural network ensembles, cross validation, and active learning.
\newblock {\em Advances in Neural Information Processing Systems}, 7.

\bibitem[Li et~al., 2025]{li2025dominate}
Li, K., Li, Y., Lyu, C., and Yu, J. (2025).
\newblock How to dominate the historical average.
\newblock {\em The Review of Financial Studies}, 38(10):3086--3116.

\bibitem[Lin et~al., 2018]{lin2018forecasting}
Lin, H., Wu, C., and Zhou, G. (2018).
\newblock Forecasting corporate bond returns with a large set of predictors: An
  iterated combination approach.
\newblock {\em Management Science}, 64(9):4218--4238.

\bibitem[Mankiw and Reis, 2002]{mankiw2002sticky}
Mankiw, N.~G. and Reis, R. (2002).
\newblock Sticky information versus sticky prices: a proposal to replace the
  new keynesian phillips curve.
\newblock {\em The Quarterly Journal of Economics}, 117(4):1295--1328.

\bibitem[Markowitz, 1952]{markowitz1952modern}
Markowitz, H. (1952).
\newblock Modern portfolio theory.
\newblock {\em Journal of Finance}, 7(11):77--91.

\bibitem[Messick, 1993]{messick1993}
Messick, D.~M. (1993).
\newblock Equality as a decision heuristic.
\newblock In Mellers, B.~A. and Baron, J., editors, {\em Psychological
  Perspectives on Justice}, pages 11--31. Cambridge University Press.

\bibitem[Rapach et~al., 2016]{rapach2016short}
Rapach, D.~E., Ringgenberg, M.~C., and Zhou, G. (2016).
\newblock Short interest and aggregate stock returns.
\newblock {\em Journal of Financial Economics}, 121(1):46--65.

\bibitem[Rapach et~al., 2010]{rapach2010out}
Rapach, D.~E., Strauss, J.~K., and Zhou, G. (2010).
\newblock Out-of-sample equity premium prediction: Combination forecasts and
  links to the real economy.
\newblock {\em The Review of Financial Studies}, 23(2):821--862.

\bibitem[Roccazzella et~al., 2022]{roccazzella2022optimal}
Roccazzella, F., Gambetti, P., and Vrins, F. (2022).
\newblock Optimal and robust combination of forecasts via constrained
  optimization and shrinkage.
\newblock {\em International Journal of Forecasting}, 38(1):97--116.

\bibitem[Simon, 1955]{simon1955behavioral}
Simon, H.~A. (1955).
\newblock A behavioral model of rational choice.
\newblock {\em The Quarterly Journal of Economics}, pages 99--118.

\bibitem[Smith and Wallis, 2009]{smith2009simple}
Smith, J. and Wallis, K.~F. (2009).
\newblock A simple explanation of the forecast combination puzzle.
\newblock {\em Oxford Bulletin of Economics and Statistics}, 71(3):331--355.

\bibitem[Stock and Watson, 2004]{stock2004combination}
Stock, J.~H. and Watson, M.~W. (2004).
\newblock Combination forecasts of output growth in a seven-country data set.
\newblock {\em Journal of Forecasting}, 23(6):405--430.

\bibitem[Thomson et~al., 2019]{thomson2019combining}
Thomson, M.~E., Pollock, A.~C., {\"O}nkal, D., and G{\"o}n{\"u}l, M.~S. (2019).
\newblock Combining forecasts: Performance and coherence.
\newblock {\em International Journal of Forecasting}, 35(2):474--484.

\bibitem[Tu and Zhou, 2011]{tu2011markowitz}
Tu, J. and Zhou, G. (2011).
\newblock Markowitz meets {T}almud: {A} combination of sophisticated and naive
  diversification strategies.
\newblock {\em Journal of Financial Economics}, 99(1):204--215.

\bibitem[Tversky and Kahneman, 1974]{tversky1974judgment}
Tversky, A. and Kahneman, D. (1974).
\newblock Judgment under uncertainty: Heuristics and biases: Biases in
  judgments reveal some heuristics of thinking under uncertainty.
\newblock {\em Science}, 185(4157):1124--1131.

\bibitem[Vulkan, 2000]{vulkan2000}
Vulkan, N. (2000).
\newblock An economist’s perspective on probability matching.
\newblock {\em Journal of Economic Surveys}, 14(1):101--118.

\bibitem[Wang and Zhang, 2024]{wang2024non}
Wang, Z. and Zhang, Z. (2024).
\newblock Non-reversal normal markets meet nonexcessive regulations: {MMVaR}
  outperforms naive strategy, {VaR}, and {CVaR}.
\newblock {\em {SSRN} (September 28, 2024)}.

\bibitem[Welch and Goyal, 2008]{welch2008comprehensive}
Welch, I. and Goyal, A. (2008).
\newblock A comprehensive look at the empirical performance of equity premium
  prediction.
\newblock {\em The Review of Financial Studies}, 21(4):1455--1508.

\bibitem[Yuan and Zhou, 2024]{yuan2024naive}
Yuan, M. and Zhou, G. (2024).
\newblock Why naive diversification is not so naive, and how to beat it?
\newblock {\em Journal of Financial and Quantitative Analysis},
  59(8):3601--3632.

\end{thebibliography}
	}

    \clearpage
    \begin{landscape}
    	\begin{figure}
         \centerline{\includegraphics[width=8in]{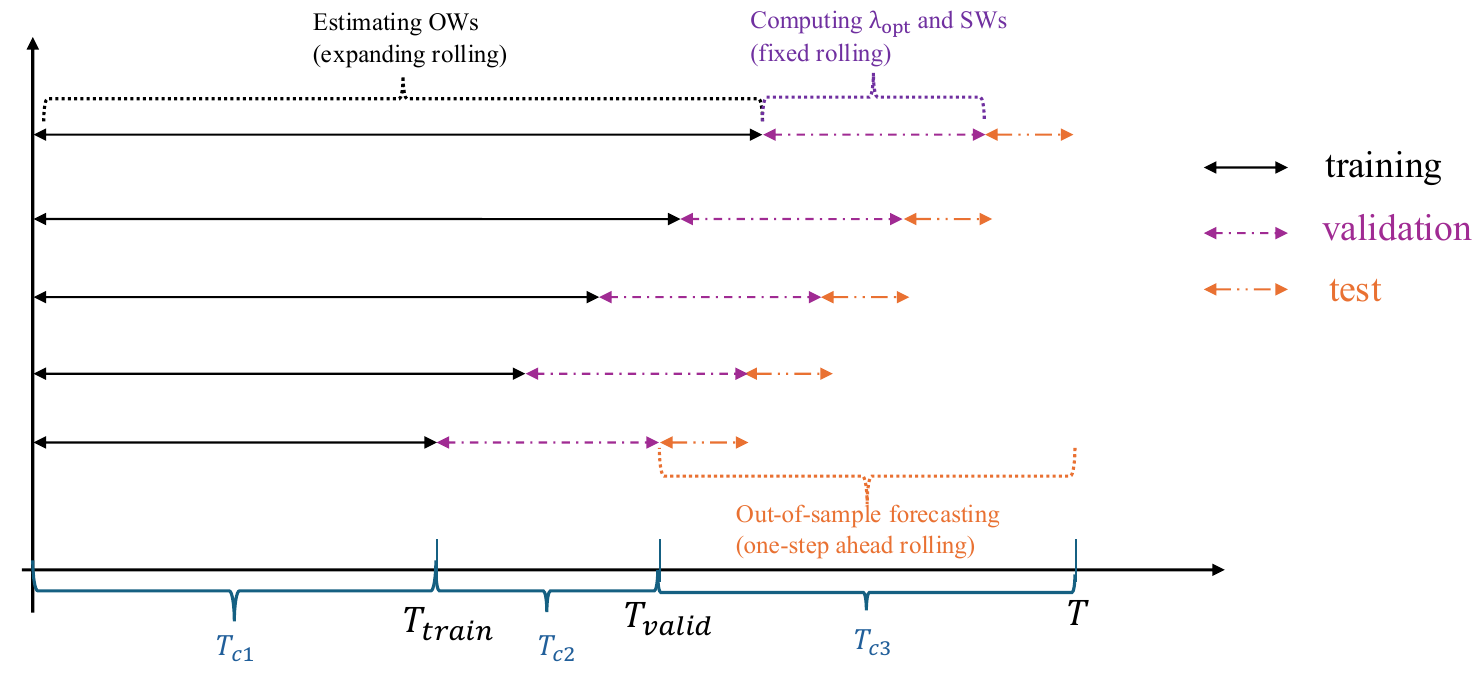}}
        \caption{{\bf Out-of-sample combination forecasting process.} {This figure shows the process of combination forecasting. The whole sample period is divided into three parts. The first $T_{c1}$ months of forecasts are used to estimate OWs, with one-step expanding rolling window; the next $T_{c2}$ months of forecasts are used to compute the optimal shrinkage factor $\lambda_{opt}$ and 2S--ASWs, with one-step fixed window; the final $T_{c3}=T-T_{c2}-T_{c1}$ months (January 2007 to August 2019) are used for out-of-sample evaluation with one-step-ahead combination forecast.}}
        \label{fig:oos forecast process}
    \end{figure}
    \end{landscape}

    \clearpage
    \begin{landscape}
    \begin{figure}
        \centering
        \subfloat[$J=1$]{
            \includegraphics[width=0.4\textwidth]{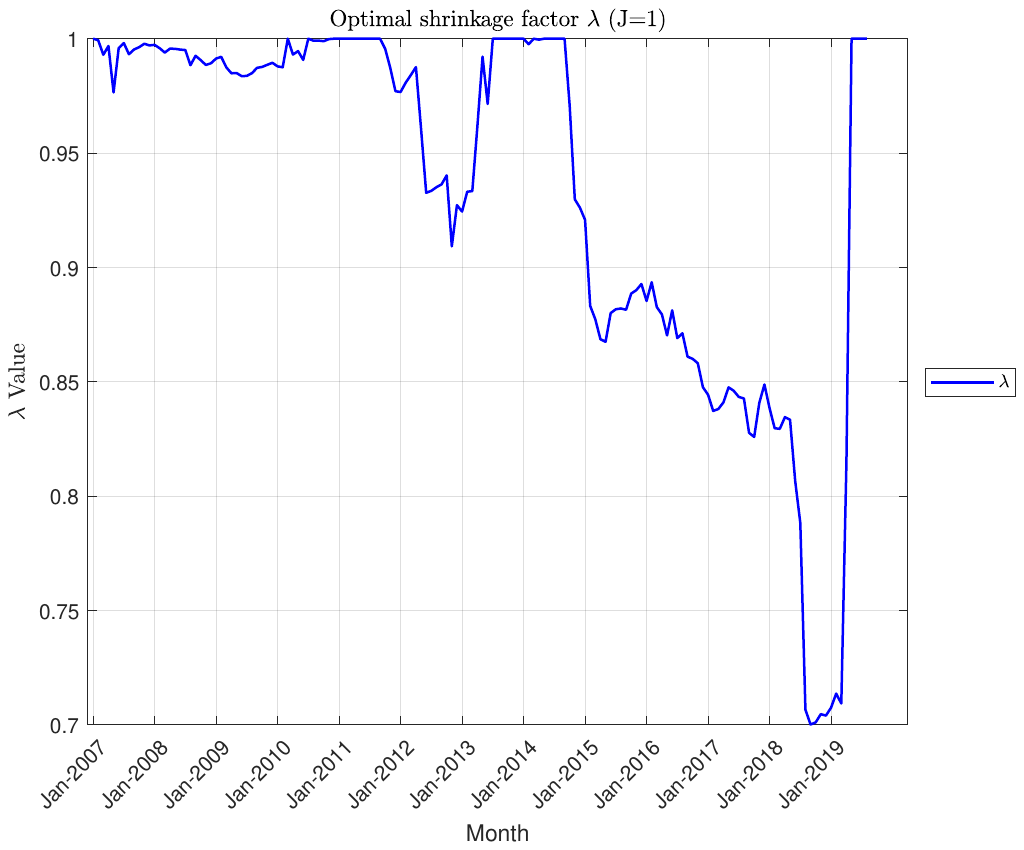}
        }
        \hspace{0.02\textwidth}
        \subfloat[$J=3$]{
            \includegraphics[width=0.4\textwidth]{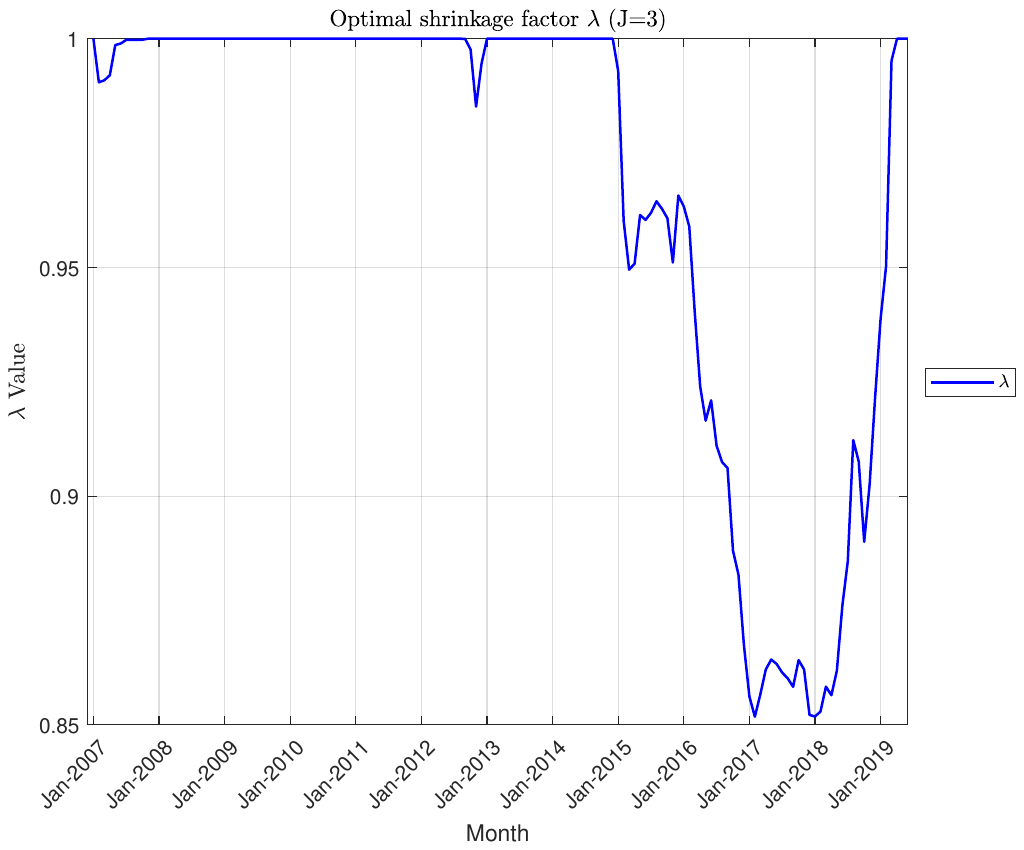}
        }
        \hspace{0.02\textwidth}
        \subfloat[$J=6$]{
            \includegraphics[width=0.4\textwidth]{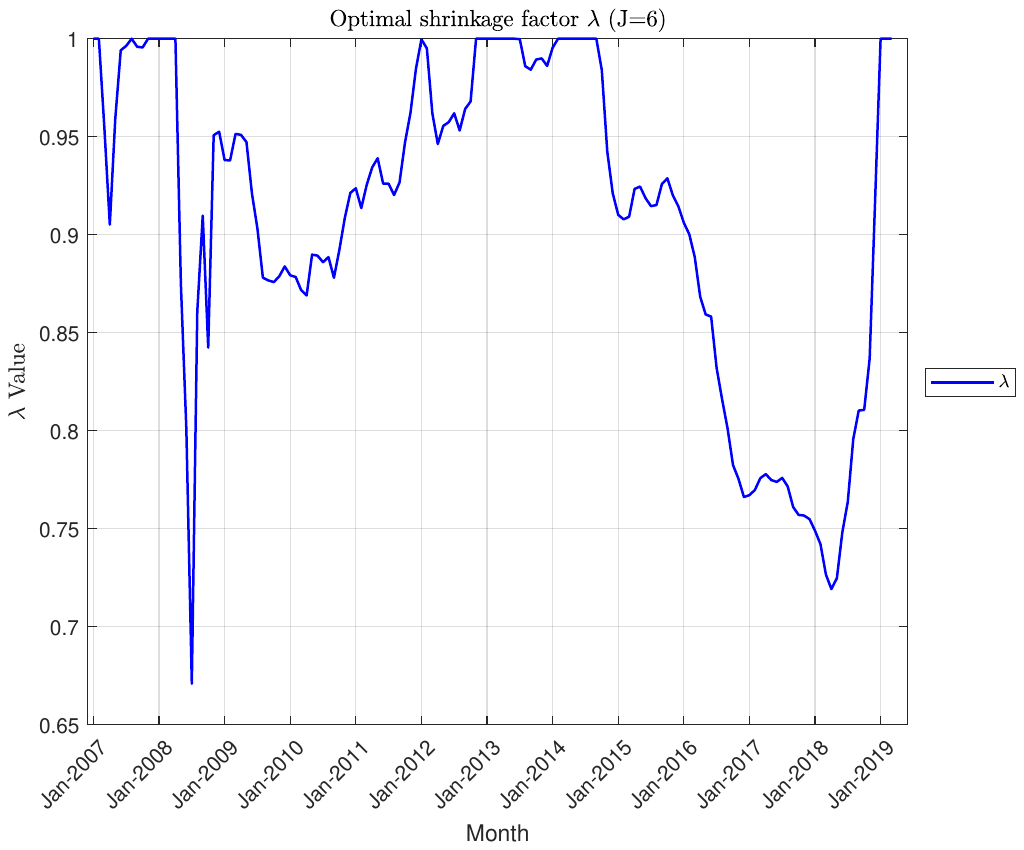}
        }

        \begin{minipage}[t]{0.48\textwidth}
            \centering
            \subfloat[$J=12$]{
                \includegraphics[width=0.9\textwidth]{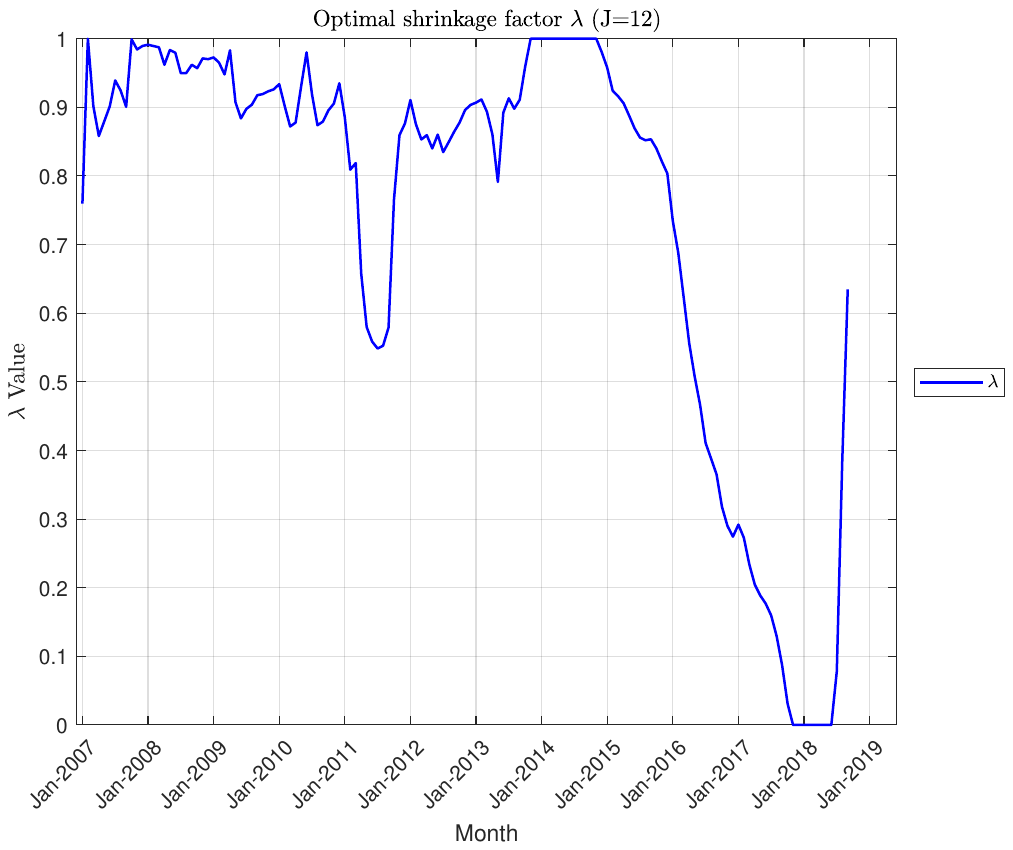}
            }
        \end{minipage}
        \hspace{0.04\textwidth}
        \begin{minipage}[t]{0.48\textwidth}
            \centering
            \subfloat[$J=24$]{
                \includegraphics[width=0.9\textwidth]{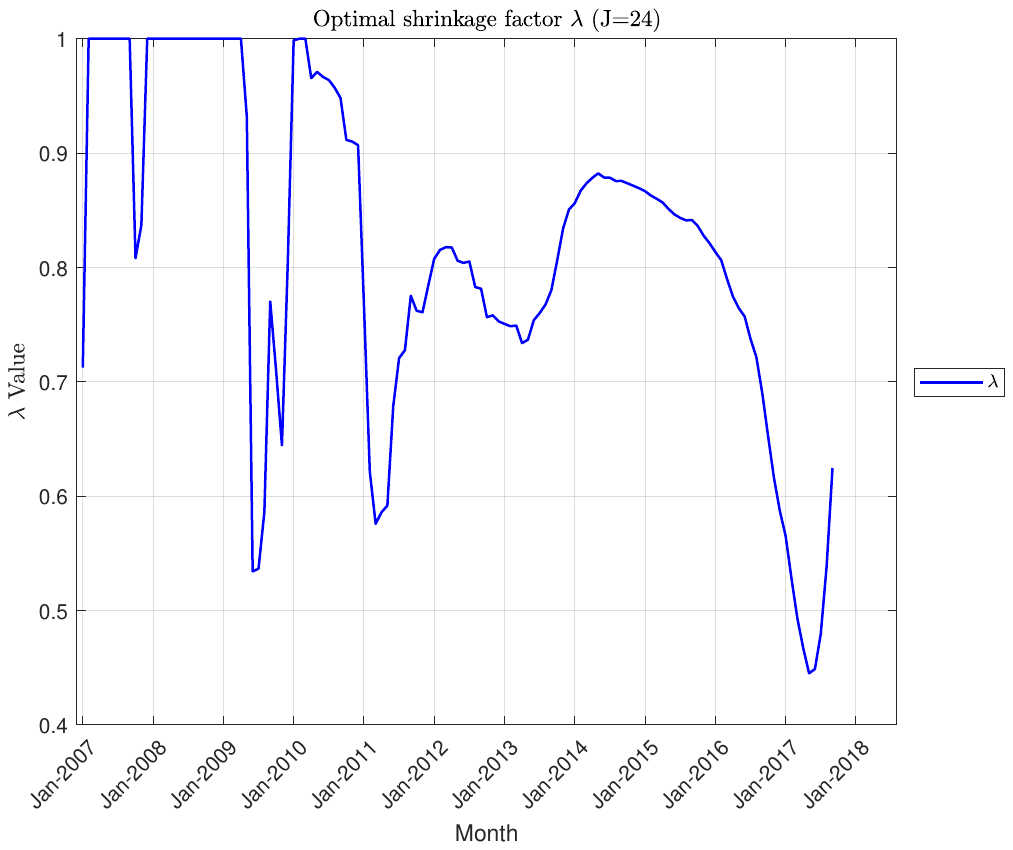}
            }
        \end{minipage}
        \caption{{\bf Optimal shrinkage factor $\lambda_{opt}$ in adaptive shrinkage combination ($T_{c1}=6$, $T_{c2}=126$ when $T_1=42$).} {These figures exhibit the optimal shrinkage factor $\lambda_{opt}$, for $J=1,3,6,12,24$, respectively, which is computed in the validation period $T_{c2}$. If $\lambda_{opt} > 0.5$, 2S--ASWs tend to lean more toward EWs; otherwise, 2S--ASWs tend to lean more toward OWs. The proportions of $\lambda_{opt}=1$ are 19.74\%, 57.33\%, 18.37\%, 9.93\%, 20.93\% and those for $\lambda_{opt}\geq 0.9$ are 66.45\%, 84.67\%, 63.95\%, 44.68\%, 29.46\%, respectively. }}
        \label{fig:lambda_opt}
    \end{figure}
\end{landscape}

    \clearpage
    \begin{landscape}
    	\begin{figure}
         \centerline{\includegraphics[width=8in]{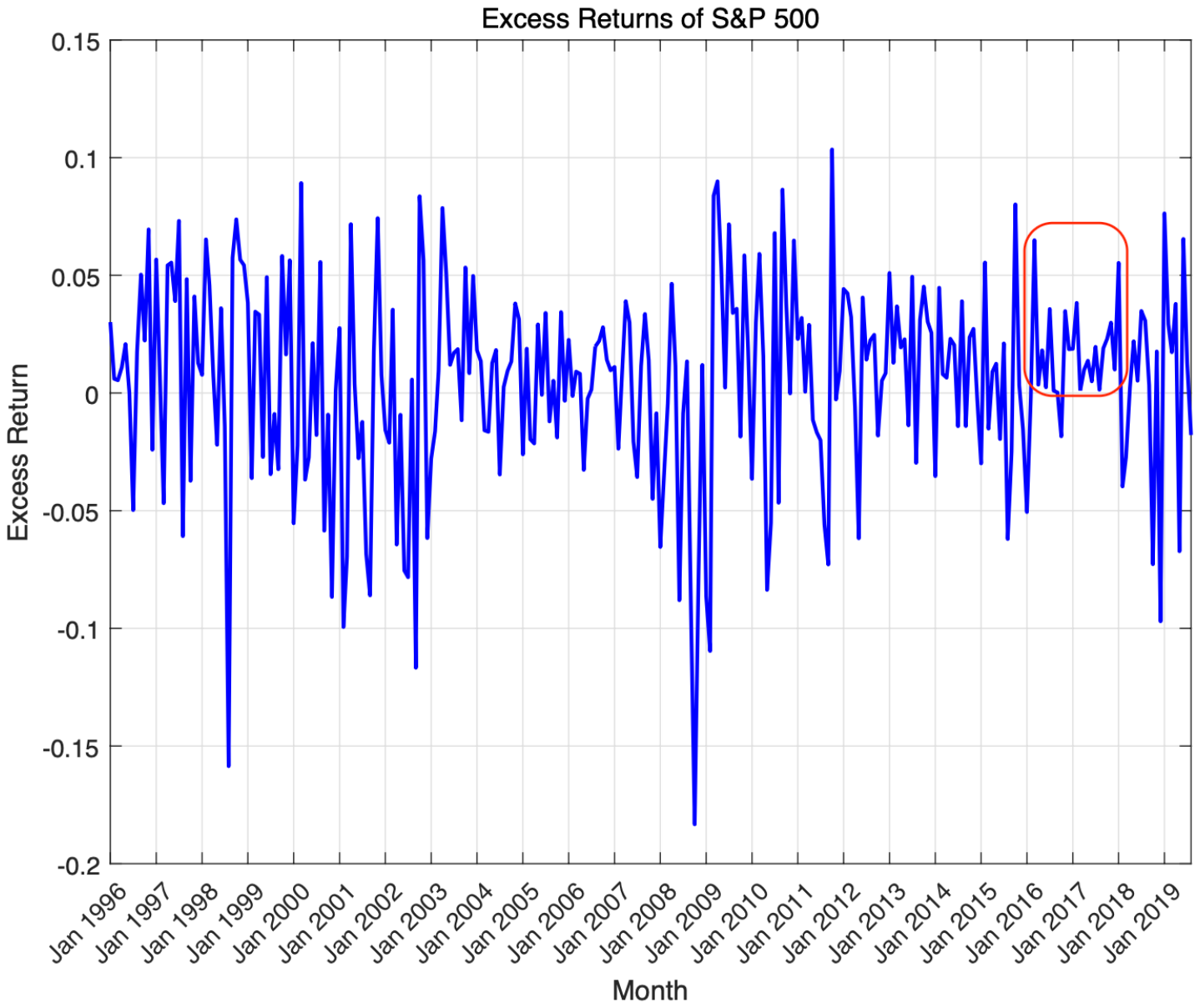}}
        \caption{{\bf Monthly excess returns of S\&P 500.} {This figure plots the time series of monthly excess aggregate market returns on the S\&P 500 index from January 1996 to August 2019.}}
        \label{fig:excess returns}
    \end{figure}
    \end{landscape}

    \clearpage
    \begin{landscape}
    	\begin{figure}
        \centering
        \subfloat[$J=1$]{
            \includegraphics[width=0.4\textwidth]{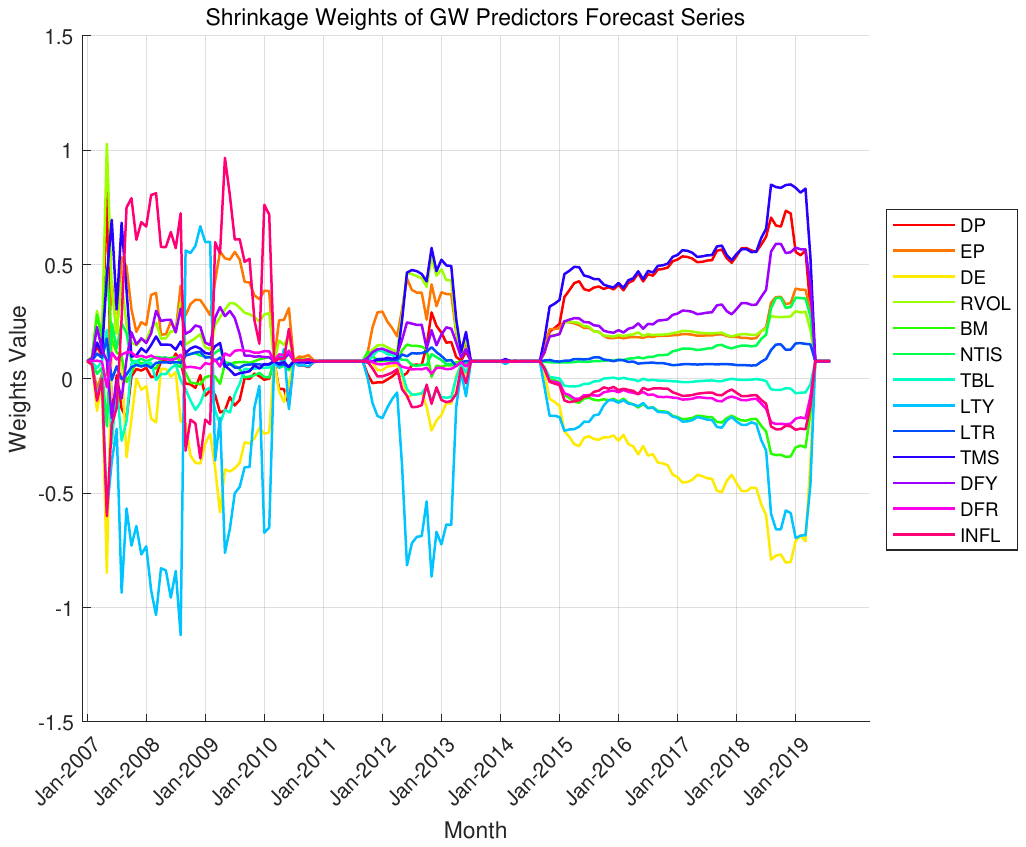}
        }
        \hspace{0.02\textwidth}
        \subfloat[$J=3$]{
            \includegraphics[width=0.4\textwidth]{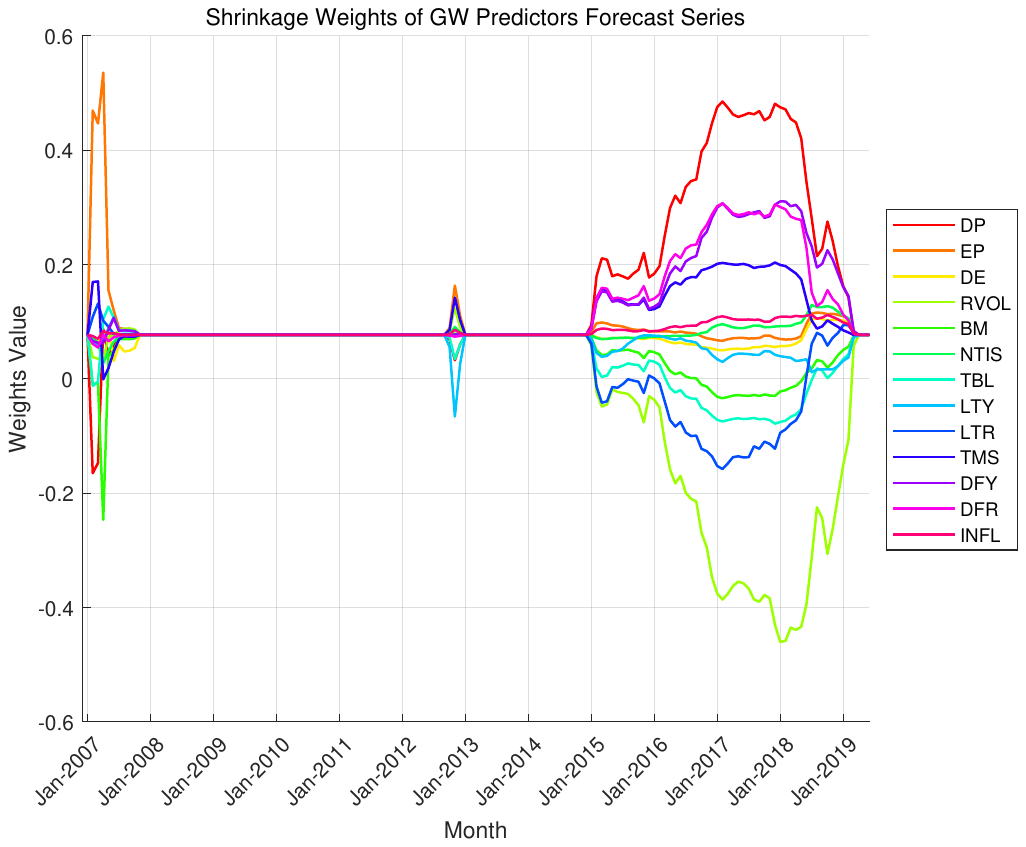}
        }
        \hspace{0.02\textwidth}
        \subfloat[$J=6$]{
            \includegraphics[width=0.4\textwidth]{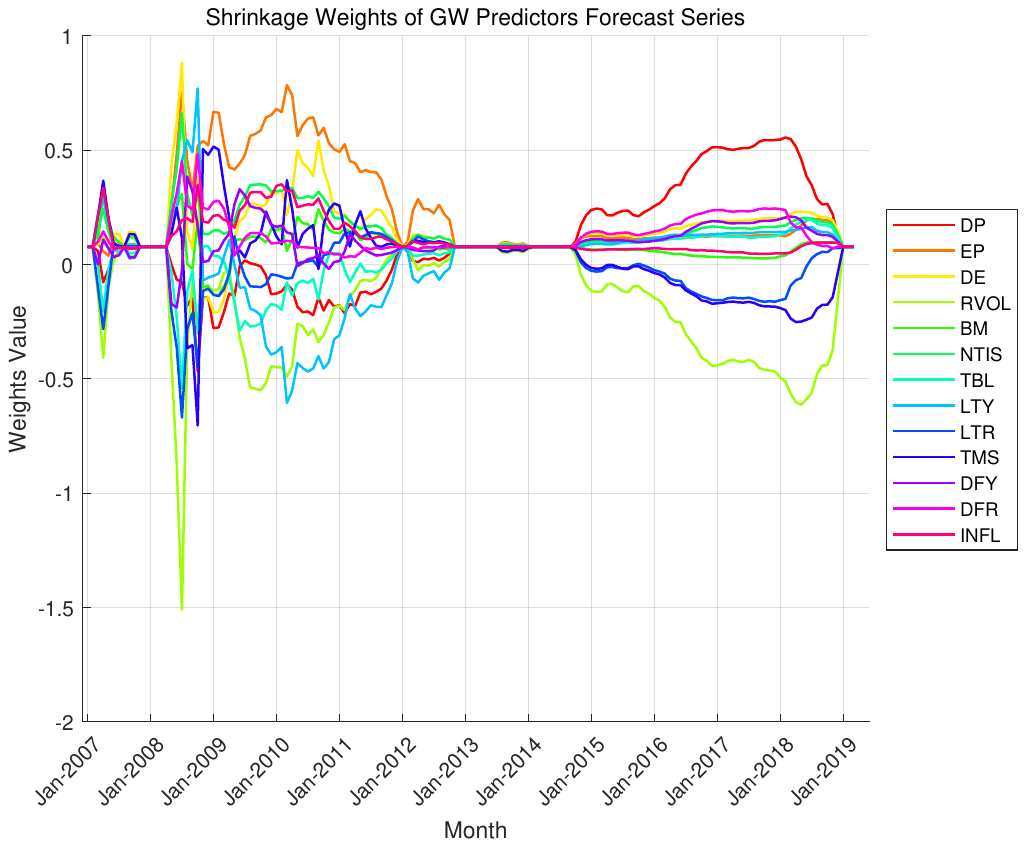}
        }

        \begin{minipage}[t]{0.48\textwidth}
            \centering
            \subfloat[$J=12$]{
                \includegraphics[width=0.9\textwidth]{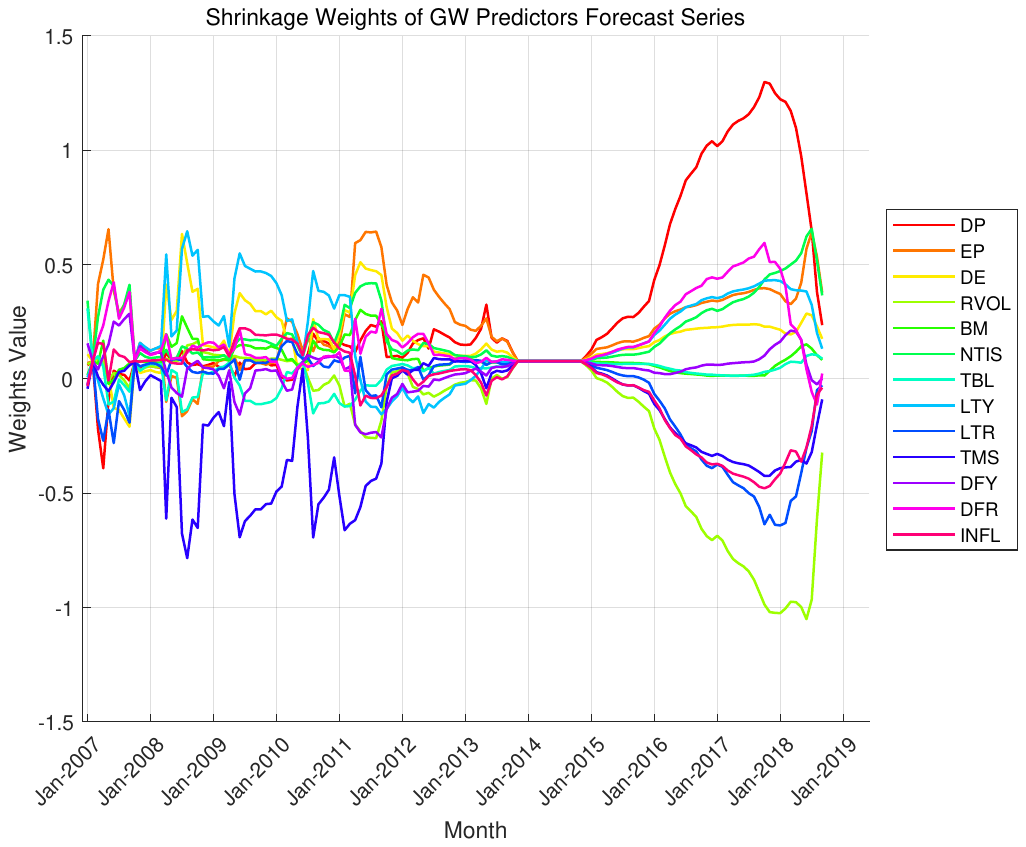}
            }
        \end{minipage}
        \hspace{0.04\textwidth}
        \begin{minipage}[t]{0.48\textwidth}
            \centering
            \subfloat[$J=24$]{
                \includegraphics[width=0.9\textwidth]{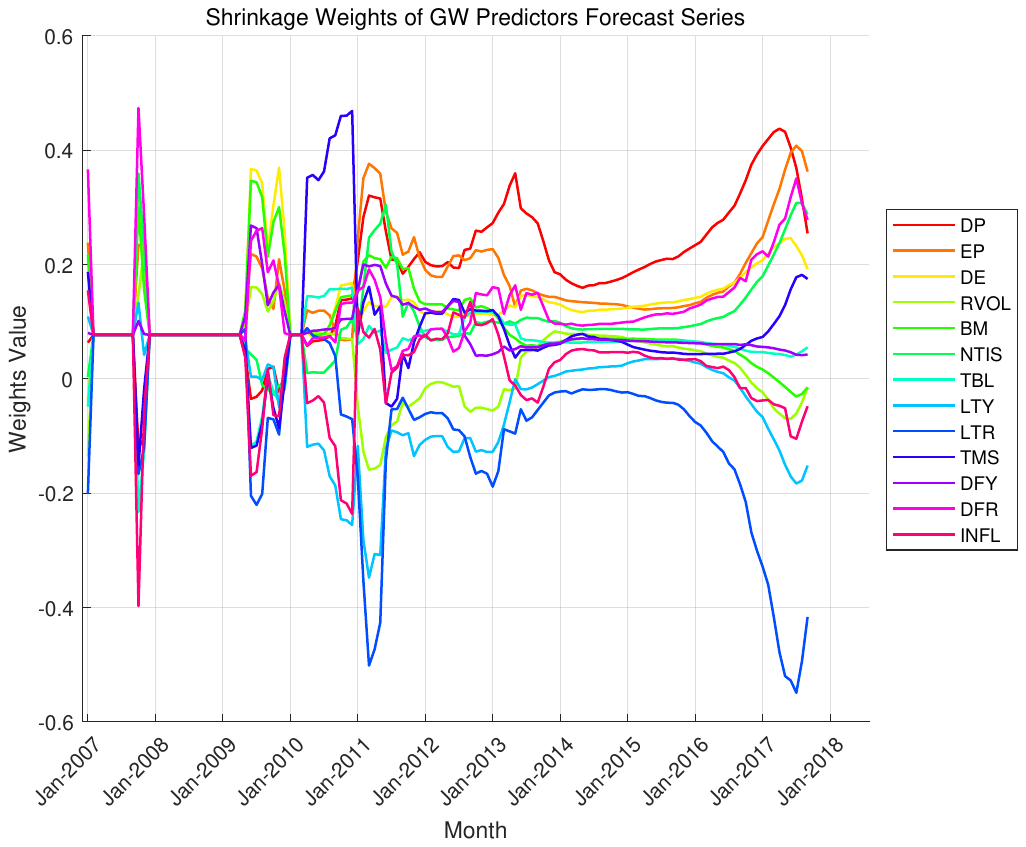}
            }
        \end{minipage}
    		\caption{{\bf Weights in 2S--ASWs assigned to each GW predictor's forecast.} {These figures demonstrate the detailed elements of 2S--ASWs assigned to each GW forecast series, for $J=1,3,6,12,24$, respectively, where 2S--ASWs are computed from the validation period $T_{c2}$. }}
        \label{fig:SWs}
    	\end{figure}
    \end{landscape}

    \clearpage
    \begin{landscape}
    	\begin{figure}
        \centering
        \subfloat[$J=1$]{
            \includegraphics[width=0.4\textwidth]{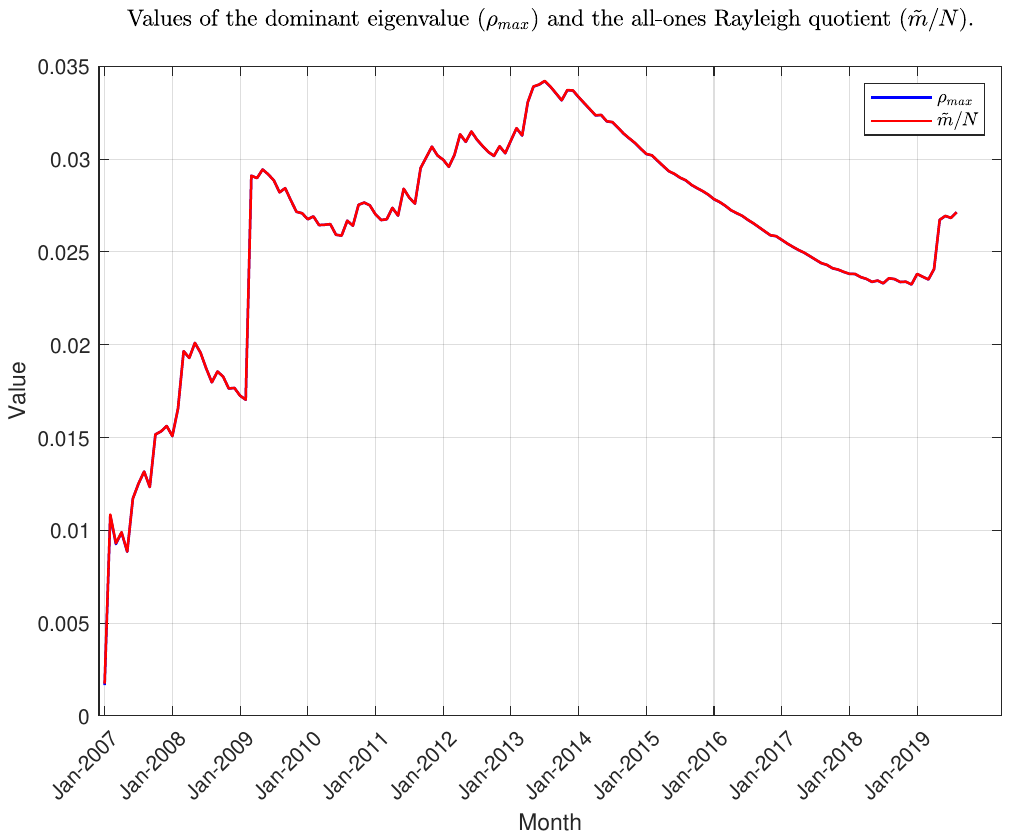}
        }
        \hspace{0.02\textwidth}
        \subfloat[$J=3$]{
            \includegraphics[width=0.4\textwidth]{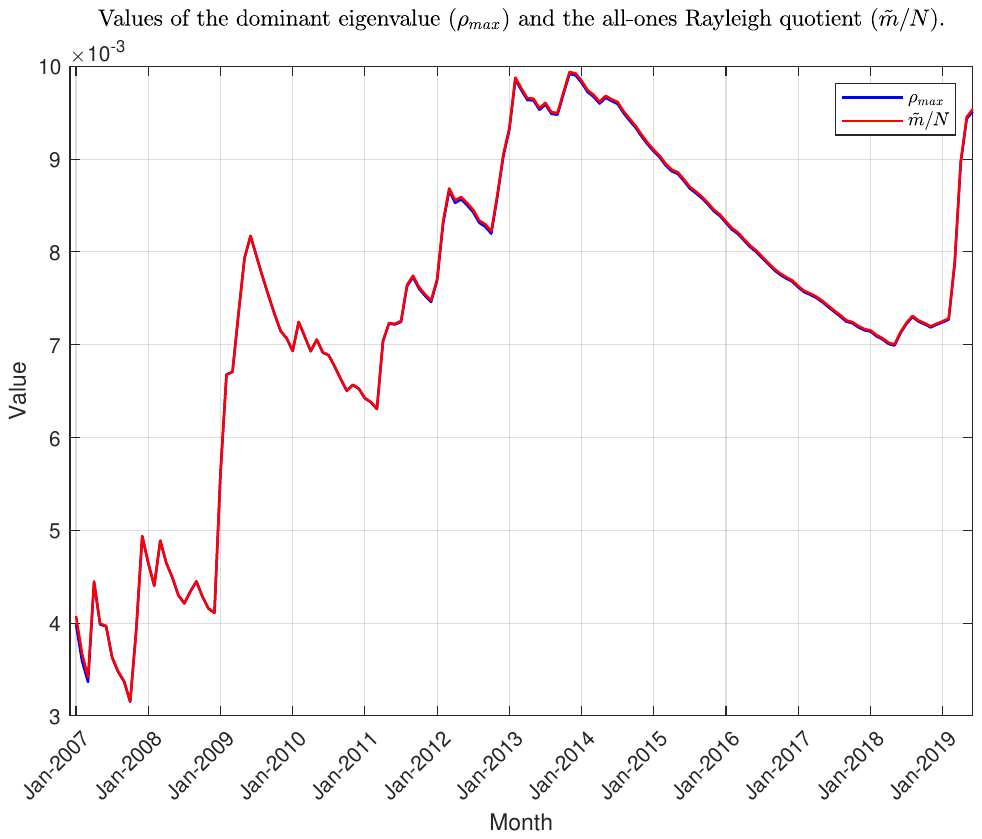}
        }
        \hspace{0.02\textwidth}
        \subfloat[$J=6$]{
            \includegraphics[width=0.4\textwidth]{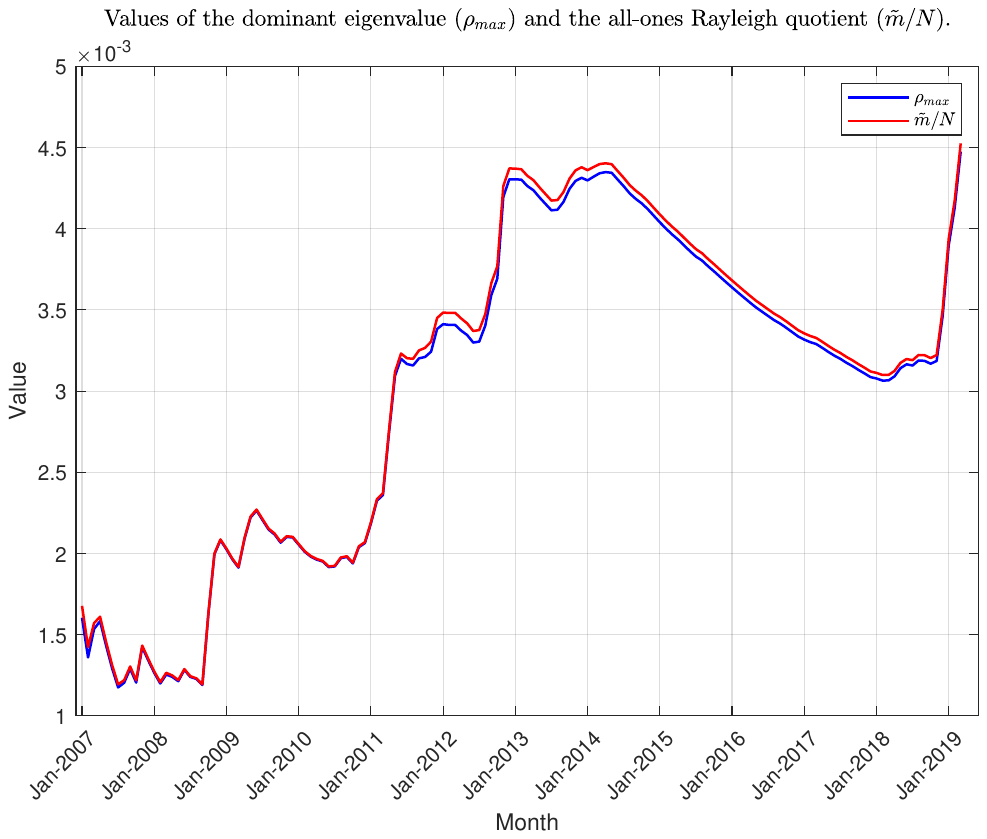}
        }

        \begin{minipage}[t]{0.48\textwidth}
            \centering
            \subfloat[$J=12$]{
                \includegraphics[width=0.9\textwidth]{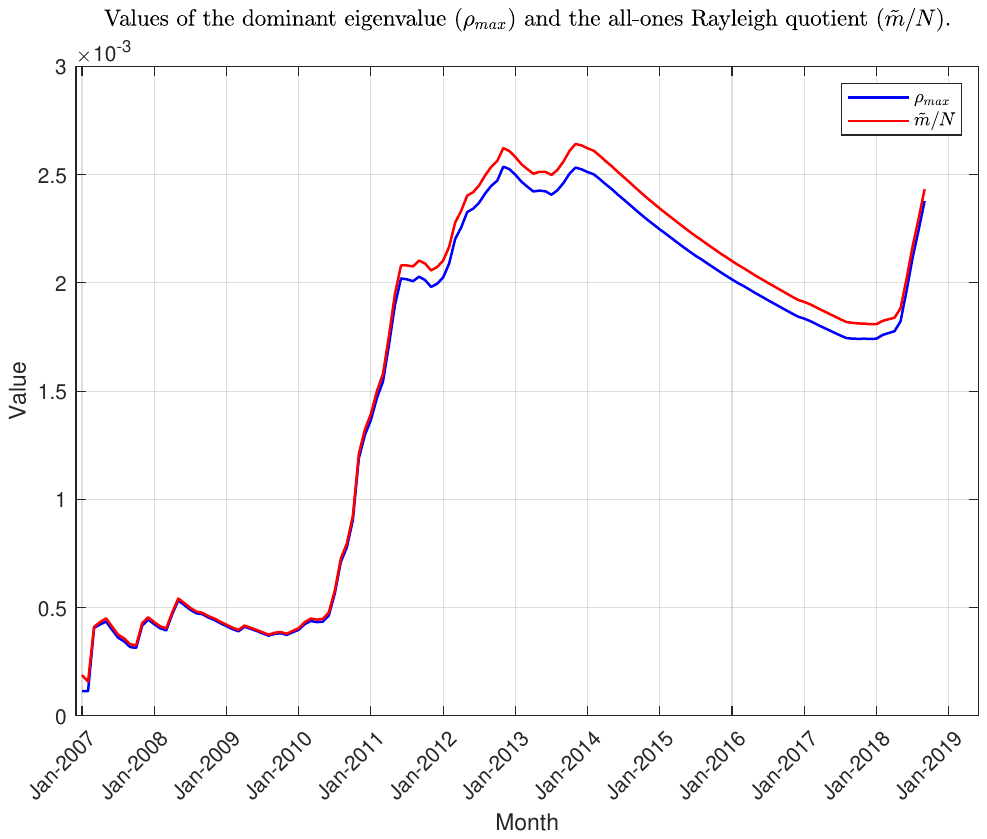}
            }
        \end{minipage}
        \hspace{0.04\textwidth}
        \begin{minipage}[t]{0.48\textwidth}
            \centering
            \subfloat[$J=24$]{
                \includegraphics[width=0.9\textwidth]{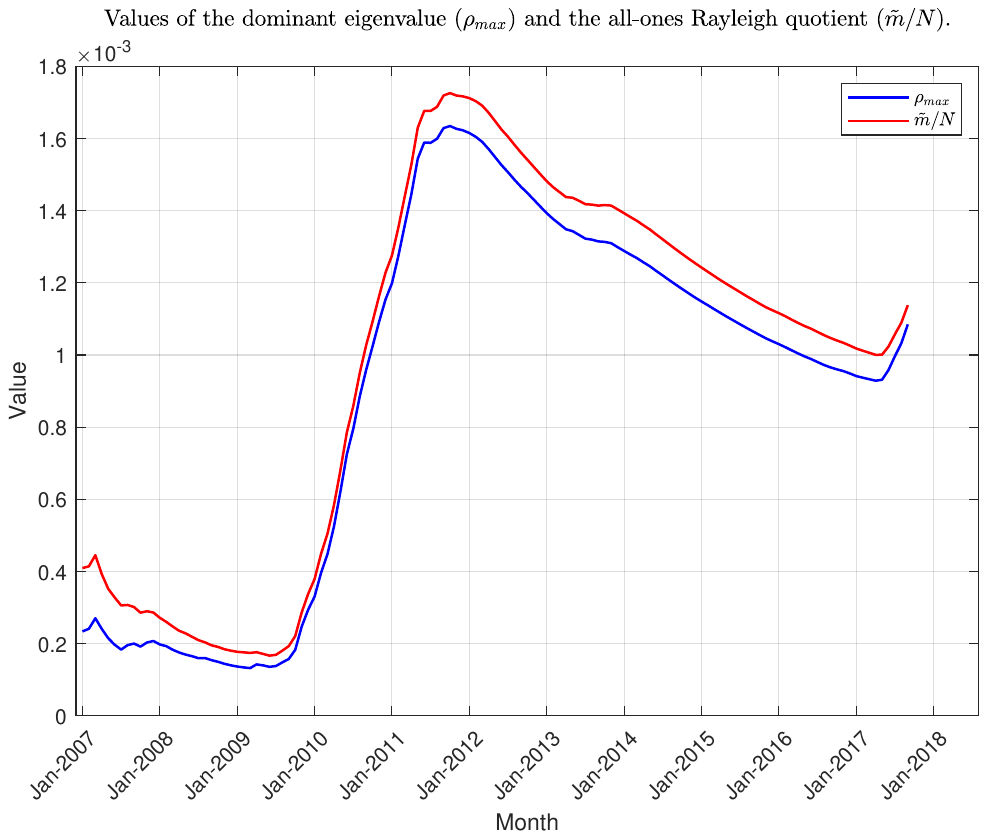}
            }
        \end{minipage}
    		\caption{{\bf The dominant eigenvalue $\rho_{\max}(\mathbf{\Sigma}_1)$ and the all-ones Rayleigh quotient $\frac{\tilde m}{N}$.} {These figures demonstrate the empirically dominant eigenvalue $\rho_{\max}(\mathbf{\Sigma}_1)$, and the all-ones Rayleigh quotient $\frac{\tilde m}{N}$ for $J=1,3,6,12,24$, respectively, where $\rho_{\max}(\mathbf{\Sigma}_1)$ refers to the error covariance matrix in the first-stage optimization problem, $\tilde m=\mathbf{1}^{T}\mathbf{\Sigma}_1\mathbf{1}$ refers to the sum of all the elements in $\mathbf{\Sigma}_1$, and $N=13$ is the number of GW predictors.}}
        \label{fig:delta_1 and m/N}
    	\end{figure}
    \end{landscape}

    \clearpage
    \begin{landscape}
        \begin{table}
        \centering
        \caption{{\bf Summary statistics and correlation matrix.}
        Panel A reports summary statistics (Mean, Median, 1st and 99th percentiles, Std.dev) for the GW predictors using monthly data from January 1996 to August 2019.
        Panel B presents the Pearson correlation coefficients among the predictors. Note: Values of variables marked with $*$ are shown as percentages for readability.}
        \label{tab:summary_and_correlation}

        \textbf{Panel A: Summary Statistics}\\[0.3em]
         \setlength{\tabcolsep}{6pt}

        \end{landscape}

	\clearpage
	\appendix

	\begin{center}
		{\bf \Large Internet Appendix}
	\end{center}

	\pagenumbering{arabic}		
	\renewcommand*{\thepage}{Appendix-\arabic{page}}

	\counterwithin*{table}{section}
	\counterwithin*{figure}{section}
	\setcounter{table}{0}
	\setcounter{figure}{0}
	\setcounter{equation}{0}
	\renewcommand{\thetable}{\Alph{section}.\arabic{table}}
	\renewcommand{\thefigure}{\Alph{section}.\arabic{figure}}
	\renewcommand{\theequation}{\Alph{section}.\arabic{equation}}

\newcommand{\configureappendix}[1]{%
    \gdef\appendixletter{#1}
    \setcounter{equation}{0}%
    \setcounter{figure}{0}%
    \setcounter{table}{0}%
    \setcounter{theorem}{0}%
    \renewcommand{\thesubsection}{#1.\arabic{subsection}}%
    \renewcommand{\thesubsubsection}{#1.\arabic{subsection}.\arabic{subsubsection}}%
    \renewcommand{\theequation}{\appendixprefix.\arabic{equation}}%
    \renewcommand{\thefigure}{\appendixprefix.\arabic{figure}}%
    \renewcommand{\thetable}{\appendixprefix.\arabic{table}}%
    \renewcommand{\thetheorem}{\appendixprefix.\arabic{theorem}}%
    \renewcommand{\thelemma}{\appendixprefix.\arabic{theorem}}%
    \renewcommand{\theremark}{\appendixprefix.\arabic{theorem}}%
    \renewcommand{\thecorollary}{\appendixprefix.\arabic{theorem}}%
    \renewcommand{\theproposition}{\appendixprefix.\arabic{theorem}}%
}

\newcommand{\appendixprefix}{%
    \ifnum\value{subsubsection}>0
        \appendixletter.\arabic{subsection}.\arabic{subsubsection}%
    \else
        \appendixletter.\arabic{subsection}%
    \fi
}

\newcommand{\resettheoremcounter}{%
    \setcounter{theorem}{0}%
}

\newcommand{\resettheoremcountersubsub}{%
    \setcounter{theorem}{0}%
}

\section{Proofs}\label{sec: appendix proof}
\configureappendix{A}

\subsection{Derivation of the Optimal Shrinkage Parameter}\label{sec: appendix derivation}

We provide the detailed derivation of the closed-form solution for the optimal shrinkage parameter $\lambda_{opt}$ used in the adaptive shrinkage combination methodology.

\begin{lemma}[Accuracy-Diversity Decomposition]\label{lemma:decomposition}
The mean squared forecast error of the combined forecast admits the matrix representation:
\begin{equation}\label{eq:matrix_msfe}
\text{MSFE}_{comb} = \mathbf{w}^T\mathbf{S} - \frac{1}{2}\mathbf{w}^T\mathbf{D}\mathbf{w},
\end{equation}
where $\mathbf{w} = (w_1,w_2,\ldots,w_N)^T$ is the weight vector, $\mathbf{S} = (\text{MSFE}_1, \text{MSFE}_2, \ldots, \text{MSFE}_N)^T$ is the vector of individual forecast errors, and $\mathbf{D}$ is the $N \times N$ diversity matrix with elements $D_{ij} = \text{Div}_{ij}$ for $i \neq j$ and $D_{ii} = 0$.
\end{lemma}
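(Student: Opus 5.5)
The plan is to prove the identity by expanding the square and substituting a pointwise algebraic identity that uses the constraint $\mathbf{w}^T\mathbf{1}=1$. Write $e_{ih}=f_{ih}-y_h$ for the individual errors, so that $\text{MSFE}_i=\frac{1}{H}\sum_h e_{ih}^2$. Because the weights sum to one, the combined error is $\sum_i w_i f_{ih}-y_h=\sum_i w_i e_{ih}$. Also note that $\text{Div}_{ij}=\frac{1}{H}\sum_h (f_{ih}-f_{jh})^2=\frac{1}{H}\sum_h (e_{ih}-e_{jh})^2$, since $y_h$ cancels in the difference. The whole statement therefore reduces to an identity in the errors $e_{ih}$ alone.

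The key step is the following observation for fixed $h$: $(e_i-e_j)^2=e_i^2+e_j^2-2e_ie_j$, so $e_ie_j=\tfrac12(e_i^2+e_j^2)-\tfrac12(e_i-e_j)^2$. Substituting this into $\bigl(\sum_i w_ie_i\bigr)^2=\sum_{i,j}w_iw_je_ie_j$ gives
\begin{equation*}
\Bigl(\sum_i w_ie_i\Bigr)^2=\tfrac12\sum_{i,j}w_iw_j(e_i^2+e_j^2)-\tfrac12\sum_{i,j}w_iw_j(e_i-e_j)^2 .
\end{equation*}
Using $\sum_j w_j=1$ once more, the first double sum collapses to $\sum_i w_ie_i^2$. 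The second double sum runs over ordered pairs and has zero diagonal terms, so it equals $\sum_{i<j}w_iw_j(e_i-e_j)^2$.

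Averaging over $h=1,\dots,H$ then yields the scalar decomposition (\ref{eq:decompose MSFEcomb}), $\sum_i w_i\text{MSFE}_i-\sum_{i<j}w_iw_j\text{Div}_{ij}$. For the matrix form, $\mathbf{w}^T\mathbf{S}$ is exactly the first term. Since $\mathbf{D}$ is symmetric with zero diagonal, $\mathbf{w}^T\mathbf{D}\mathbf{w}=2\sum_{i<j}w_iw_jD_{ij}$, so half of it is the second term.

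Every step is routine, so there is no real obstacle. The only point requiring care is that the constraint $\mathbf{w}^T\mathbf{1}=1$ is used twice: once to absorb $y_h$ into the combined error, and once to collapse the first double sum. The proof should state this assumption explicitly, since without it the identity fails, as can be seen by rescaling $\mathbf{w}$. I will also keep the factor $\tfrac12$ consistent between the ordered-pair and unordered-pair sums.
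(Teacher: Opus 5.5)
Your proof is correct, and you are right to flag that the constraint $\mathbf{w}^T\mathbf{1}=1$, left implicit in the statement, is essential. Your route is also shorter than the paper's.

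The paper pivots on the combined forecast $f_{ch}$. It first proves the Krogh--Vedelsby ambiguity decomposition $\sum_i w_i(f_{ih}-y_h)^2=\sum_i w_i(f_{ih}-f_{ch})^2+(f_{ch}-y_h)^2$, using $\sum_i w_i=1$ to remove the cross term. It then converts the ambiguity term into $\sum_{i<j}w_iw_j(f_{ih}-f_{jh})^2$ by writing it as a triple sum and applying $(a-b)(a-c)=\tfrac12[(a-b)^2+(a-c)^2-(b-c)^2]$ with $a=f_{ih}$.

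You instead pivot on $y_h$. Your identity $e_ie_j=\tfrac12(e_i^2+e_j^2)-\tfrac12(e_i-e_j)^2$ is the same three-point identity with $a=y_h$, applied once to $\sum_{i,j}w_iw_je_ie_j$. This bypasses both the intermediate decomposition and the triple sum.

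The paper's route gives the ambiguity identity itself. Its reading (combined error equals weighted individual error minus weighted spread around the combination) is what motivates the name ``accuracy--diversity''.

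Your route lifts directly to the matrix level. Let $E_{ij}=\frac{1}{H}\sum_h e_{ih}e_{jh}$. Then $\mathbf{D}=\mathbf{S}\mathbf{1}^T+\mathbf{1}\mathbf{S}^T-2\mathbf{E}$, so $\mathbf{w}^T\mathbf{S}-\tfrac12\mathbf{w}^T\mathbf{D}\mathbf{w}=\mathbf{w}^T\mathbf{E}\mathbf{w}$ whenever $\mathbf{w}^T\mathbf{1}=1$. This shows the validation objective is a quadratic form in the uncentered error second-moment matrix. That is the same kind of objective as in the paper's two-stage minimum-variance problems. It also gives, in one line, $\mathbf{a}^T\mathbf{D}\mathbf{a}=-2\mathbf{a}^T\mathbf{E}\mathbf{a}\le 0$ for every $\mathbf{a}$ with $\mathbf{a}^T\mathbf{1}=0$. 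The main text uses this sign fact for convexity along the shrinkage path and justifies it only by calling $\mathbf{D}$ a squared-distance matrix.
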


\begin{proof}
Starting from a weighted combined forecast MSFE at observation $h$:
\begin{align}
    \sum_{i=1}^{N}w_i \left(f_{ih}-y_h\right)^2 &= \sum_{i=1}^{N}w_i \left[\left(f_{ih}-f_{ch}\right)+\left(f_{ch}-y_h\right)\right]^2 \notag \\
    &= \sum_{i=1}^{N}w_i\left[\left(f_{ih}-f_{ch}\right)^2 + 2\left(f_{ih}-f_{ch}\right)\left(f_{ch}-y_h\right) + \left(f_{ch}-y_h\right)^2\right] \notag \\
    &= \sum_{i=1}^{N}w_i\left(f_{ih}-f_{ch}\right)^2 + 2\sum_{i=1}^{N}w_i\left(f_{ih}-f_{ch}\right)\left(f_{ch}-y_h\right)+\sum_{i=1}^{N}w_i\left(f_{ch}-y_h\right)^2,
\label{eq:one step MSFEcomb}
\end{align}

\noindent where $f_{ch}=\sum_{i=1}^{N}w_if_{ih}$ represents the combined forecast. Note that
\begin{align*}
    \sum_{i=1}^{N}w_i\left(f_{ih}-f_{ch}\right)\left(f_{ch}-y_h\right) &= \left(f_{ch}-y_h\right)\sum_{i=1}^{N}w_i\left(f_{ih}-f_{ch}\right)  \\
    &= \left(f_{ch}-y_h\right) \left(\sum_{i=1}^{N}w_i f_{ih} - f_{ch}\sum_{i=1}^{N}w_i\right)\\
    &= \left(f_{ch}-y_h\right)\left(f_{ch}-f_{ch}\right) \\
    &= 0,
\end{align*}
and
\begin{align*}
\sum_{i=1}^{N}w_i\left(f_{ch}-y_h\right)^2 = \left(f_{ch}-y_h\right)^2\sum_{i=1}^{N}w_i = \left(f_{ch}-y_h\right)^2.
\end{align*}

\noindent Therefore, equation~(\ref{eq:one step MSFEcomb}) can be simplified to
\begin{align}
    \sum_{i=1}^{N}w_i \left(f_{ih}-y_h\right)^2 &= \sum_{i=1}^{N}w_i\left(f_{ih}-f_{ch}\right)^2 + \left(f_{ch}-y_h\right)^2.
\label{eq:simple one step MSFEcomb}
\end{align}

\noindent Rearranging the terms in equation~(\ref{eq:simple one step MSFEcomb}) yields:
\begin{align*}
    \left(f_{ch}-y_h\right)^2 &= \left(\sum_{i=1}^{N}w_i f_{ih}-y_h\right)^2\\
    &= \sum_{i=1}^{N}w_i \left(f_{ih}-y_h\right)^2-\sum_{i=1}^{N}w_i\left(f_{ih}-f_{ch}\right)^2,
\end{align*}
and naturally
\begin{align}
    \frac{1}{H}\sum_{h=1}^{H}\left(\sum_{i=1}^{N}w_i f_{ih}-y_h\right)^2=\frac{1}{H}
    \sum_{h=1}^{H}\left[\sum_{i=1}^{N}w_i \left(f_{ih}-y_h\right)^2-\sum_{i=1}^{N}w_i\left(f_{ih}-f_{ch}\right)^2\right].
\label{eq:1st decomposition}
\end{align}

Next, we illustrate $\sum_{i=1}^{N}w_i\left(f_{ih}-f_{ch}\right)^2=\sum_{i=1}^{N-1}\sum_{j>i}^{N}w_iw_j\left(f_{ih}-f_{jh}\right)^2$.

\begin{align}
    \sum_{i=1}^{N}w_i\left(f_{ih}-f_{ch}\right)^2 &= \sum_{i=1}^{N} w_i(f_{ih}-\sum_{k=1}^{N}w_kf_{kh})^2 \notag \\
    &= \sum_{i=1}^{N}w_i[\sum_{k=1}^{N}w_k(f_{ih}-f_{kh})]^{2} \notag \\
    &= \sum_{i=1}^{N}w_i[\sum_{k=1}^{N}w_k(f_{ih}-f_{kh})][\sum_{m=1}^{N}w_m(f_{ih}-f_{mh})] \notag \\
    &= \sum_{i=1}^{N}\sum_{k=1}^{N}\sum_{m=1}^{N}w_iw_kw_m(f_{ih}-f_{kh})(f_{ih}-f_{mh})
\label{eq:decomposition second term}
\end{align}

According to the identity $(a-b)(a-c)=\frac{1}{2}[(a-b)^2+(a-c)^2-(b-c)^2]$, and letting $a=f_{ih}$, $b=f_{kh}$, $c=f_{mh}$, equation~(\ref{eq:decomposition second term}) becomes:

{\allowdisplaybreaks
\begin{equation}
\label{eq:three term}
\begin{split}
&\sum_{i=1}^{N}\sum_{k=1}^{N}\sum_{m=1}^{N}w_iw_kw_m(f_{ih}-f_{kh})(f_{ih}-f_{mh}) \\
    &\quad = \frac{1}{2}\sum_{i=1}^{N}\sum_{k=1}^{N}\sum_{m=1}^{N}w_iw_kw_m \left[ (f_{ih}-f_{kh})^2 + (f_{ih}-f_{mh})^2 - (f_{kh}-f_{mh})^2 \right].
\end{split}
\end{equation}

\noindent We compute each term on the right side of equation~(\ref{eq:three term}) separately. Given that $i$, $k$, and $m$ are mutually independent,
\begin{align*}
    \sum_{i=1}^{N}\sum_{k=1}^{N}\sum_{m=1}^{N}w_iw_kw_m(f_{ih}-f_{kh})^2 &= (\sum_{m=1}^{N}w_m)\sum_{i=1}^{N}\sum_{k=1}^{N}w_iw_k(f_{ih}-f_{kh})^2 \\
    &= \sum_{i=1}^{N}\sum_{k=1}^{N}w_iw_k(f_{ih}-f_{kh})^2
    \equiv \sum_{i=1}^{N}\sum_{j=1}^{N}w_iw_j(f_{ih}-f_{jh})^2 \\
    \sum_{i=1}^{N}\sum_{k=1}^{N}\sum_{m=1}^{N}w_iw_kw_m(f_{ih}-f_{mh})^2 &= (\sum_{k=1}^{N}w_k)\sum_{i=1}^{N}\sum_{m=1}^{N}w_iw_m(f_{ih}-f_{mh})^2 \\
    &= \sum_{i=1}^{N}\sum_{m=1}^{N}w_iw_m(f_{ih}-f_{mh})^2
    \equiv  \sum_{i=1}^{N}\sum_{j=1}^{N}w_iw_j(f_{ih}-f_{jh})^2 \\
    \sum_{i=1}^{N}\sum_{k=1}^{N}\sum_{m=1}^{N}w_iw_kw_m(f_{kh}-f_{mh})^2 &= (\sum_{i=1}^{N}w_i)\sum_{k=1}^{N}\sum_{m=1}^{N}w_kw_m(f_{kh}-f_{mh})^2 \\
    &= \sum_{k=1}^{N}\sum_{m=1}^{N}w_kw_m(f_{kh}-f_{mh})^2
    \equiv  \sum_{i=1}^{N}\sum_{j=1}^{N}w_iw_j(f_{ih}-f_{jh})^2
\end{align*}
}

\noindent Therefore,
\begin{align}
    \sum_{i=1}^{N}w_i(f_{ih}-f_{ch})^2 = &\sum_{i=1}^{N}\sum_{k=1}^{N}\sum_{m=1}^{N}w_iw_kw_m(f_{ih}-f_{kh})(f_{ih}-f_{mh}) \notag \\
    &= \frac{1}{2}[\sum_{i=1}^{N}\sum_{j=1}^{N}w_iw_j(f_{ih}-f_{jh})^2+\sum_{i=1}^{N}\sum_{j=1}^{N}w_iw_j(f_{ih}-f_{jh})^2-\sum_{i=1}^{N}\sum_{j=1}^{N}w_iw_j(f_{ih}-f_{jh})^2] \notag \\
    &= \frac{1}{2}\sum_{i=1}^{N}\sum_{j=1}^{N}w_iw_j(f_{ih}-f_{jh})^2 \equiv \sum_{i=1}^{N-1}\sum_{j>i}^{N}w_iw_j(f_{ih}-f_{jh})^2.
\label{eq:f_ch term}
\end{align}

Finally, equations~(\ref{eq:1st decomposition}) and (\ref{eq:f_ch term}) jointly yield:
\begin{align*}
    \frac{1}{H}\sum_{h=1}^{H}\left(\sum_{i=1}^{N}w_i f_{ih}-y_h\right)^2 &= \frac{1}{H}
    \sum_{h=1}^{H}\left[\sum_{i=1}^{N}w_i \left(f_{ih}-y_h\right)^2-\sum_{i=1}^{N}w_i\left(f_{ih}-f_{ch}\right)^2\right] \\
    &= \frac{1}{H} \sum_{h=1}^{H}\left[\sum_{i=1}^{N}w_i \left(f_{ih}-y_h\right)^2-\sum_{i=1}^{N-1}\sum_{j>i}^{N}w_i w_j\left(f_{ih}-f_{jh}\right)^2\right] \\
    &= \mathbf{w}^T\mathbf{S} - \frac{1}{2}\mathbf{w}^T\mathbf{D}\mathbf{w},
\end{align*}

where
\begin{align*}
\mathbf{S}= \begin{bmatrix}
        \frac{1}{H}\sum_{h=1}^{H}(f_{1h}-y_h)^2\\
        \frac{1}{H}\sum_{h=1}^{H}(f_{2h}-y_h)^2\\
        \vdots \\
        \frac{1}{H}\sum_{h=1}^{H}(f_{Nh}-y_h)^2
    \end{bmatrix} = \begin{bmatrix}
        \text{MSFE}_1 \\
        \text{MSFE}_2 \\
        \vdots \\
        \text{MSFE}_N
    \end{bmatrix},
\end{align*}
and
\begin{align*}
\mathbf{D} &=
    \begin{bmatrix}
        0 & \frac{1}{H}\sum_{h=1}^{H}(f_{1h}-f_{2h})^2 & \frac{1}{H}\sum_{h=1}^{H}(f_{1h}-f_{3h})^2 & \cdots & \frac{1}{H}\sum_{h=1}^{H}(f_{1h}-f_{Nh})^2 \\
        \frac{1}{H}\sum_{h=1}^{H}(f_{2h}-f_{1h})^2 & 0 & \frac{1}{H}\sum_{h=1}^{H}(f_{2h}-f_{3h})^2 & \cdots & \frac{1}{H}\sum_{h=1}^{H}(f_{2h}-f_{Nh})^2 \\
        \vdots & \vdots & \vdots & \ddots & \vdots \\
        \frac{1}{H}\sum_{h=1}^{H}(f_{Nh}-f_{1h})^2 & \frac{1}{H}\sum_{h=1}^{H}(f_{Nh}-f_{2h})^2 & \frac{1}{H}\sum_{h=1}^{H}(f_{Nh}-f_{3h})^2 & \cdots & 0
    \end{bmatrix} \\
    &= \begin{bmatrix}
        0 & \text{Div}_{12} & \text{Div}_{13} & \cdots & \text{Div}_{1N} \\
        \text{Div}_{21} & 0 & \text{Div}_{23} & \cdots & \text{Div}_{2N} \\
        \vdots & \vdots & \vdots & \ddots & \vdots \\
        \text{Div}_{N1} & \text{Div}_{N2} & \text{Div}_{N3} & \cdots & 0
    \end{bmatrix}.
\end{align*}
\end{proof}

\begin{theorem}[Optimal Shrinkage Parameter]\label{theorem:optimal_lambda}
The optimal shrinkage parameter that minimizes the combined forecast MSFE is given by:
\begin{equation*}\label{eq:optimal_lambda}
\lambda_{opt} = \max\left\{0, \min\left\{1, \frac{\left(\frac{1}{N}\mathbf{1}^T - \left(\hat{\mathbf{w}}^{ow}\right)^T\right)\left(\mathbf{S} - \mathbf{D}\hat{\mathbf{w}}^{ow}\right)}{\left(\frac{1}{N}\mathbf{1}^T - \left(\hat{\mathbf{w}}^{ow}\right)^T\right)\mathbf{D}\left(\frac{1}{N}\mathbf{1} - \hat{\mathbf{w}}^{ow}\right)}\right\}\right\}.
\end{equation*}
\end{theorem}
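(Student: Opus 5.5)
The plan is to substitute the shrinkage path into Lemma~\ref{lemma:decomposition} and minimize the resulting one-dimensional quadratic over $[0,1]$. Write $\hat{\mathbf{w}}^{\lambda}=\hat{\mathbf{w}}^{ow}+\lambda\mathbf{a}$ with $\mathbf{a}=\frac{1}{N}\mathbf{1}-\hat{\mathbf{w}}^{ow}$. Since $\mathbf{1}^T\mathbf{a}=0$, every point on the path satisfies $(\hat{\mathbf{w}}^{\lambda})^T\mathbf{1}=1$, so the decomposition applies for every $\lambda$. Using the symmetry of $\mathbf{D}$, expansion gives
\begin{equation*}
\text{MSFE}_{comb}(\lambda)=c_0+\lambda\,\mathbf{a}^T\bigl(\mathbf{S}-\mathbf{D}\hat{\mathbf{w}}^{ow}\bigr)-\frac{\lambda^2}{2}\,\mathbf{a}^T\mathbf{D}\mathbf{a},
\end{equation*}
where $c_0=(\hat{\mathbf{w}}^{ow})^T\mathbf{S}-\frac12(\hat{\mathbf{w}}^{ow})^T\mathbf{D}\hat{\mathbf{w}}^{ow}$ does not depend on $\lambda$.

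The key step is the curvature sign, i.e.\ showing $\mathbf{a}^T\mathbf{D}\mathbf{a}\le 0$. I would write $D_{ij}=\frac1H\sum_h (f_{ih}-f_{jh})^2$; this formula also gives $D_{ii}=0$ automatically. Expanding the square gives $D_{ij}=q_i+q_j-2G_{ij}$, where $q_i=\frac1H\sum_h f_{ih}^2$ and $G_{ij}=\frac1H\sum_h f_{ih}f_{jh}$ is a Gram matrix. Because $\mathbf{1}^T\mathbf{a}=0$, the $q_i+q_j$ terms contribute nothing to $\mathbf{a}^T\mathbf{D}\mathbf{a}$. This leaves
\begin{equation*}
\mathbf{a}^T\mathbf{D}\mathbf{a}=-2\,\mathbf{a}^T\mathbf{G}\mathbf{a}=-\frac{2}{H}\sum_h\Bigl(\sum_i a_i f_{ih}\Bigr)^2\le 0,
\end{equation*}
which is the standard conditional negative definiteness of squared Euclidean distance matrices. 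So the quadratic in $\lambda$ is convex, and strictly convex when $\mathbf{a}^T\mathbf{D}\mathbf{a}<0$.

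In the strict case the first-order condition $\mathbf{a}^T(\mathbf{S}-\mathbf{D}\hat{\mathbf{w}}^{ow})-\lambda\,\mathbf{a}^T\mathbf{D}\mathbf{a}=0$ gives the stated $\lambda^*$. By strict convexity the minimizer over the interval $[0,1]$ is the projection of $\lambda^*$ onto that interval: the objective is decreasing to the left of $\lambda^*$ and increasing to its right. This yields $\lambda_{opt}=\max\{0,\min\{1,\lambda^*\}\}$.

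The main obstacle is the degenerate case $\mathbf{a}^T\mathbf{D}\mathbf{a}=0$. There the formula is undefined and the objective is linear in $\lambda$. This happens when $\hat{\mathbf{w}}^{ow}=\mathbf{w}^{ew}$, the setting of Proposition~\ref{proposition 1}, where every $\lambda$ is optimal. It can also happen when $\sum_i a_i f_{ih}=0$ for all $h$, in which case the combined forecast is the same along the whole path. I would state that the theorem is understood under the nondegeneracy assumption $\mathbf{a}^T\mathbf{D}\mathbf{a}<0$, as the main text does. In the degenerate case any minimizer of the linear function on $[0,1]$, namely an endpoint, or any point when the objective is constant, is optimal.
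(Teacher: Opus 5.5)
Your proposal is correct and follows the paper's route: substitute the shrinkage path into the accuracy--diversity decomposition, solve the first-order condition of the resulting quadratic in $\lambda$, and clip $\lambda^*$ to $[0,1]$. You also supply what the appendix proof omits. You prove $\mathbf{a}^T\mathbf{D}\mathbf{a}\le 0$, which the main text only asserts, and use it to justify that clipping gives the constrained minimizer; in addition, your degenerate case always has a constant objective, since $\mathbf{a}^T\mathbf{D}\mathbf{a}=0$ forces $\sum_i a_i f_{ih}=0$ for every $h$, so the numerator of $\lambda^*$ vanishes too.
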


\begin{proof}
Substituting the adaptive shrinkage weight specification $\hat{\mathbf{w}}^\lambda = \frac{\lambda}{N}\mathbf{1} + (1-\lambda)\hat{\mathbf{w}}^{ow}$ into equation (\ref{eq:matrix_msfe}), we obtain the objective function:
\begin{align*}
f(\lambda) &= \left(\frac{\lambda}{N}\mathbf{1} + (1-\lambda)\hat{\mathbf{w}}^{ow}\right)^T\mathbf{S} \nonumber \\
&\quad - \frac{1}{2}\left(\frac{\lambda}{N}\mathbf{1} + (1-\lambda)\hat{\mathbf{w}}^{ow}\right)^T\mathbf{D}\left(\frac{\lambda}{N}\mathbf{1} + (1-\lambda)\hat{\mathbf{w}}^{ow}\right).
\end{align*}

\noindent Expanding the quadratic term:
\begin{align*}
&\left(\frac{\lambda}{N}\mathbf{1} + (1-\lambda)\hat{\mathbf{w}}^{ow}\right)^T\mathbf{D}\left(\frac{\lambda}{N}\mathbf{1} + (1-\lambda)\hat{\mathbf{w}}^{ow}\right) \nonumber \\
&= \frac{\lambda^2}{N^2}\mathbf{1}^T\mathbf{D}\mathbf{1} + 2\frac{\lambda(1-\lambda)}{N}\mathbf{1}^T\mathbf{D}\hat{\mathbf{w}}^{ow} + (1-\lambda)^2\left(\hat{\mathbf{w}}^{ow}\right)^T\mathbf{D}\hat{\mathbf{w}}^{ow}.
\end{align*}

\noindent Taking the first-order derivative with respect to $\lambda$:
\begin{align*}
\frac{\partial f(\lambda)}{\partial \lambda} &= \left(\frac{1}{N}\mathbf{1}^T - \left(\hat{\mathbf{w}}^{ow}\right)^T\right)\mathbf{S} \nonumber \\
&\quad - \left[\frac{\lambda}{N^2}\mathbf{1}^T\mathbf{D}\mathbf{1} + \frac{(1-2\lambda)}{N}\mathbf{1}^T\mathbf{D}\hat{\mathbf{w}}^{ow} - (1-\lambda)\left(\hat{\mathbf{w}}^{ow}\right)^T\mathbf{D}\hat{\mathbf{w}}^{ow}\right] \nonumber \\
&= \left(\frac{1}{N}\mathbf{1}^T - \left(\hat{\mathbf{w}}^{ow}\right)^T\right)\mathbf{S} \nonumber \\
&\quad - \lambda\left(\frac{1}{N}\mathbf{1}^T - \left(\hat{\mathbf{w}}^{ow}\right)^T\right)\mathbf{D}\left(\frac{1}{N}\mathbf{1} - \hat{\mathbf{w}}^{ow}\right)- \left(\frac{1}{N}\mathbf{1}^T -\left(\hat{\mathbf{w}}^{ow}\right)^T\right)\mathbf{D}\hat{\mathbf{w}}^{ow}.
\end{align*}

\noindent Setting the derivative equal to zero:
\begin{align*}
&\left(\frac{1}{N}\mathbf{1}^T - \left(\hat{\mathbf{w}}^{ow}\right)^T\right)\mathbf{S} - \left(\frac{1}{N}\mathbf{1}^T - \left(\hat{\mathbf{w}}^{ow}\right)^T\right)\mathbf{D}\hat{\mathbf{w}}^{ow} \nonumber \\
&\quad = \lambda\left(\frac{1}{N}\mathbf{1}^T - \left(\hat{\mathbf{w}}^{ow}\right)^T\right)\mathbf{D}\left(\frac{1}{N}\mathbf{1} - \hat{\mathbf{w}}^{ow}\right).
\end{align*}

\noindent Solving for $\lambda$ yields the unconstrained optimum:
\begin{equation*}
\lambda^* = \frac{\left(\frac{1}{N}\mathbf{1}^T - \left(\hat{\mathbf{w}}^{ow}\right)^T\right)\left(\mathbf{S} - \mathbf{D}\hat{\mathbf{w}}^{ow}\right)}{\left(\frac{1}{N}\mathbf{1}^T - \left(\hat{\mathbf{w}}^{ow}\right)^T\right)\mathbf{D}\left(\frac{1}{N}\mathbf{1} - \hat{\mathbf{w}}^{ow}\right)}.
\end{equation*}

The constraint $\lambda \in [0,1]$ ensures economic interpretability and numerical stability.
\end{proof}

\subsection{Theoretical Analysis of Adaptive Shrinkage Weights}\label{sec:appendix A.2}

\renewcommand{\theequation}{\thesubsection.\arabic{equation}}
\setcounter{equation}{0}

We establish the theoretical properties of adaptive shrinkage weights through a two-stage optimization framework that illuminates the conditions under which shrinkage preserves optimality.

\subsubsection{Framework Setup}
\resettheoremcounter

Consider $N$ forecasts with covariance matrices $\mathbf{\Sigma}_1$ and $\mathbf{\Sigma}_2$ representing consecutive periods, both symmetric and positive definite. Define the constraint set:
\begin{equation*}
\mathcal{S} = \left\{\mathbf{w} \in \mathbb{R}^N : \mathbf{w}^T \mathbf{1} = 1\right\}.
\end{equation*}

The first-stage problem determines optimal weights under the initial covariance structure $\mathbf{\Sigma}_1$:
\begin{align}\label{eq:first_stage}
\min_{\mathbf{w} \in \mathcal{S}} \mathbf{w}^T \mathbf{\Sigma}_1 \mathbf{w},
\end{align}
with solution:
\begin{equation*}
\hat{\mathbf{w}}^{ow} = \frac{\mathbf{\Sigma}_1^{-1}\mathbf{1}}{\mathbf{1}^{T} \mathbf{\Sigma}_1^{-1} \mathbf{1}}.
\end{equation*}

The adaptive shrinkage weights are defined as:
\begin{equation*}
\hat{\mathbf{w}}^{\lambda} = \frac{\lambda}{N}\mathbf{1} + (1-\lambda)\hat{\mathbf{w}}^{ow},
\end{equation*}
where $\frac{1}{N}\mathbf{1}$ represents equal weights $\mathbf{w}^{ew} = \left(\frac{1}{N}, \frac{1}{N}, \ldots, \frac{1}{N}\right)^T$.

The second-stage problem evaluates adaptive shrinkage weights under the updated covariance structure $\mathbf{\Sigma}_2$:
\begin{equation}\label{eq:second_stage}
\min_{\lambda \in [0,1]} \left(\hat{\mathbf{w}}^{\lambda}\right)^T \mathbf{\Sigma}_2 \hat{\mathbf{w}}^{\lambda}.
\end{equation}

\subsubsection{Main Results} We present a detailed explanation of the three propositions mentioned in Section \ref{sec:shrinkage method}.
\resettheoremcounter

\noindent \textbf{Proposition~\ref{proposition 1}} [A Special Case] \label{prop:ow=ew} $\hat{\mathbf{w}}^{\lambda}=\hat{\mathbf{w}}^{ow} = \mathbf{w}^{ew}$ if and only if $\mathbf{1}$ is an eigenvector of $\mathbf{\Sigma}_1$.

\begin{remark}
    If neither $\mathbf{\Sigma}_1$ nor $\mathbf{\Sigma}_2$ has $\mathbf{1}$ as an eigenvector, then the equal weights $\mathbf{w}^{ew}$ are not guaranteed to be optimal. However, as long as either $\mathbf{\Sigma}_1$ or $\mathbf{\Sigma}_2$ has $\mathbf{1}$ as an eigenvector, the equal weights $\mathbf{w}^{ew}$ remain optimal throughout.
\end{remark}

\begin{proof}
\textbf{Observation:}
If a vector $\mathbf{L}$ is an eigenvector of a matrix $\mathbf{A}$, it is also an eigenvector of the inverse matrix $\mathbf{A^{-1}}$, as $\mathbf{A}\cdot \mathbf{L} = \delta\cdot \mathbf{L}$, and $\mathbf{A}^{-1}\cdot \mathbf{L} = \frac{1}{\delta}\cdot \mathbf{L}$, where $\delta$ is the eigenvalue corresponding to the eigenvector $\mathbf{L}$ of matrix $\mathbf{A}$.

If $\mathbf{1}$ is an eigenvector of $\mathbf{\Sigma}_1$, it suggests $\mathbf{\Sigma}_1^{-1}\cdot \mathbf{1} = \frac{1}{\delta_1}\cdot \mathbf{1}$, where $\delta_1$ is the eigenvalue corresponding to the eigenvector $\mathbf{1}$ of matrix $\mathbf{\Sigma}_1$. Denote $m=\mathbf{1}^{T} \mathbf{\Sigma}_1^{-1}\mathbf{1}$, we have:
\begin{align*}
    m = \mathbf{1}^{T} \mathbf{\Sigma}_1^{-1}\mathbf{1} = \mathbf{1}^T \frac{1}{\delta_1}\mathbf{1} = \frac{1}{\delta_1}\mathbf{1}^T\mathbf{1} = \frac{N}{\delta_1}.
\end{align*}

\noindent Therefore,
\begin{align*}
    \hat{\mathbf{w}}^{ow} = \frac{\mathbf{\Sigma}_1^{-1}\mathbf{1}}{\mathbf{1}^{T} \mathbf{\Sigma}_1^{-1} \mathbf{1}}=\frac{1}{\delta_1 m}\mathbf{1}=\frac{1}{\delta_1}\cdot\frac{\delta_1}{N}\mathbf{1}=\frac{1}{N}\mathbf{1}=\mathbf{w}^{ew}.
\end{align*}

In this case, the adaptive shrinkage weights reduce to $\hat{\mathbf{w}}^{\lambda}=\lambda \mathbf{w}^{ew} + (1-\lambda)\hat{\mathbf{w}}^{ow}=\hat{\mathbf{w}}^{ow}=\mathbf{w}^{ew}$ for any $\lambda \in [0,1]$. This indicates that when the optimal weights equal the equal weights, 2S--ASWs will invariably yield the same result regardless of the shrinkage parameter choice.

Similarly, if $\mathbf{1}$ is an eigenvector of $\mathbf{\Sigma_2}$, it suggests $\mathbf{\Sigma_2^{-1}}\mathbf{1} = \frac{1}{\delta_2}\mathbf{1}$, where $\delta_2$ is the eigenvalue corresponding to the eigenvector $\mathbf{1}$ of matrix $\mathbf{\Sigma}_2$. Therefore, the solution of the following optimization problem,
\begin{equation}\label{eq:general two-stage problem}
\min_{\mathbf{w} \in \mathcal{S}} \mathbf{w}^T \mathbf{\Sigma}_2 \mathbf{w},
\end{equation}

\noindent will be $\mathbf{w}^{ew}$. Note that, the feasible set of second-stage problem~(\ref{eq:second_stage}) is a subset of that of problem~(\ref{eq:general two-stage problem}), and $\mathbf{w}^{ew}$ (i.e., $\lambda=1$) is the unique global minimizer of problem~(\ref{eq:general two-stage problem}). As a result, it follows that $\mathbf{w}^{ew}$ ($\lambda=1$) is also the optimal solution to the constrained second-stage problem~(\ref{eq:second_stage}).

In contrast, if both $\mathbf{\Sigma}_1$ and $\mathbf{\Sigma}_2$ do not have $\mathbf{1}$ as an eigenvector, then $\hat{\mathbf{w}}^{ow} \neq \mathbf{w}^{ew}$ and $\mathbf{w}^{ew}$ is not the unconstrained minimum-variance solution under $\mathbf{\Sigma}_2$. Therefore, equal weights are not guaranteed to be optimal in the second-stage optimization problem, although they can still be obtained mechanically by setting $\lambda=1$.

Conversely, suppose that \(\hat{\mathbf w}^{ow}=\mathbf w^{ew}\). Then
\begin{equation*}
\frac{\mathbf\Sigma_1^{-1}\mathbf 1}
{\mathbf 1^T\mathbf\Sigma_1^{-1}\mathbf 1}
=\frac{1}{N}\mathbf 1.
\end{equation*}
Let $m=\mathbf 1^T\mathbf\Sigma_1^{-1}\mathbf 1$. Then
\begin{equation*}
\mathbf\Sigma_1^{-1}\mathbf 1=\frac{m}{N}\mathbf 1,
\end{equation*}
so $\mathbf 1$ is an eigenvector of $\mathbf\Sigma_1^{-1}$, and therefore
also of $\mathbf\Sigma_1$.

\end{proof}

In the subsequent analysis, we focus on the general scenario where $\hat{\mathbf{w}}^{ow} \neq \mathbf{w}^{ew}$.

\noindent \textbf{Proposition~\ref{proposition 3}} [Stationarity Result]\label{prop:stationarity}
Under covariance stationarity ($\mathbf{\Sigma}_1 = \mathbf{\Sigma}_2 = \mathbf{\Sigma}$), the second-stage optimum is attained at $\lambda_{opt}=0$, and therefore adaptive shrinkage weights coincide with optimal weights: $\hat{\mathbf{w}}^{\lambda} = \hat{\mathbf{w}}^{ow}$.

\begin{proof}
Under stationarity, the second-stage objective function is:
\begin{align*}
f(\lambda) = \left[\lambda \mathbf{w}^{ew} + (1-\lambda) \hat{\mathbf{w}}^{ow}\right]^T \mathbf{\Sigma} \left[\lambda \mathbf{w}^{ew} + (1-\lambda) \hat{\mathbf{w}}^{ow}\right].
\end{align*}

The first-order derivative with respect to $\lambda$ is:
\begin{align*}
\frac{\partial f}{\partial \lambda} &= 2 \left[\lambda \mathbf{w}^{ew} + (1-\lambda) \hat{\mathbf{w}}^{ow}\right]^T \mathbf{\Sigma} \left(\mathbf{w}^{ew} - \hat{\mathbf{w}}^{ow}\right) \nonumber \\
&= 2\lambda \left(\mathbf{w}^{ew} - \hat{\mathbf{w}}^{ow}\right)^T \mathbf{\Sigma} \left(\mathbf{w}^{ew} - \hat{\mathbf{w}}^{ow}\right) \nonumber \\
&\quad + 2 \left(\hat{\mathbf{w}}^{ow}\right)^T \mathbf{\Sigma} \left(\mathbf{w}^{ew} - \hat{\mathbf{w}}^{ow}\right).
\end{align*}

\noindent Let $\frac{\partial f}{\partial \lambda}=0$, we obtain that
\begin{equation}
    \lambda = -\frac{\left(\hat{\mathbf{w}}^{ow}\right)^T \mathbf{\Sigma} \left(\mathbf{w}^{ew} - \hat{\mathbf{w}}^{ow}\right)}{\left(\mathbf{w}^{ew} - \hat{\mathbf{w}}^{ow}\right)^T \mathbf{\Sigma} \left(\mathbf{w}^{ew} - \hat{\mathbf{w}}^{ow}\right)}.
\label{eq:lambda appendix}
\end{equation}

\noindent Since $\hat{\mathbf{w}}^{ow} \neq \mathbf{w}^{ew}$, the denominator of equation~(\ref{eq:lambda appendix}) will not be zero. We compute the numerator as:

\begin{align*}
\left(\hat{\mathbf{w}}^{ow}\right)^T \mathbf{\Sigma} \left(\mathbf{w}^{ew} - \hat{\mathbf{w}}^{ow}\right) \nonumber
&= \frac{\mathbf{1}^T \mathbf{\Sigma}^{-1}}{\mathbf{1}^T \mathbf{\Sigma}^{-1} \mathbf{1}} \mathbf{\Sigma} \left(\frac{\mathbf{1}}{N} - \frac{\mathbf{\Sigma}^{-1}\mathbf{1}}{\mathbf{1}^T \mathbf{\Sigma}^{-1} \mathbf{1}}\right) \nonumber \\
&= \frac{1}{\mathbf{1}^T \mathbf{\Sigma}^{-1} \mathbf{1}}\cdot \frac{\mathbf{1}^T \mathbf{1}}{N} -  \frac{\mathbf{1}^T \mathbf{\Sigma}^{-1} \mathbf{1}}{\left(\mathbf{1}^T \mathbf{\Sigma}^{-1} \mathbf{1}\right)^2} \nonumber \\
&= \frac{1}{\mathbf{1}^T \mathbf{\Sigma}^{-1} \mathbf{1}} - \frac{1}{\mathbf{1}^T \mathbf{\Sigma}^{-1} \mathbf{1}} = 0.
\end{align*}

\noindent Therefore, $\lambda=0$ in equation~(\ref{eq:lambda appendix}). Besides, due to the fact that the second derivative $\frac{d^2 f}{d\lambda^2}=2\left(\mathbf{w}^{ew} - \hat{\mathbf{w}}^{ow}\right)^T \mathbf{\Sigma} \left(\mathbf{w}^{ew} - \hat{\mathbf{w}}^{ow}\right) > 0$, $\lambda = 0$ is indeed the global minimum. This result suggests that adaptive shrinkage weights $\hat{\mathbf{w}}^{\lambda}$ of the second-stage problem are equal to the optimal weights $\hat{\mathbf{w}}^{ow}$ in the first stage.
\end{proof}

\noindent \textbf{Proposition~\ref{proposition 4}} [Non-Stationarity Condition]\label{prop:nonstationarity}
Under covariance non-stationarity ($\mathbf{\Sigma}_1 \neq \mathbf{\Sigma}_2$), the unconstrained optimal shrinkage parameter satisfies $\lambda^*=0$ if and only if
\begin{equation}\label{eq:nonstationarity_condition}
\mathbf{1}^T\mathbf{\Sigma}_1^{-1}\mathbf{\Sigma}_2
\left(\mathbf{I}-\frac{N}{m}\mathbf{\Sigma}_1^{-1}\right)\mathbf{1}=0,
\end{equation}
where $m=\mathbf{1}^T\mathbf{\Sigma}_1^{-1}\mathbf{1}$.

\begin{proof}
Still, we focus on the numerator $\left(\hat{\mathbf{w}}^{ow}\right)^T \mathbf{\Sigma}_2 \left(\mathbf{w}^{ew} - \hat{\mathbf{w}}^{ow}\right)$ of equation~(\ref{eq:lambda appendix}). Substituting the expressions for $\hat{\mathbf{w}}^{ow}$ and $\mathbf{w}^{ew}$:
\begin{align*}
&\left(\hat{\mathbf{w}}^{ow}\right)^T \mathbf{\Sigma}_2 \left(\mathbf{w}^{ew} - \hat{\mathbf{w}}^{ow}\right) \nonumber \\
&= \frac{\mathbf{1}^T\mathbf{\Sigma}_1^{-1}}{\mathbf{1}^T\mathbf{\Sigma}_1^{-1}\mathbf{1}}\mathbf{\Sigma}_2\left(\frac{\mathbf{1}}{N}-\frac{\mathbf{\Sigma}_1^{-1}\mathbf{1}}{\mathbf{1}^T\mathbf{\Sigma}_1^{-1}\mathbf{1}}\right) \nonumber \\
&= \frac{1}{N}\frac{\mathbf{1}^T\mathbf{\Sigma}_1^{-1}\mathbf{\Sigma}_2\mathbf{1}}{\mathbf{1}^T\mathbf{\Sigma}_1^{-1}\mathbf{1}}-\frac{\mathbf{1}^T\mathbf{\Sigma}_1^{-1}\mathbf{\Sigma}_2\mathbf{\Sigma}_1^{-1}\mathbf{1}}{\left(\mathbf{1}^T\mathbf{\Sigma}_1^{-1}\mathbf{1}\right)^2}.
\end{align*}

\noindent Denote $m = \mathbf{1}^T\mathbf{\Sigma}_1^{-1}\mathbf{1}$. Setting the expression equal to zero:
\begin{align*}
&\frac{\mathbf{1}^T\mathbf{\Sigma}_1^{-1}\mathbf{\Sigma}_2\mathbf{1}}{Nm} - \frac{\mathbf{1}^T\mathbf{\Sigma}_1^{-1}\mathbf{\Sigma}_2\mathbf{\Sigma}_1^{-1}\mathbf{1}}{m^2}=0 \nonumber \\
&\Leftrightarrow
\frac{\mathbf{1}^T\mathbf{\Sigma}_1^{-1}\mathbf{\Sigma}_2\mathbf{1}}{N} - \frac{\mathbf{1}^T\mathbf{\Sigma}_1^{-1}\mathbf{\Sigma}_2\mathbf{\Sigma}_1^{-1}\mathbf{1}}{m}=0 \nonumber \\
&\Leftrightarrow
\mathbf{1}^T\mathbf{\Sigma}_1^{-1}\mathbf{\Sigma}_2\mathbf{1}-\frac{N}{m}\mathbf{1}^T\mathbf{\Sigma}_1^{-1}\mathbf{\Sigma}_2\mathbf{\Sigma}_1^{-1}\mathbf{1} = 0 \nonumber \\
&\Leftrightarrow
\mathbf{1}^T\mathbf{\Sigma}_1^{-1}\mathbf{\Sigma}_2\left(\mathbf{I}-\frac{N}{m}\mathbf{\Sigma}_1^{-1}\right)\mathbf{1}=0.
\end{align*}
\end{proof}

\noindent \textbf{Illustrative Examples:}
We provide concrete examples demonstrating scenarios where the equation~(\ref{eq:nonstationarity_condition}) in Proposition \ref{proposition 4} is satisfied despite non-stationarity.

\begin{example}[With
an Eigenvector $\mathbf{1}$]\label{example:strategical matrix} Suppose $\mathbf{\Sigma}_1^{-1}\mathbf{1}=\frac{1}{\delta_1}\mathbf{1}$, where $\delta_1$ is the eigenvalue corresponding to the eigenvector $\mathbf{1}$ of $\mathbf{\Sigma}_1$. Therefore, the equation~(\ref{eq:nonstationarity_condition})
\begin{align*} \mathbf{1}^T\mathbf{\Sigma}_1^{-1}\mathbf{\Sigma}_2\left(\mathbf{I}-\frac{N}{m}\mathbf{\Sigma}_1^{-1}\right)\mathbf{1} &= \frac{1}{\delta_1} \mathbf{1}^T\mathbf{\Sigma}_2\mathbf{1}-(\frac{1}{\delta_1})^2\mathbf{1}^T\mathbf{\Sigma}_2\frac{N}{m}\mathbf{1} \\
&= \frac{1}{\delta_1}\left(1-\frac{1}{\delta_1}\frac{N}{m}\right)\mathbf{1}^T\mathbf{\Sigma}_2\mathbf{1} \\
&= \frac{1}{\delta_1}\left(1-\frac{m}{N}\cdot\frac{N}{m}\right)\mathbf{1}^T\mathbf{\Sigma}_2\mathbf{1} \\
&= 0 
\end{align*} naturally holds, regardless of $\Sigma_2$.

This identity implies that, irrespective of the value taken by $\mathbf{\Sigma}_2$, as long as $\mathbf{1}$ is an eigenvector of the matrix $\mathbf{\Sigma}_1$ or $\mathbf{\Sigma}_1^{-1}$, $\lambda^{*}=0$ and $\hat{\mathbf{w}}^{ow}=\hat{\mathbf{w}}^{\lambda}=\mathbf{w}^{ew}$ according to Proposition~\ref{proposition 1}, i.e., the optimal weights, which remain optimal in the second stage.
\end{example}

\begin{example}[Proportional Covariance Matrices]\label{example:proportional}
Consider the case where $\mathbf{\Sigma}_1 = k\mathbf{\Sigma}_2$ for some scalar $k > 0$. This proportional relationship implies $\mathbf{\Sigma}_1^{-1} = \frac{1}{k}\mathbf{\Sigma_2^{-1}}$, and $\mathbf{1}^T\mathbf{\Sigma}_1^{-1}\mathbf{1} = \frac{1}{k}\mathbf{1}^T\mathbf{\Sigma_2^{-1}}\mathbf{1}$.

Substituting into the condition in equation (\ref{eq:nonstationarity_condition}):
\begin{align*}
&\mathbf{1}^T\mathbf{\Sigma}_1^{-1}\mathbf{\Sigma}_2\left(\mathbf{I}-\frac{N}{m}\mathbf{\Sigma}_1^{-1}\right)\mathbf{1} \nonumber \\
&= \mathbf{1}^T\frac{1}{k}\mathbf{\Sigma_2^{-1}}\mathbf{\Sigma}_2\left(\mathbf{I}-\frac{N}{m}\frac{1}{k}\mathbf{\Sigma_2^{-1}}\right)\mathbf{1} \nonumber \\
&= \frac{1}{k}\mathbf{1}^T\left(\mathbf{I}-\frac{N}{mk}\mathbf{\Sigma_2^{-1}}\right)\mathbf{1} \nonumber \\
&= \frac{1}{k}\left(\mathbf{1}^T\mathbf{1} - \frac{N}{mk}\mathbf{1}^T\mathbf{\Sigma_2^{-1}}\mathbf{1}\right) \\
&= \frac{1}{k}\left(N - \frac{N}{mk}\cdot mk\right) \nonumber \\
&= \frac{1}{k}\left(N - N\right) = 0.
\end{align*}

\noindent This equality holds trivially, confirming that $\lambda^{*}=0$ and $\hat{\mathbf{w}}^{ow} = \hat{\mathbf{w}}^{\lambda}$ when covariance matrices are proportional.
\end{example}

\begin{example}[Non-Proportional Matrices]\label{example:nonproportional} We present two kinds of detailed examples in which $\mathbf{\Sigma}_1 \neq \mathbf{\Sigma}_2$, and $\mathbf{\Sigma}_1$ and $\mathbf{\Sigma}_2$ are not proportionally related.

\textbf{{\textnormal{\bfseries A sufficient condition.}}}
Consider a specific scenario where $\mathbf{\Sigma}_1 \neq \mathbf{\Sigma}_2$ and the matrices are not proportionally related. We construct the following $5 \times 5$ precision matrices:

{\small
\allowdisplaybreaks
\addtolength{\arraycolsep}{-2pt}

\begin{align*}
\mathbf{\Sigma}_1^{-1} = \begin{bmatrix}
6.0 & 0.5 & 1.0 & 0.5 & 0.0 \\
0.5 & 5.0 & 1.0 & 0.0 & 1.5 \\
1.0 & 1.0 & 4.0 & 1.0 & 1.0 \\
0.5 & 0.0 & 1.0 & 6.0 & 0.5 \\
0.0 & 1.5 & 1.0 & 0.5 & 5.0 \\
\end{bmatrix}, \quad \mathbf{\Sigma_2^{-1}} = \begin{bmatrix}
2.0 & 1.0 & 0.0 & 0.5 & 0.0 \\
1.0 & 3.0 & 0.5 & 0.0 & 1.0 \\
0.0 & 0.5 & 4.0 & 1.0 & 0.0 \\
0.5 & 0.0 & 1.0 & 3.0 & 0.5 \\
0.0 & 1.0 & 0.0 & 0.5 & 2.0 \\
\end{bmatrix}.
\end{align*}

Direct computation yields:
\begin{align*}
N &= 5,\\
m &= \mathbf{1}^{T}\mathbf{\Sigma}_1^{-1}\mathbf{1} = \sum_{i=1}^{N} \sum_{j=1}^{N}\left(\mathbf{\Sigma}_1^{-1}\right)_{ij} = 40,
\\
\left ( \mathbf{I}-\frac{N}{m}\mathbf{\Sigma}_1^{-1} \right )\mathbf{1} &= \left(\begin{bmatrix}
    1.0 & 0.0 & 0.0 & 0.0 & 0.0 \\
    0.0 & 1.0 & 0.0 & 0.0 & 0.0 \\
    0.0 & 0.0 & 1.0 & 0.0 & 0.0 \\
    0.0 & 0.0 & 0.0 & 1.0 & 0.0 \\
    0.0 & 0.0 & 0.0 & 0.0 & 1.0
\end{bmatrix} -\frac{5}{40}\begin{bmatrix}
    6.0 & 0.5 & 1.0 & 0.5 & 0.0 \\
    0.5 & 5.0 & 1.0 & 0.0 & 1.5 \\
    1.0 & 1.0 & 4.0 & 1.0 & 1.0 \\
    0.5 & 0.0 & 1.0 & 6.0 & 0.5 \\
    0.0 & 1.5 & 1.0 & 0.5 & 5.0
\end{bmatrix}\right)\mathbf{1} \\
&= \begin{bmatrix}
    \frac{1}{4} & -\frac{1}{16} & -\frac{1}{8} & -\frac{1}{16} & 0 \\
    -\frac{1}{16} & \frac{3}{8} & -\frac{1}{8} & 0 & -\frac{3}{16} \\
    -\frac{1}{8} & -\frac{1}{8} & \frac{1}{2} & -\frac{1}{8} & -\frac{1}{8} \\
    -\frac{1}{16} & 0 & -\frac{1}{8} & \frac{1}{4} & -\frac{1}{16} \\
    0 & -\frac{3}{16} & -\frac{1}{8} & -\frac{1}{16} & \frac{3}{8} \\
\end{bmatrix}\mathbf{1}\\
&=\begin{bmatrix}
0.0 \\ 0.0 \\ 0.0 \\ 0.0 \\ 0.0
\end{bmatrix}.
\end{align*}
}

\noindent Therefore, the condition in equation(\ref{eq:nonstationarity_condition}) $\mathbf{1}^T\mathbf{\Sigma}_1^{-1}\mathbf{\Sigma}_2\left(\mathbf{I}-\frac{N}{m}\mathbf{\Sigma}_1^{-1}\right)\mathbf{1}=0$ definitely holds.

For the condition to hold, a sufficient condition is easily observed that $\left(\mathbf{I}-\frac{N}{m}\mathbf{\Sigma}_1^{-1}\right)\mathbf{1}=\mathbf{0}$. This suggests that we can focus on the construction of $\mathbf{\Sigma}_1$ to satisfy the sufficient condition, whereas $\mathbf{\Sigma}_2$ can be any symmetric positive definite matrix.

\textbf{{\textnormal{\bfseries A case where the sufficient condition is not necessary.}}}
We present the following $5\times5$ matrices that do not satisfy the sufficient condition $\left ( \mathbf{I}-\frac{N}{m}\mathbf{\Sigma}_1^{-1} \right )\mathbf{1}=\mathbf{0}$, but still satisfy the condition in equation~(\ref{eq:nonstationarity_condition}).

{\small
\allowdisplaybreaks
\addtolength{\arraycolsep}{-2pt}

\begin{align*}
\mathbf{\Sigma}_1^{-1} = \begin{bmatrix}
4.0 & 2.0 & 1.0 & 3.0 & 2.0 \\
2.0 & 5.0 & 3.0 & 1.0 & 2.0 \\
1.0 & 3.0 & 6.0 & 2.0 & 1.0 \\
3.0 & 1.0 & 2.0 & 5.0 & 3.0 \\
2.0 & 2.0 & 1.0 & 3.0 & 4.0 \\
\end{bmatrix},  \quad \mathbf{\Sigma}_2 =
\begin{bmatrix}
3.0 & -2.0 & 1.0 & -1.0 & 1.0 \\
-2.0 & \frac{69}{13} & -3.0 & 2.0 & -1.0 \\
1.0 & -3.0 & 5.0 & -2.0 & 1.0 \\
-1.0 & 2.0 & -2.0 & 4.0 & -2.0 \\
1.0 & -1.0 & 1.0 & -2.0 & 3.0 \\
\end{bmatrix}.
\end{align*}
\noindent We compute $\mathbf{1}^T\mathbf{\Sigma}_1^{-1}\mathbf{\Sigma}_2\left(\mathbf{I}-\frac{N}{m}\mathbf{\Sigma}_1^{-1}\right)\mathbf{1}$ directly.
\begin{align*}
N &= 5,\\
m &= \mathbf{1}^{T}\mathbf{\Sigma}_1^{-1}\mathbf{1} = \sum_{i=1}^{N} \sum_{j=1}^{N}\left(\mathbf{\Sigma}_1^{-1}\right)_{ij} = 64, \\
\mathbf{v} = \left ( \mathbf{I}-\frac{N}{m}\mathbf{\Sigma}_1^{-1} \right )\mathbf{1} &= \mathbf{1}-\frac{N}{m}\mathbf{\Sigma}_1^{-1}\mathbf{1} =\begin{bmatrix}
1 \\ 1 \\ 1 \\ 1 \\ 1
\end{bmatrix}-\frac{5}{64}\begin{bmatrix}
    12.0 \\
    13.0 \\
    13.0 \\
    14.0 \\
    12.0 \\
\end{bmatrix} = \frac{1}{64}\begin{bmatrix}
4 \\ -1 \\ -1 \\ -6 \\ 4
\end{bmatrix} \\
\mathbf{1}^T\mathbf{\Sigma}_1^{-1}\mathbf{\Sigma}_2\mathbf{v} &= \frac{1}{64}\begin{bmatrix}
1 & 1 & 1 & 1 & 1
\end{bmatrix}\begin{bmatrix}
4.0 & 2.0 & 1.0 & 3.0 & 2.0 \\
2.0 & 5.0 & 3.0 & 1.0 & 2.0 \\
1.0 & 3.0 & 6.0 & 2.0 & 1.0 \\
3.0 & 1.0 & 2.0 & 5.0 & 3.0 \\
2.0 & 2.0 & 1.0 & 3.0 & 4.0
\end{bmatrix} \\
&\quad \times \begin{bmatrix}
3.0 & -2.0 & 1.0 & -1.0 & 1.0 \\
-2.0 & \frac{69}{13} & -3.0 & 2.0 & -1.0 \\
1.0 & -3.0 & 5.0 & -2.0 & 1.0 \\
-1.0 & 2.0 & -2.0 & 4.0 & -2.0 \\
1.0 & -1.0 & 1.0 & -2.0 & 3.0 \\
\end{bmatrix}
\begin{bmatrix}
4 \\ -1 \\ -1 \\ -6 \\ 4
\end{bmatrix} \\
&= \frac{1}{64}\begin{bmatrix}
12.0 & 13.0 & 13.0 & 14.0 & 12.0
\end{bmatrix} \\
&\quad \times \begin{bmatrix}
3.0 & -2.0 & 1.0 & -1.0 & 1.0 \\
-2.0 & \frac{69}{13} & -3.0 & 2.0 & -1.0 \\
1.0 & -3.0 & 5.0 & -2.0 & 1.0 \\
-1.0 & 2.0 & -2.0 & 4.0 & -2.0 \\
1.0 & -1.0 & 1.0 & -2.0 & 3.0 \\
\end{bmatrix}\begin{bmatrix}
4 \\ -1 \\ -1 \\ -6 \\ 4
\end{bmatrix} = 0.
\end{align*}
}

\noindent Given that such a pair of solutions ($\mathbf{\Sigma}_1^{-1},\mathbf{\Sigma}_2$) exists, any scalar $k$ multiples $\mathbf{\Sigma}_2$ will also satisfy the specified condition (\ref{eq:nonstationarity_condition}), thereby generating an infinite family of solutions.
\end{example}

In the preceding analysis, we examine the problem in the context of the adaptive shrinkage combination framework. We now turn to a general scenario that is independent of adaptive shrinkage.

\begin{remark}\label{general remark}
Under covariance non-stationarity ($\mathbf{\Sigma}_1 \neq \mathbf{\Sigma}_2$), the unconstrained optimal weights under $\mathbf{\Sigma}_1$ equal those under $\mathbf{\Sigma}_2$ if and only if there exists a scalar $\eta>0$ such that:
\begin{equation}\label{eq:condition2}
(\mathbf{\Sigma}_1^{-1}-\eta\mathbf{\Sigma}_2^{-1})\mathbf{1}=\mathbf{0}.
\end{equation}
\end{remark}

\begin{proof}
The minimum-variance portfolio under $\mathbf{\Sigma}_2$ is

\begin{align*}
\mathbf{w_2^{*}}=\frac{\mathbf{\Sigma_2^{-1}}\mathbf{1}}{\mathbf{1}^T\mathbf{\Sigma_2^{-1}}\mathbf{1}}.
\end{align*}

Let $\mathbf{w_1^{*}} = \mathbf{w_2^{*}}$, i.e.,
\begin{align}
    \frac{\mathbf{\Sigma}_1^{-1}\mathbf{1}}{\mathbf{1}^T\mathbf{\Sigma}_1^{-1}\mathbf{1}}=\frac{\mathbf{\Sigma_2^{-1}}\mathbf{1}}{\mathbf{1}^T\mathbf{\Sigma_2^{-1}}\mathbf{1}}.
\label{eq:w1=w2}
\end{align}
Denote $k_1=\mathbf{1}^T\mathbf{\Sigma}_1^{-1}\mathbf{1}$, $k_2=\mathbf{1}^T\mathbf{\Sigma_2^{-1}}\mathbf{1}$, and equation~(\ref{eq:w1=w2}) becomes
\begin{align*}
    \mathbf{\Sigma}_1^{-1}\mathbf{1} = \frac{k_1}{k_2}\mathbf{\Sigma_2^{-1}}\mathbf{1},
\end{align*}
i.e.,
\begin{align*}
    (\mathbf{\Sigma_1^{-1}-\eta\Sigma_2^{-1}})\mathbf{1} = \mathbf{0},\quad \text{where } \eta=\frac{k_1}{k_2}.
\end{align*}

Conversely, If $\mathbf{\Sigma_1}^{-1}\mathbf 1=\eta\mathbf{\Sigma_2}^{-1}\mathbf{1}$, then
\begin{equation*}
\frac{\mathbf{\Sigma}_1^{-1}\mathbf{1}}{\mathbf 1^T\mathbf{\Sigma}_1^{-1}\mathbf 1}=\frac{\eta\mathbf{\Sigma}_2^{-1}\mathbf 1}{\eta\mathbf 1^T\mathbf{\Sigma}_2^{-1}\mathbf 1}=\frac{\mathbf{\Sigma}_2^{-1}\mathbf 1}{\mathbf 1^T\mathbf{\Sigma}_2^{-1}\mathbf 1}.
\end{equation*}
Thus $\mathbf{w}_1^{*}=\mathbf{w}_2^{*}$.
\end{proof}

\noindent \textbf{Illustrative Examples:} We provide concrete examples demonstrating scenarios where the condition~(\ref{eq:condition2}) in Remark~\ref{general remark} is satisfied despite non-stationarity. 

\begin{example}[Proportional Covariance Matrices]\label{example:proportional_general}
Consider the case where $\mathbf{\Sigma}_1 = k\mathbf{\Sigma}_2$ for some scalar $k > 0$. This proportional relationship implies $\mathbf{\Sigma}_1^{-1} = \frac{1}{k}\mathbf{\Sigma_2^{-1}}$, and $\mathbf{1}^{T}\mathbf{\Sigma}_1^{-1}\mathbf{1}=\frac{1}{k}\mathbf{1}^{T}\mathbf{\Sigma_2^{-1}}\mathbf{1}$.

Substituting into the condition in equation (\ref{eq:condition2}):
\begin{align*}
 (\mathbf{\Sigma_1^{-1}-\eta\Sigma_2^{-1}})\mathbf{1} =  (\mathbf{\Sigma}_1^{-1}-\frac{1}{k}\mathbf{\Sigma_2^{-1}})\mathbf{1}=\mathbf{0}
\end{align*}

This equality holds trivially, confirming that $\mathbf{w}_1^{*} = mathbf{w}\_2^{*}$ when the covariance matrices are proportional.
\end{example}

\begin{example}[Non-Proportional Matrices]\label{example:nonproportional_general}
Consider a specific scenario where $\mathbf{\Sigma}_1 \neq \mathbf{\Sigma}_2$ and the matrices are not proportionally related. We construct the following $5 \times 5$ matrices:

{\small
\allowdisplaybreaks
\addtolength{\arraycolsep}{-2pt}

\begin{align*}
\mathbf{\Sigma}_1^{-1} = \begin{bmatrix}
1.5 & 0.5 & 0.0 & 0.0 & 0.0 \\
0.5 & 1.5 & 0.5 & 0.0 & 0.0 \\
0.0 & 0.5 & 1.5 & 0.5 & 0.0 \\
0.0 & 0.0 & 0.5 & 1.5 & 0.5 \\
0.0 & 0.0 & 0.0 & 0.5 & 1.5
\end{bmatrix}, \quad \mathbf{\Sigma_2^{-1}} = \begin{bmatrix}
1.4 & 0.3 & 0.1 & 0.1 & 0.1 \\
0.3 & 1.7 & 0.3 & 0.1 & 0.1 \\
0.1 & 0.3 & 1.7 & 0.3 & 0.1 \\
0.1 & 0.1 & 0.3 & 1.7 & 0.3 \\
0.1 & 0.1 & 0.1 & 0.3 & 1.4
\end{bmatrix}.
\end{align*}

Direct computation yields:
\begin{align*}
\mathbf{\Sigma}_1^{-1}\mathbf{1}
&=\begin{bmatrix}
1.5 & 0.5 & 0.0 & 0.0 & 0.0 \\
0.5 & 1.5 & 0.5 & 0.0 & 0.0 \\
0.0 & 0.5 & 1.5 & 0.5 & 0.0 \\
0.0 & 0.0 & 0.5 & 1.5 & 0.5 \\
0.0 & 0.0 & 0.0 & 0.5 & 1.5
\end{bmatrix}
\begin{bmatrix}
1.0 \\ 1.0 \\ 1.0 \\ 1.0 \\ 1.0\\
\end{bmatrix}=\begin{bmatrix}
2.0 \\ 2.5 \\ 2.5 \\ 2.5 \\ 2.0
\end{bmatrix} \\
\mathbf{\Sigma_2^{-1}}\mathbf{1}
&=\begin{bmatrix}
1.4 & 0.3 & 0.1 & 0.1 & 0.1 \\
0.3 & 1.7 & 0.3 & 0.1 & 0.1 \\
0.1 & 0.3 & 1.7 & 0.3 & 0.1 \\
0.1 & 0.1 & 0.3 & 1.7 & 0.3 \\
0.1 & 0.1 & 0.1 & 0.3 & 1.4
\end{bmatrix}
\begin{bmatrix}
1.0 \\ 1.0 \\ 1.0 \\ 1.0 \\ 1.0\\
\end{bmatrix}=\begin{bmatrix}
2.0 \\ 2.5 \\ 2.5 \\ 2.5 \\ 2.0
\end{bmatrix} \\
\mathbf{1}^{T}\mathbf{\Sigma}_1^{-1}\mathbf{1} &= \mathbf{1}^{T}\mathbf{\Sigma_2^{-1}}\mathbf{1} = 2.0 + 2.5 + 2.5 + 2.5 + 2.0 = 11.5
\end{align*}
}

Therefore, the condition of equation~(\ref{eq:condition2}) $(\mathbf{\Sigma}_1^{-1}-\eta\mathbf{\Sigma}_2^{-1})\mathbf{1}=\mathbf{0}, \text{where} \; \eta=1$ holds.

\end{example}

\begin{remark}[Economic Interpretation]
The condition in Proposition \ref{proposition 4} admits infinitely many solutions for pairs $(\mathbf{\Sigma}_1, \mathbf{\Sigma}_2)$, suggesting that optimal combination weights can remain robust to specific types of structural changes in the forecasting environment. This finding provides a theoretical foundation for understanding when forecast combination strategies exhibit stability across regime changes, which has important implications for the practical implementation of adaptive shrinkage methods in financial forecasting.
\end{remark}

	\clearpage

\section{Additional Methodology Details}\label{sec: appendix detail}

\renewcommand{\thesubsection}{B.\arabic{subsection}}
\setcounter{subsection}{0}

\renewcommand{\thetable}{\Alph{section}.\arabic{table}}

Table~\ref{tab:methods_glossary} summarizes all the (combination) forecasting methods used in this paper, and Table~\ref{tab:sample division} presents the comprehensive sample division framework and parameter specifications employed in the empirical analysis. The methodological approach requires careful consideration of sample partitioning to ensure robust out-of-sample evaluation while maintaining sufficient data for parameter estimation and model selection.

For individual regression analysis of each GW predictor, the full sample of $T$ months is partitioned into two non-overlapping periods following standard forecasting practice. The first $T_1$ months constitute the in-sample estimation period for fitting predictive regression models, while the remaining $T_2 = T-T_1$ months are reserved for genuine out-of-sample forecast evaluation. This division ensures that no future information is used in model estimation, thereby providing an unbiased assessment of forecasting performance. To assess the robustness of our adaptive shrinkage combination approach to different estimation window lengths, we consider two alternative specifications for the in-sample period: $T_1=54$ months (January 1996 to June 2000) and $T_1=42$ months (January 1996 to June 1999). Both configurations maintain identical out-of-sample periods extending from the end of their respective in-sample windows to August 2019, allowing for direct comparison of forecasting accuracy across different training sample sizes.

The combination methods employ a more sophisticated three-phase sample division to accommodate the two-step estimation process inherent in shrinkage techniques, including 2S--ASWs, Lasso, Ridge, and ENet. This approach recognizes that these methods require both weight estimation and hyperparameter tuning, necessitating separate data partitions to avoid overfitting. The individual forecast series generated from each GW predictor in the previous step are sequentially partitioned into training ($T_{c1}$), validation ($T_{c2}$), and test ($T_{c3}$) periods, where $T_{c1} + T_{c2} + T_{c3} = T$.

For shrinkage methods, the training period $T_{c1}$ is used to estimate optimal combination weights, while the validation period $T_{c2}$ serves a dual purpose: selecting the optimal shrinkage parameter $\lambda_{opt}$ through validation and subsequently deriving the final adaptive shrinkage weights using this optimal parameter. The test period $T_{c3}$ is reserved exclusively for out-of-sample evaluation, ensuring that the forecasting performance assessment remains uncontaminated by the parameter selection process. In the in-sample implementation, $T_{c1}$ equals $T_1$ from the individual analysis to maintain consistency, while $T_{c2}$ uniformly terminates in December 2006 across all specifications. This design creates validation sets spanning July 2000 to December 2006 ($T_{c2} = 78$ months) for $T_1=54$ and July 1999 to December 2006 ($T_{c2} = 90$ months) for $T_1=42$, with the remaining periods allocated to out-of-sample testing.

We distinguish between two methodological categories based on their computational requirements for out-of-sample evaluation. Shrinkage methods (2S--ASWs, Lasso, Ridge, and ENet) require separate validation sets for hyperparameter determination and thus follow the three-phase division described above. In contrast, one-step methods (OWs, \cite{blanc2020bias}, EWs, Median, Trimmed Mean, DMSPE, ASWs--\cite{li2025dominate}, and PCA) do not require hyperparameter tuning and therefore utilize combined training and validation samples spanning January 1996 to December 2006, with the remainder allocated to out-of-sample evaluation. This approach maximizes the available data for these simpler methods while maintaining computational efficiency. 

\cite{blanc2020bias} proposes a shrinkage combination that shares the same functional form with ours, i.e., $\hat{\mathbf{w}}=\lambda \mathbf{w}^{ew}+(1-\lambda)\hat{\mathbf{w}^{ow}}$, but the determination of $\lambda$ differs: OWs, $\lambda_{opt}$, and shrinkage weights in \cite{blanc2020bias} are all derived directly from the same training samples. Based on the bias-variance trade-off, the optimal shrinkage parameter $\lambda_{opt}$ is calculated as: $\lambda_{opt}=\frac{\mathbf{V}}{\mathbf{B}+\mathbf{V}}$, where $\mathbf{B}=\frac{\mathbf{1}^T\Sigma 1}{N^2} - \frac{1}{\mathbf{1}^T\Sigma^{-1} 1}$, and $\mathbf{V}=\frac{N-1}{H-N}\frac{1}{\mathbf{1}^T\Sigma^{-1} 1}$. \cite{li2025dominate} proposes a conservative constraint on the predictive slope coefficient of individual predictor regression, ensuring it outperforms the historical average, which is not a forecast combination method. Motivated by its core idea, we apply a conservative constraint to OWs, therefore constructing combination weights that have locally equal elements. The weights are computed as $\frac{\mathbf{\delta}}{A} + \frac{1}{N}$, where $\mathbf{\delta}=\mathrm{sign}({\hat{\mathbf{w}^{ow}}-\mathbf{w}^{ew}})$, and $A$ is set to 10/50/100, consistent with that in \cite{li2025dominate}. The sum of ASWs--\cite{li2025dominate}'s weights is not guaranteed to be 1. We point out that the results reported for ASWs--\cite{li2025dominate} with different parameters $A$ in all the tables represent our modified version that adapts its methodology to the forecast combination context, rather than its original approach designed for individual GW predictor forecasts. DMSPE, proposed by \cite{stock2004combination}, employs the weights $\hat{w}_{i,t} = \phi_{i,t}^{-1} / \sum_{j=1}^{N} \phi_{j,t}^{-1}$, where $\phi_{i,t} = \sum_{s=m}^{t-1} \theta^{t-1-s} (r_{s+1} - \hat{r}_{i,s+1})^2$. We set $m=1$ to utilize all the historical data. Following \cite{rapach2010out}, the parameter $\theta$ of DMSPE is set to 0.9 and 1.0, respectively.

Table~\ref{tab:sample division} provides detailed specifications for the key parameters governing our regularization methods. The regularization parameter $\alpha$ controls the strength of the penalty term in the objective function and is selected from a logarithmically spaced grid over the interval $[10^{-6},10^{6}]$ with 500 equally spaced points in log-space. This wide range ensures comprehensive coverage of the parameter space, while the logarithmic spacing provides finer resolution in regions where the parameter is expected to be most effective. For the Elastic Net (ENet) method, the mixing parameter $\beta \in [0,1]$ governs the relative weighting between the $\ell_1$ (Lasso) and $\ell_2$ (Ridge) norm components in the penalty term, where $\beta = 1$ reduces to pure Lasso and $\beta = 0$ to pure Ridge regression. This parameter is selected from a uniform grid over the range $[0.0001, 1.0]$ with an appropriate spacing to ensure adequate coverage of the mixing space while avoiding numerical instability near the boundaries. 

The parameter selection process employs time-series cross-validation to account for the temporal structure of financial data. Specifically, we use expanding window validation where the training sample grows incrementally while maintaining the validation sample size, thereby respecting the chronological ordering of observations and avoiding look-ahead bias. The optimal parameters are selected based on minimizing the mean squared prediction error over the validation period, with ties broken by selecting the most parsimonious model (largest $\alpha$ for given performance).

    \begin{table}[htbp!]
    \centering
    \renewcommand{\arraystretch}{0.8}
    \caption{\textbf{Summary of forecasting and combination methods.} This table summarizes the forecasting and combination methods used in this paper. $T_{c1}$ and $T_{c2}$ denote the training (first-stage) and validation (second-stage) periods, associated with the error covariance matrices $\bm{\Sigma}_1$ and $\bm{\Sigma}_2$, respectively. }
    \label{tab:methods_glossary}
    \small
    \begin{tabularx}{\textwidth}{p{2cm} p{3.2cm} X}
    \toprule
    \textbf{Category} & \textbf{Method} & \textbf{Brief illustration} \\
    \midrule
    Individual regression & GW predictors & Individual predictive regressions based on Goyal-Welch variables. \\
    \hline
    \multirow{9}{*}{\parbox{2cm}{One-step \\ combination}} 
    & OWs & Optimal weights estimated using the combined samples of the training and validation sets ($T_{c1} + T_{c2}$). \\
    & OWs ($\bm{\Sigma}_2$) & Optimal weights estimated solely from the second-stage sample ($T_{c2}$). \\
    & \citet{blanc2020bias} & A shrinkage method following \citet{blanc2020bias} using the combined samples of $T_{c1} + T_{c2}$; see Appendix \ref{sec: appendix detail} for details. \\
    & ASWs--\citet{li2025dominate} & A modified shrinkage combination method based on \citet{li2025dominate} using the combined samples of $T_{c1} + T_{c2}$; see Appendix \ref{sec: appendix detail} for details. \\
    \cmidrule{2-3}
    & EWs & The simple equal-weight heuristic ($1/N$). \\
    & Median & The median values across all individual forecasts. \\
    & Trimmed Mean & A robust average assigning zero weight to forecasts with the minimum and maximum values, while equally weighting the remaining ones. \\
    & DMSPE & Discounted Mean Square Prediction Error: a dynamic weighting scheme using the combined samples ($T_{c1} + T_{c2}$), emphasizing recent performance via a discount factor. \\
    & PCA & Principal Component Analysis extracting dominant factors from GW predictors over the combined samples ($T_{c1} + T_{c2}$). \\
    \midrule
    \multirow{8}{*}{\parbox{2cm}{Two-stage \\ combination}} 
    & 2S--ASWs  \textbf{(Our Method)} &  Two-stage adaptive shrinkage weights: OWs estimated on $T_{c1}$; $\lambda_{opt}$ calibrated on $T_{c2}$. \\
    & Lasso & Stacking combination using Lasso; parameters trained on $T_{c1}$ and tuned on $T_{c2}$. \\
    & Ridge & Stacking combination using Ridge; parameters trained on $T_{c1}$ and tuned on $T_{c2}$. \\
    & ENet & Stacking combination using Elastic Net; parameters trained on $T_{c1}$ and tuned on $T_{c2}$. \\
    \cmidrule{2-3}
    & R2S--ASWs & Reversed 2S--ASWs: OWs estimated on $T_{c2}$; $\lambda_{opt}$ computed on $T_{c1}$. \\
    & R2S--Lasso & Reversed Lasso: parameters trained on $T_{c2}$; tuned on $T_{c1}$. \\
    & R2S--Ridge & Reversed Ridge: parameters trained on $T_{c2}$; tuned on $T_{c1}$. \\
    & R2S--ENet & Reversed Elastic Net: parameters trained on $T_{c2}$; tuned on $T_{c1}$. \\
    \bottomrule
    \end{tabularx}
    \end{table}

    \begin{landscape}
    \renewcommand{\arraystretch}{0.85}
    \begin{longtable}{ccccc}
    \caption{{\bf Sample division and parameters.} This table summarizes the sample division and parameter settings for both individual regressions and combination methods. As for individual regression, the full sample of each GW predictor is divided into $T_1$ and $T_2$, where $T_1$ is used for in-sample estimation and $T_2$ is used to generate out-of-sample forecast series. When $T_1 = 54$ months, it spans January 1996 to June 2000, and $T_2$ covers July 2000 to August 2019; when $T_1 = 42$ months, it spans January 1996 to June 1999, and $T_2$ covers July 1999 to August 2019. As for the combination, the full sample is sequentially split into $T_{c1}$, $T_{c2}$, and $T_{c3}$. $T_{c1}$ and $T_{c2}$ are varied in empirical analysis, while out-of-sample evaluation of combination methods is performed on $T_{c3}$. In in-sample analysis, $T_{c1}$ equals to $T_1$, and the end of $T_{c2}$ is always December 2006. The regularization parameter $\alpha$ applies to the penalty term in Lasso, Ridge, and Elastic Net, while $\beta$ controls the relative weight between the $\ell_1$ and $\ell_2$ norms in Elastic Net.} \label{tab:sample division} \\
    \toprule
    \multicolumn{5}{c}{\textit{Panel A: In-sample Analysis}} \\
    \multicolumn{1}{c}{Method} & \multicolumn{1}{c}{Training} & \multicolumn{1}{c}{Validation} & \multicolumn{1}{c}{Evaluation} & \multicolumn{1}{c}{Candidate Parameters} \\
    \midrule
    GW predictors & $T_1$ & - & $T_1$ & - \\
    2S--ASWs& $T_{c1}$ & $T_{c2}$ & $T_{c2}$ & $\lambda_{opt}$ \\
    Lasso & $T_{c1}$ & $T_{c2}$ & $T_{c2}$ & $\alpha \in [10^{-6}, 10^{6}]$ \\
    Ridge & $T_{c1}$ & $T_{c2}$ &$T_{c2}$ & $\alpha \in [10^{-6}, 10^{6}]$ \\
    ENet & $T_{c1}$ & $T_{c2}$ & $T_{c2}$ & $\alpha \in [10^{-6}, 10^{6}]$, $\beta \in [1e\!-\!4, 0.1, 0.2, ..., 1.0]$ \\
    OWs & $T_{c1}+T_{c2}$ & - & $T_{c1}+T_{c2}$ &  - \\
    \cite{blanc2020bias} & $T_{c1}+T_{c2}$ & - & $T_{c1}+T_{c2}$ & $\lambda_{opt}$ \\
    ASWs--\cite{li2025dominate} & $T_{c1}+T_{c2}$ & - & $T_{c1}+T_{c2}$ & $A=10/50/100$ \\
    EWs & $T_{c1}+T_{c2}$ & - & $T_{c1}+T_{c2}$ & - \\
    Median & $T_{c1}+T_{c2}$ & - & $T_{c1}+T_{c2}$ & - \\
    Trimmed Mean & $T_{c1}+T_{c2}$ & - & $T_{c1}+T_{c2}$ & - \\
    DMSPE & $T_{c1}+T_{c2}$ & - & $T_{c1}+T_{c2}$ & $\theta=0.9/1.0$ \\
    PCA & $T_{c1}+T_{c2}$ & - & $T_{c1}+T_{c2}$ &  - \\
    \midrule
    \multicolumn{5}{c}{\textit{Panel B: Out-of-sample Analysis}} \\
    \multicolumn{1}{c}{Method} & \multicolumn{1}{c}{Training} & \multicolumn{1}{c}{Validation} & \multicolumn{1}{c}{Evaluation} & \multicolumn{1}{c}{Candidate Parameters} \\
    \midrule
    GW predictors & $T_1$ & - & $T_{c2}+T_{c3}$ & - \\
    2S--ASWs& $T_{c1}$ & $T_{c2}$ & $T_{c3}$ & $\lambda_{opt}$ \\
    Lasso & $T_{c1}$ & $T_{c2}$ & $T_{c3}$ & $\alpha \in [10^{-6}, 10^{6}]$ \\
    Ridge & $T_{c1}$ & $T_{c2}$ & $T_{c3}$ & $\alpha \in [10^{-6}, 10^{6}]$ \\
    ENet & $T_{c1}$ & $T_{c2}$ & $T_{c3}$ & $\alpha \in [10^{-6}, 10^{6}]$, $\beta \in [1e\!-\!4, 0.1, 0.2, ..., 1.0]$ \\
    OWs & $T_{c1}+T_{c2}$ & - & $T_{c3}$ & -  \\
    OWs ($\mathbf{\Sigma}_2$) &  $T_{c2}$ & - & $T_{c3}$ & - \\
    \cite{blanc2020bias} & $T_{c1}+T_{c2}$ & - & $T_{c3}$ & $\lambda_{opt}$ \\
    ASWs--\cite{li2025dominate} & $T_{c1}+T_{c2}$ & - & $T_{c3}$ & $A=10/50/100$ \\
    EWs & - & - & $T_{c3}$ & - \\
    Median &  - & - & $T_{c3}$ & - \\
    Trimmed Mean &  - & - & $T_{c3}$ & - \\
    DMSPE & $T_{c1}+T_{c2}$ & - & $T_{c3}$ & $\theta=0.9/1.0$ \\
    PCA & $T_{c1}+T_{c2}$ & - & $T_{c3}$ & - \\
    \bottomrule
    \end{longtable}
    \end{landscape}

    \clearpage

    \section{Additional Results}\label{sec: appendix results}

    \renewcommand{\thesubsection}{C.\arabic{subsection}}
    \renewcommand{\thetable}{C.\arabic{table}}
    \renewcommand{\thefigure}{C.\arabic{figure}}
    \setcounter{figure}{0}  
    \setcounter{subsection}{0}
    \setcounter{table}{0}

This appendix provides additional robustness exercises for the adaptive shrinkage combination method to ensure the reliability and generalizability of our main findings. We examine four critical dimensions: sensitivity to alternative sample partitioning schemes, performance under non-negative weight constraints that reflect practical portfolio management considerations, a reversed specification with interchanging $\mathbf{\Sigma}_2$  and $\mathbf{\Sigma}_1$, and an investigation when the samples are extended to December~2024 using the full 29 predictors updated from \citep{goyal2024comprehensive}.

\subsection{Sensitivity to Sample Partitioning}

Table~\ref{tab:other oos combination} presents out-of-sample combination results under varying training period lengths $T_1$ for individual regression, combined with different specifications for the combination training set $T_{c1}$ and validation set $T_{c2}$. To maintain comparability with our baseline analysis, the overall training-validation samples consistently span January 1996 to December 2006 across all configurations, ensuring a uniform evaluation framework while allowing for systematic variation in the internal partitioning structure.

\bigskip
\centerline{\bf [Place Table~\ref{tab:other oos combination} about here]}
\bigskip

We examine four alternative specifications for the individual regression training period: $T_1 \in \{24, 30, 48, 60\}$ months. These choices represent a comprehensive range from relatively short estimation windows that capture recent market dynamics to longer periods that provide more stable parameter estimates. For each $T_1$, the combination training period $T_{c1}$ is set equal to $T_1$ to maintain consistency, while the validation period $T_{c2}$ adjusts accordingly to ensure the combined sample terminates in December 2006.

The results demonstrate that 2S--ASWs achieve most positive $R^2_{COS}$ values, especially at $J=1$ and many long horizons, and significantly outperform alternative benchmarks. This robustness to different training window lengths underscores the method's ability to adapt to varying amounts of historical information while maintaining forecasting accuracy. The other shrinkage methods (Lasso, Ridge, ENet) generally yield negative $R^2_{COS}$ values in the short term but demonstrate improved performance at longer horizons, suggesting that these methods may be more suitable for medium- to long-term forecasting applications.

OWs exhibit the weakest performance at short horizons ($J=1$ and $J=3$) but display the strongest predictive power at the longest horizon ($J=24$), indicating a potential trade-off between short-term adaptability and long-term stability. EWs and PCA methods demonstrate relatively stable but modest predictive ability across all forecast horizons, providing consistent if unremarkable performance. Notably, the $R^2_{COS}$ statistics for 2S--ASWs exhibit the characteristic ``check-mark'' pattern across forecast horizons $J$ observed in our main analysis, with initial decline usually at $J=6$ followed by recovery at longer horizons, confirming the robustness of this empirical regularity.

\subsection{Non-Negative Weight Constraints}

Table~\ref{tab:nonnegative oos combination} investigates the performance of combination methods under non-negative weight constraints, addressing a practical consideration in portfolio management where negative weights correspond to short positions that may be either unavailable to retail investors or undesirable for institutional practitioners due to regulatory constraints, borrowing costs, or risk management policies. In this specification, non-negativity constraints are imposed exclusively during the estimation of optimal weights, thus ensuring the non-negativity of adaptive shrinkage weights while the application of individual forecasting methods (Lasso, Ridge, ENet, PCA) remains identical to the unconstrained setting. This approach isolates the effect of weight constraints on combination performance while maintaining comparability with our baseline results.

\bigskip
\centerline{\bf [Place Table~\ref{tab:nonnegative oos combination} about here]}
\bigskip

The analysis covers various combinations of sample partitioning parameters ($T_1, T_{c1}, T_{c2}$) to ensure robustness across different estimation window configurations. 2S--ASWs remain competitive, particularly at short forecast horizons, providing additional empirical evidence even under the restricted weighting scenario. The method's ability to maintain strong performance under non-negative constraints suggests that the shrinkage approach naturally tends toward economically sensible weight allocations without requiring explicit restrictions.

OWs, EWs, PCA, and other benchmark methods exhibit predictive performance patterns similar to those observed in our unconstrained analysis, indicating that the non-negativity constraint does not fundamentally alter the relative ranking of methods. However, a notable distinction emerges in the horizon-dependent behavior: under non-negative constraints, there is a pronounced upward trend in $R^2_{COS}$ as the forecast horizon $J$ increases across most methods. This pattern differs markedly from the unconditional case and suggests that weight constraints may have differential effects across forecast horizons, potentially due to the varying stability of individual predictor performance over different time scales.

This finding has important implications for practical implementation, as it suggests that the benefits of imposing economically motivated constraints may be particularly pronounced for longer-horizon forecasting applications. The upward trend in performance with horizon length under constraints may reflect the fact that longer-horizon forecasts are less sensitive to short-term noise and therefore benefit more from the stabilizing effect of non-negative weight restrictions.

\subsection{Reversed Error Covariance Matrices}

The theoretical analysis presented in Appendix~\ref{sec:appendix A.2} reveals that the reversed specification is mathematically equivalent in form to the main two-stage optimization problem: after interchanging the stage roles of $\mathbf{\Sigma}_1$ and $\mathbf{\Sigma}_2$, the same derivations and propositions remain valid. This form equivalence motivates the empirical examination presented in Table~\ref{tab:inverse matrix oos combination}, where we reverse the allocation of error covariance matrices: $\mathbf{\Sigma}_2$ for the first stage, $\mathbf{\Sigma}_1$ for the second. Empirically, $T_{c2}$ is first used to estimate OWs, followed by $T_{c1}$, which is used to compute $\lambda_{opt}$ and the corresponding 2S--ASWs. To maintain a rigorously consistent basis for comparison, we apply this reversal to Lasso, Ridge, and ENet. We train the parameters on the sample associated with $\mathbf{\Sigma}_2$ and calibrate the optimal $\alpha$, $\beta$ on the sample associated with $\mathbf{\Sigma}_1$. To clearly distinguish these reverse-order specifications from standard methods, we denote them as R2S--ASWs, R2S--Lasso, R2S--Ridge, and R2S--ENet, respectively.

\bigskip
\centerline{\bf [Place Table~\ref{tab:inverse matrix oos combination} about here]}
\bigskip

Under the $T_1=54$ specification, R2S--ASWs continue to demonstrate superior performance relative to other methods, particularly excelling at short horizons $J=1$ and $J=6$. Notably, R2S--ASWs exhibit significantly enhanced out-of-sample $R^2_{COS}$ values at longer forecast horizons ($J=12$ and $J=24$) when $\mathbf{\Sigma}_1$ and $\mathbf{\Sigma}_2$ are reversely used. The characteristic ``check-mark'' pattern in $R^2_{COS}$ statistics across forecast horizons remains evident, though it typically occurs at $J=3$ rather than $J=6$, as observed in Table~\ref{tab:oos combination}. However, other shrinkage methods can outperform R2S--ASWs in some cases under this reversed specification, in which Ridge, in particular, together with Lasso and ENet, exhibit a pronounced superiority at the longest forecast horizon $J=24$. EWs, Median, Trimmed Mean, and ASWs--\cite{li2025dominate} methods show robust but relatively limited performance in this reversed configuration. DMSPE with $\theta=0.9$ performs relatively better, particularly from $J=3$ onward. The results under $T_1=42$ reveal a deterioration in general. None of these methods can yield positive $R^2_{COS}$ statistics at $J=1$. Unlike a ``check-mark'' pattern observed with increasing horizon $J$, the $R^2_{COS}$ statistics for R2S--ASWs exhibit a consistently upward trend across $J$, starting from negative values at short horizons and gradually improving to positive values from $J=6$.

Overall, while the forecasting performance under this reversed specification ($\mathbf{\Sigma}_2 - \mathbf{\Sigma}_1$, $T_{c2}-T_{c1}$) is inferior to the results of sequential $T_{c1}-T_{c2}$ partitioning in Table~\ref{tab:oos combination}, it does exhibit notable improvements at long-term horizons for R2S--ASWs. The deterioration may be attributed to the sample allocation. In the main specification, $T_{c1}$ is used to compute OWs while $T_{c2}$ is employed to estimate $\lambda_{opt}$ and derive R2S--ASWs. This arrangement ensures that the optimal shrinkage parameter and adaptive shrinkage weights are determined using more recent data $T_{c2}$, which enhances their relevance and accuracy for out-of-sample forecasting. The temporal proximity of $T_{c2}$ to the forecast evaluation period allows the adaptive shrinkage combination to better capture current market dynamics. In contrast, the reversed specification uses $T_{c2}$ to estimate OWs, which may result in more accurate estimation of OWs, but the critical shrinkage parameter and combination weights are calibrated on the past, potentially less relevant sample $T_{c1}$. This temporal misalignment reduces the timeliness and effectiveness of the adaptive shrinkage combination, as the key shrinkage parameter governing the combination process is based on historical rather than recent information. The surprising gains of R2S--Ridge, R2S--Lasso, and R2S--ENet at $J=24$ suggest that their regularization terms interact non-trivially with the composition of the estimation sample, such that the reversed allocation creates a particularly favorable environment for their shrinkage properties at long horizons. However, the findings of R2S--ASWs' improvement at long-term forecasting highlight the flexibility and stability across different configurations. The ability to enhance performance at longer horizons simply by reversing the roles of $\mathbf{\Sigma}_1$ and $\mathbf{\Sigma}_2$ demonstrates that the two-stage approach provides practitioners with strategic freedom in tailoring the method to specific objectives. By selecting which sample is used for optimal weight estimation versus shrinkage parameter computation, the adaptive shrinkage combination can be configured to deliver strong performance regardless of whether the focus is on short-term or long-term investment horizons.

Figure~\ref{fig:lambda_opt inverse} plots $\lambda_{opt}$ in R2S--ASWs under the reversed error covariance matrices, still with the configuration $T_{c1}=6$ and $T_{c2}=126$ when $T_1=42$. The reversed allocation produces a temporal shift in the distribution of $\lambda_{opt}$ values. Whereas the sequential specification exhibits $\lambda_{opt} < 0.5$ and approaches zero mainly in the latter part of the out-of-sample set, the reversed specification concentrates these lower $\lambda_{opt}$ in the early stages of the evaluation period. The general pattern is consistent with the dominance of $\lambda_{opt} > 0.5$ and approaching one, particularly at horizons $J=1/3/6$, indicating a persistent preference for EWs.

Compared to Figure~\ref{fig:lambda_opt}, the reversed configuration exhibits a higher frequency of $\lambda_{opt} < 0.5$ and instances where $\lambda_{opt} = 0$ (OWs). Across horizons from $J=3$ to $J=24$, there are several temporary periods where R2S--ASWs equal OWs, with this occurrence becoming more prevalent as the forecast horizon $J$ increases. This tendency partially explains the deterioration in R2S--ASWs' overall predictive performance under the reversed specification. Nevertheless, the absolute preponderance of $\lambda_{opt}$ selecting or leaning towards EWs again underscores the robustness and effectiveness of the simple equal-weighted combination strategy across different forecast scenarios.

\bigskip
\centerline{\bf [Place Figure~\ref{fig:lambda_opt inverse} about here]}
\bigskip

\subsection{Extended Sample Analysis}\label{sec:appendix 2024}

Recent evidence from \cite{denk2024predicting} raises questions about the predictive ability of forecast combination methods over time. Examining the U.S. equity premium from 1965 to 2022, \cite{denk2024predicting} documents a temporal difference: combination methods significantly outperform the historical mean benchmark during 1965--1993, but do not provide relative benefits from 1994 to 2022, the more recent decades. 

In light of these findings, we extend our evaluation sample from January 2007 to December 2024, allowing us to assess whether the adaptive shrinkage combination maintains its predictive ability in the recent period. We also update our predictors to include a comprehensive set of 29 monthly variables, as listed in \cite{goyal2024comprehensive}, to ensure the examination incorporates current and the most recent available information. The training and validation samples still range from January 1996 to December 2006. Table~\ref{tab:extend 2024 oos combination} reports the out-of-sample forecast performance of combination methods and other benchmarks.

\bigskip
\centerline{\bf [Place Table~\ref{tab:extend 2024 oos combination} about here]}
\bigskip

Consistent with the observations in \cite{denk2024predicting}, the results reveal a general decline in $R^{2}_{COS}$ statistics of combination methods, compared to Table~\ref{tab:oos combination} when the evaluation sample ends in August 2019. The number of advantageous parameter groups also significantly decreases. The reduction in $R^{2}_{COS}$ values is particularly pronounced for OWs and the method of \cite{blanc2020bias}. For instance, the worst $R^{2}_{COS}$ value for OWs is as low as -70.11\% when $T_1=72$. In contrast, the performance degradation for 2S--ASWs and ASWs--\cite{li2025dominate} is modest and primarily confined to the short-term forecast at $J=1$. Meanwhile, robust methods including EWs, median, and trimmed mean maintain stable performance in the extended evaluation at $J=1$. DMSPE with $\theta=0.9$ underperforms at $J=1$ but emerges as a strong competitor from $J=3$ onward ($R^2_{COS}=4.36\%$ at $J=3$, $11.59\%$ at $J=6$, $27.05\%$ at $J=12$, and $47.80\%$ at $J=24$), consistent with its dominance at intermediate horizons in the baseline results. 

Despite the overall decline in predictability, the adaptive shrinkage combination method continues to demonstrate competitive performance relative to the historical mean and other benchmarks. The advantage of 2S--ASWs is evident at $J=1$, where the method achieves positive out-of-sample $R^{2}_{COS}$ statistics in all settings considered. We emphasize that, as evidenced by evaluations in recent years, different combination methods exhibit distinct comparative advantages across forecast horizons. ASWs--\cite{li2025dominate}'s, together with EWs, median, and trimmed mean combinations, yield positive out-of-sample $R^{2}_{COS}$ statistics across al horizons except at $J=1$. Among these, ASWs--\cite{li2025dominate}'s method with parameter $A=10$ shows strong forecast ability at $J=3$, frequently attaining the highest $R^{2}_{COS}$ values among all competing methods. In contrast, OWs and \cite{blanc2020bias}, while exhibiting almost the weakest performance at short horizons, display remarkable strength at long forecast horizons. At $J=12$ and $J=24$, these methods achieve maximum out-of-sample $R^{2}_{COS}$ exceeding 50\% and 70\%, respectively, and substantially outperform other approaches at these horizons. 

Overall, although the extended sample period verifies that the U.S. equity premium prediction has become more challenging for combination methods in recent years, our empirical results suggest that combination forecasts can still retain meaningful predictive value when appropriately deployed. The divergent predictive patterns across forecast horizons also indicate an important practical consideration: investors can flexibly choose the optimal combination method according to their forecasting objective and investment horizon.

\subsection{Summary of Best-Performing Methods}

Table~\ref{tab:dominance_count} aggregates the out-of-sample dominating frequencies of forecasting methods used in this paper across all the evaluated scenarios (including out-of-sample combination in the main text (Table~\ref{tab:oos combination}), non-negative weight constraints (Table~\ref{tab:nonnegative oos combination}, reversed covariance matrices (Table~\ref{tab:inverse matrix oos combination}), and the extended sample to December 2024 (Table~\ref{tab:extend 2024 oos combination})). A cross-sectional analysis of these results reveals an interesting and horizon-dependent bifurcation in predictive superiority.

\bigskip
\centerline{\bf [Place Table~\ref{tab:dominance_count} about here]}
\bigskip

Panel A demonstrates the unequivocal dominance of the 2S--ASWs method at the short horizon $J=1$. By adaptively shrinking toward the heavily diversified equal-weight (na\"{\i}ve) portfolio, 2S--ASWs protect short-term forecasts from estimation error more effectively than traditional combination methods and other benchmarks. As the forecast horizon extends to the medium term ($J=3$ and $J=6$), the predictive landscape transitions into a shared dominance between 2S--ASWs and several alternative benchmarks. Among them, ASWs--\citet{li2025dominate} emerges as a relatively prominent benchmark, particularly when the sample is extended to December 2024 at $J=3$; at $J=6$, however, the dominance structure becomes dispersed. No single method commands a decisive advantage; instead, dominance is distributed across a diverse set of approaches, including DMSE with $\theta=0.9$, \citet{li2025dominate}, and 2S--ASWs. Within the non-negative weight combination framework, at both $J=3$ and $J=6$, DMSE with $\theta=0.9$ asserts a clear and consistent dominance. At long-term horizons ($J=12$ and $J=24$), as Panel B shows, the dominance is concentrated, where \citet{blanc2020bias} and OWs($\bm{\Sigma}_2$) stand out. In particular, R2S--Ridge performs best when the covariance matrices are reversed. 

From a practical asset allocation perspective, these findings yield clear, horizon-specific directives for investors. At the critical $J=1$ horizon, 2S--ASWs provides a highly investable, theoretically sound mechanism that captures genuine predictability while decisively guarding against the severe out-of-sample estimation errors. Because short-term market timing constitutes the most critical yet challenging component of active management, the robust empirical superiority of 2S--ASWs renders it a useful short-horizon forecasting tool. 

For long-term investment, the strategic emphasis evolves but remains rooted in the power of shrinkage. While traditional optimization or structural bias corrections are often employed to capture slow-moving macroeconomic trends, our findings reveal a potent alternative: the integration of shrinkage with machine learning predictors. Specifically, over extended horizons, the two-stage shrinkage methods, including R2S--ASWs, and especially the R2S--Ridge regression, can enhance the forecasting performance when the covariance matrices are reversed. Therefore, institutional investors with extended holding periods should not merely transition to pure optimization, but rather leverage reversed-shrinkage strategies to extract long-range signals. This dual-horizon efficacy ensures that the methodology remains robust across the entire investment lifecycle, from tactical monthly adjustments to strategic multi-year positioning.

\clearpage
    \begin{landscape}
    \begin{figure}
        \centering
        \subfloat[$J=1$]{
            \includegraphics[width=0.4\textwidth]{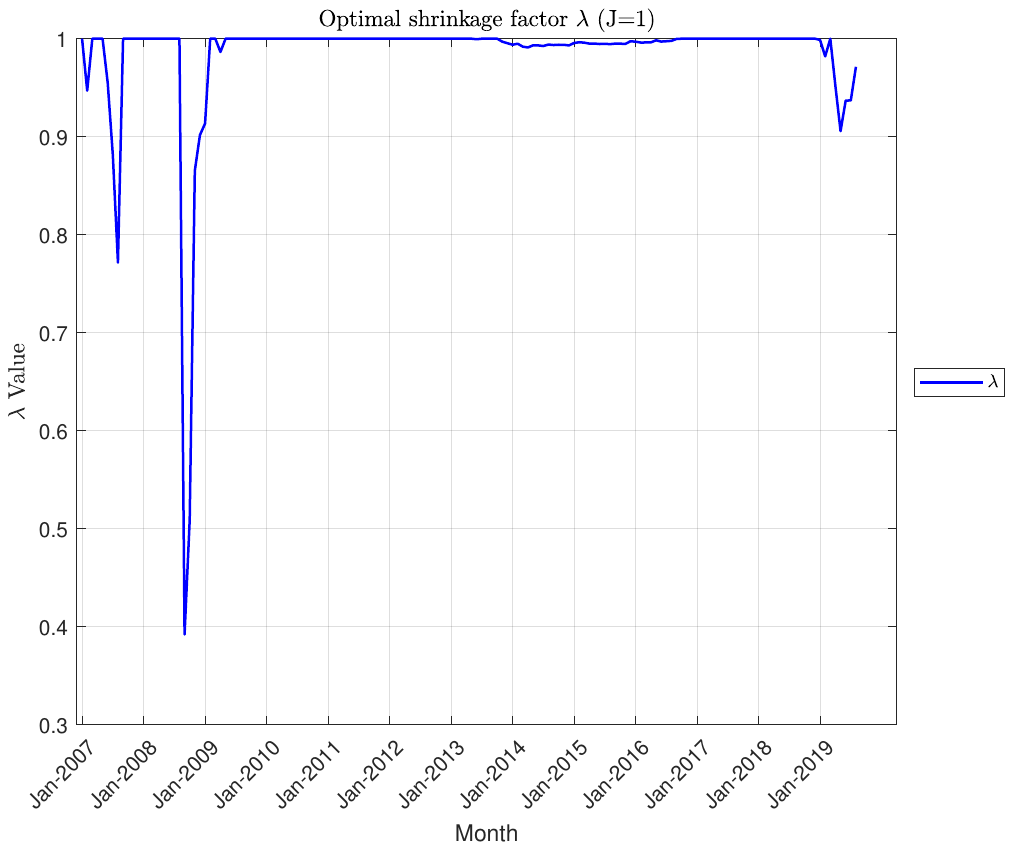}
        }
        \hspace{0.02\textwidth}
        \subfloat[$J=3$]{
            \includegraphics[width=0.4\textwidth]{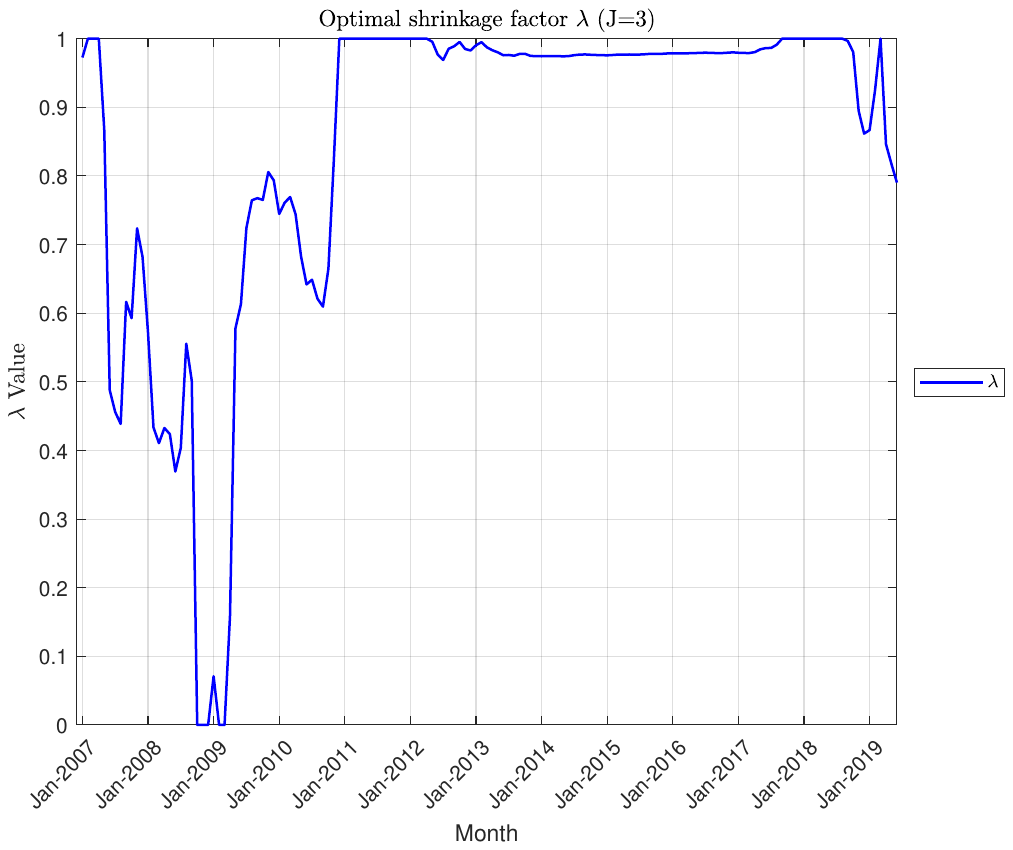}
        }
        \hspace{0.02\textwidth}
        \subfloat[$J=6$]{
            \includegraphics[width=0.4\textwidth]{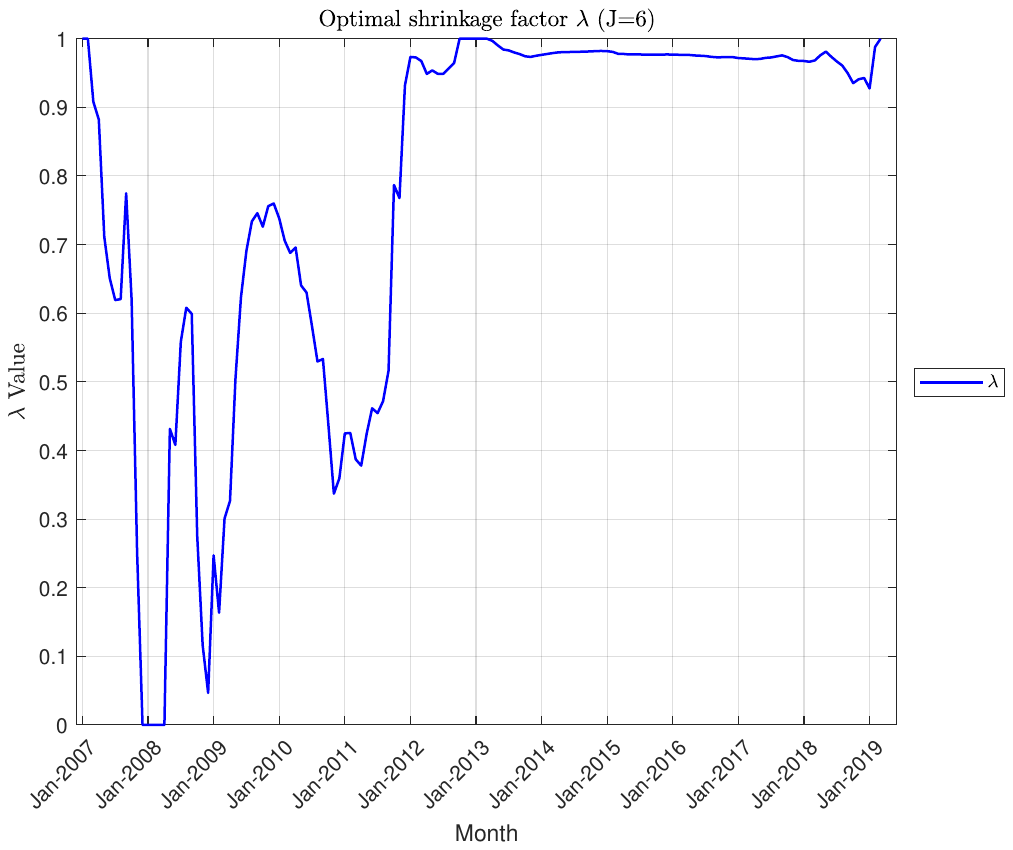}
        }

        \begin{minipage}[t]{0.48\textwidth}
            \centering
            \subfloat[$J=12$]{
                \includegraphics[width=0.88\textwidth]{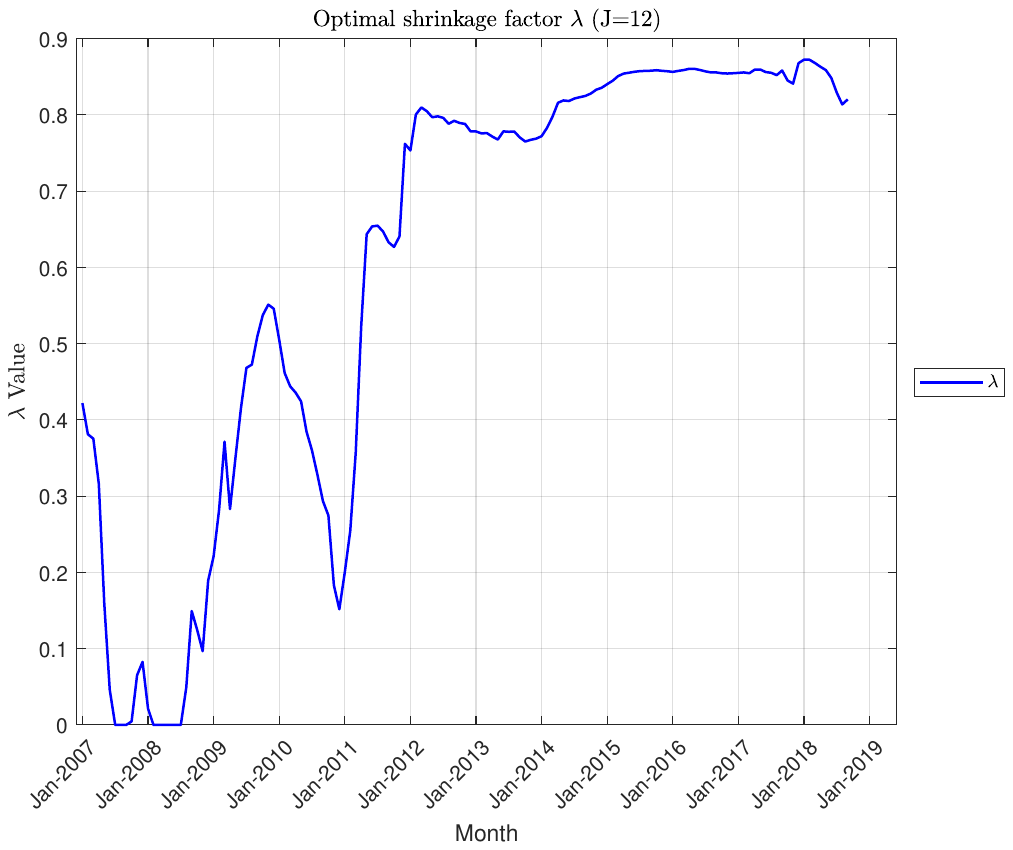}
            }
        \end{minipage}
        \hspace{0.04\textwidth}
        \begin{minipage}[t]{0.48\textwidth}
            \centering
            \subfloat[$J=24$]{
                \includegraphics[width=0.88\textwidth]{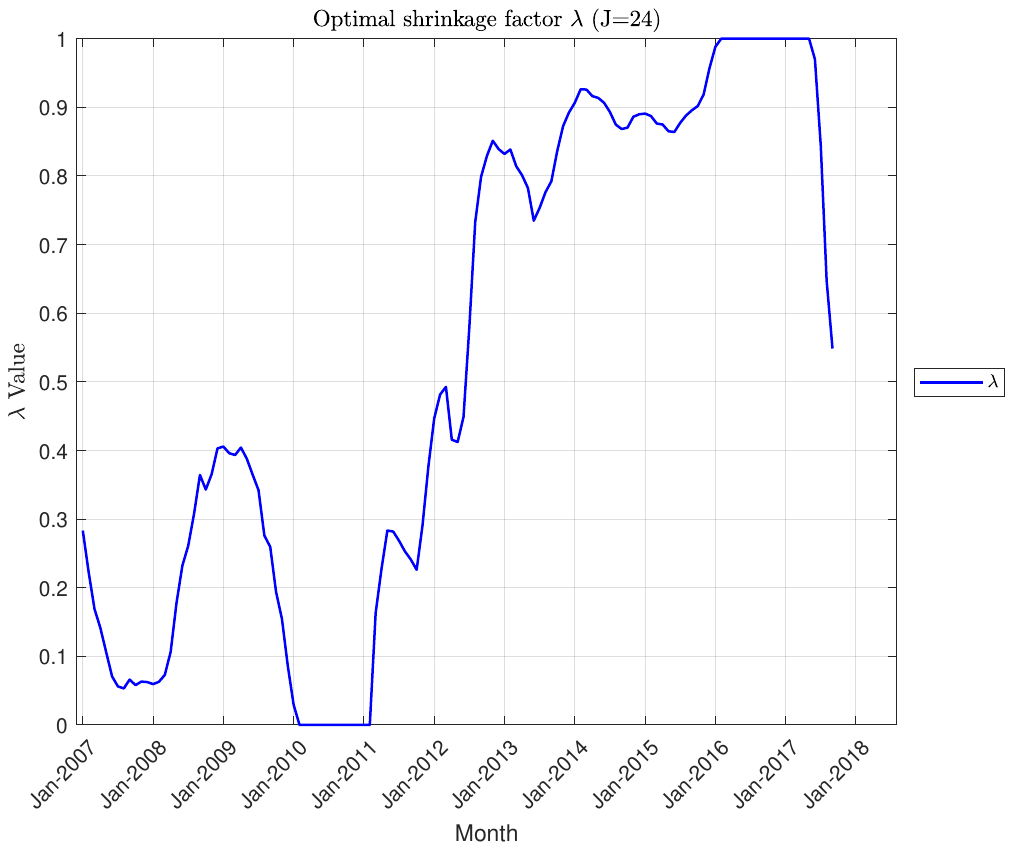}
            }
        \end{minipage}
        \caption{{\bf Optimal shrinkage factor $\lambda_{opt}$ in adaptive shrinkage combination with reversed error covariance matrices ($\mathbf{\Sigma}_2, \mathbf{\Sigma}_1)$ ($T_{c1}=6$, $T_{c2}=126$ when $T_1=42$).} {These figures report the optimal shrinkage factor $\lambda_{opt}$ for $J=1/3/6/12/24$, which is computed in validation set $T_{c1}$ with reversed error covariance matrices: the first-stage error covariance matrix $\mathbf{\Sigma}_2$ and the second-stage error covariance matrix $\mathbf{\Sigma}_1$. If $\lambda_{opt} > 0.5$, 2S--ASWs tend to lean more toward EWs; otherwise, 2S--ASWs tend to lean more toward OWs. The proportions of $\lambda_{opt}=1$ are 44.74\%, 26.00\%, 7.48\%, 1.42\%, 10.08\% and those for $\lambda_{opt}\geq 0.9$ are 85.53\%, 75.33\%, 60.54\%, 2.13\%, 15.50\%, respectively.}}
        \label{fig:lambda_opt inverse}
    \end{figure}
\end{landscape}

\clearpage
    \begin{landscape}
    \small
    \setlength{\tabcolsep}{0.5pt}
    \renewcommand{\arraystretch}{0.85}

        \end{landscape}

\clearpage
\begin{landscape}
\begin{table} 
  \centering
  \caption{\textbf{Dominating methods and frequencies across scenarios.} This table presents the dominating forecasting methods and their out-of-sample dominating frequencies (in parentheses) for different forecast horizons ($J=1, 3, 6, 12, 24$) across various scenarios. Panel A reports the counting results for short forecast horizons $J=1, 3, 6$.}
  \label{tab:dominance_count}
  \renewcommand{\arraystretch}{1.5} 
  \small 

  %
\end{table}
\end{landscape}

\clearpage

\addtocounter{table}{-1}
\begin{landscape}
\begin{table}
  \centering
  \caption{\textbf{Dominating methods and frequencies across scenarios (continued).} Panel B reports the counting results for long horizons $J=12, 24$.}
  \renewcommand{\arraystretch}{1.5} 
  \small 

  %
\end{table}
\end{landscape}

    \end{document}